\documentclass[english]{article}

\usepackage[left=2.7cm,bottom=2.7cm,right=2.7cm,top=2.7cm]{geometry}

\usepackage{color,hyperref}

\usepackage[numbers]{natbib}
\setlength{\bibsep}{0pt plus 0.3ex}

\usepackage{csquotes}
\usepackage{verbatim}
\usepackage {multirow}

\usepackage{epsfig}
\usepackage{color}

\usepackage{balance}  
\usepackage{algorithmic}
\usepackage{algorithm}
\usepackage{multirow}
\usepackage{graphicx}
\usepackage{amsmath}
\usepackage{amssymb}
\usepackage{amsfonts}
\usepackage{xspace}
\usepackage{url}
\usepackage{bbm}
\usepackage{subcaption}
\usepackage{enumitem}

\newcommand{\R}{\mathbb{R}}
\newcommand{\T}{\top}
\newcommand{\la}{\langle}
\newcommand{\ra}{\rangle}

\newcommand{\lam}{\lambda}
\newcommand{\dt}{\delta}
\newcommand{\Dt}{\Delta}

\newcommand{\pa}{\partial}
\newcommand{\al}{\alpha}

\newcommand{\ga}{\gamma}

\newcommand{\na}{\nabla}
\newcommand{\lt}{\left}
\newcommand{\rt}{\right}

\newcommand{\argmax}{\mathop{{\rm argmax}}}
\newcommand{\argmin}{\mathop{{\rm argmin}}}

\newcommand{\td}{\tilde}
\newcommand{\wtd}{\widetilde}

\newcommand{\rmd}{{\rm d}}

\newcommand{\cC}{\mathcal{C}}

\newcommand{\cG}{\mathcal{G}}

\newcommand{\cS}{\mathcal{S}}

\newcommand{\cV}{\mathcal{V}}

\newtheorem{theorem}{Theorem}[section]
\newtheorem{corollary}[theorem]{Corollary}
\newtheorem{lemma}[theorem]{Lemma}
\newtheorem{proposition}[theorem]{Proposition}

\newtheorem{assumption}[theorem]{Assumption}

\makeatletter
\@addtoreset{equation}{section}
\makeatother

\makeatletter

\makeatother

\begin{document}
	
	\title{Nonparametric Finite Mixture Models with Possible Shape Constraints: A Cubic
		Newton Approach}
	
	 \author{Haoyue Wang\thanks{MIT Operations Research Center (email: {haoyuew@mit.edu}).} \and
		Shibal Ibrahim\thanks{MIT Department of Electrical Engineering and Computer Science (email: {shibal@mit.edu}).}
		\and
		Rahul Mazumder\thanks{MIT Sloan School of Management, Operations Research Center and MIT Center for Statistics (email: {rahulmaz@mit.edu}).}
	}

\date{}

\maketitle

	\begin{abstract}
We explore computational aspects of maximum likelihood estimation of the mixture proportions of a nonparametric finite mixture model---a convex optimization problem with old roots in statistics and a key member of the modern data analysis toolkit. Motivated by problems in shape constrained inference, we consider structured variants of this problem with additional convex polyhedral constraints.  We propose a new cubic regularized Newton method for this problem and present novel worst-case and local computational guarantees for our algorithm. We extend earlier work by Nesterov and Polyak to the case of a self-concordant objective with polyhedral constraints, such as the ones considered herein. We propose a Frank-Wolfe method to solve the cubic regularized Newton subproblem;  and derive efficient solutions for the linear optimization oracles that may be of independent interest. In the particular case of Gaussian mixtures without shape constraints, we derive bounds on how well the finite mixture problem approximates the infinite-dimensional Kiefer-Wolfowitz maximum likelihood estimator. Experiments on synthetic and real datasets suggest that our proposed algorithms exhibit improved runtimes and scalability features over existing benchmarks.
 \end{abstract}

\section{Introduction}
In this paper, we study a problem in nonparametric density estimation---in particular, learning 
the components of a mixture model, which is a key problem in statistics and related disciplines. Consider a one-dimensional mixture density of the form: $g(x) = \sum_{i=1}^{M} w_{i} \psi_{i} (x)$ where, the $M$ components (probability densities) $\{\psi_{i}(x)\}_{1}^{M}$ are known, but the proportions $\{w_i\}_{1}^{n}$ are unknown, and need to be estimated from the data at hand.

\subsection{Learning nonparametric mixture proportions} Suppose we are given $N$ samples $\{X_{j}\}_{1}^N$, independently drawn from $g(x)$, we can obtain a maximum likelihood estimate (MLE) 
of the mixture proportions or weights via the following convex optimization problem:
\begin{equation}
\begin{aligned}\label{crit-our}
{\min_w} &~~  - \frac{1}{N}\sum_{j \in [N]} \log (\sum_{i \in [M]} w_{i} B_{ij} )\\
{\rm s.t. }~&~~ w\in \Dt_M:= \{ w~:~\sum_{i\in [M]} w_{i}=1, w_{i} \geq 0, i \in [M]\} \end{aligned}
\end{equation} 
where, $B_{ij} = \psi_{i}(X_{j}) \geq 0$ is the evaluation of the density $\psi_{i}$ at point $X_{j}$ for all $i,j$. 
One of our goals in this paper is to present new algorithms for~\eqref{crit-our} with associated computational guarantees, that scale to instances with $N\approx10^6$ and $M \approx 10^3$. Problem~\eqref{crit-our} has a rich history in statistics: this arises, for example, in the context of empirical Bayes estimation~\cite{koenker2014convex,jiang2009general,kim2020fast,brown2009nonparametric}. Several choices of the bases functions $\{\psi_{i}\}$ are possible---see for example~\cite{kim2020fast}, and references therein.
A notable example arising in Gaussian sequence models~\cite{johnstone2004needles}, is to consider $\psi_{i}(x):=\varphi(x-\mu_{i})$ where, $\varphi$ is the standard Gaussian density and the location parameter 
$\mu_{i}$ is pre-specified for all $i \in [M]$. In this case,  problem~\eqref{crit-our} can be interpreted as a finite-dimensional approximation of the original formulation of the Kiefer-Wolfowitz nonparametric MLE~\cite{kiefer1956consistency}.
Given an infinite mixture $g_{{\mathcal Q}}(x) = \int_{\R} \varphi(x-\mu) d{\mathcal Q}(\mu)$ where, ${\mathcal Q}$ is a probability distribution on $\R$, the Kiefer-Wolfowitz nonparametric MLE estimates ${\mathcal Q}$ given $N$ independent samples from $g_{{\mathcal Q}}$.
This infinite dimensional problem admits a finite dimensional solution with ${\mathcal Q}$ supported on $N$ {\emph{unknown}} atoms~\cite{kiefer1956consistency,koenker2014convex}. Since these atoms can be hard to locate, it is common to resort to discrete approximations like~\eqref{crit-our} for a suitable pre-specified grid of $\{\mu_{i}\}$-values.
In this paper, we present bounds that quantify 
how well an optimal solution to a discretized problem~\eqref{crit-our} (with a-priori specified atoms), approximates a solution to the infinite dimensional Kiefer-Wolfowitz MLE problem.

As noted by~\cite{kim2020fast}, while versions of problem~\eqref{crit-our} originated in the 1890s, the utility of this estimator and the abundance of large-scale datasets have necessitated solving problem~\eqref{crit-our} at scale (e.g., with $N\approx 10^5$-$10^6$ and $M\approx 10^3$).  
A popular algorithm for~\eqref{crit-our} is based on the Expectation Maximization (EM) algorithm~\cite[see for example]{jiang2009general}, which is known to have slow convergence behavior in practice~\cite{koenker2014convex,kim2020fast}. \cite{koenker2014convex} observed that solving a dual of~\eqref{crit-our} via Mosek's interior point solver, led to improvements over EM-based methods. Recently,~\cite{kim2020fast} propose an interesting approach based on sequential quadratic programming for~\eqref{crit-our} that offers notable improvements over~\cite{koenker2014convex,koenker2017rebayes}. \cite{kim2020fast} also use low-rank approximations of $B$, active set updates, among other clever tricks to obtain computational improvements.
Global complexity guarantees of these methods appear to be unknown. 
In this paper, we present a novel computational framework (with global complexity guarantees) that applies to Problem~\eqref{crit-our}; and a constrained variant of this problem, arising in the context of shape constrained density estimation~\cite{groeneboom2014nonparametric}, as we discuss next.

\subsection{Learning nonparametric mixtures with shape constraints}\label{sec: shape-constraints}
In many applications, practitioners have prior knowledge about the shape of the underlying density. In these cases, it is desirable to incorporate such qualitative shape constraints into the density estimation procedure---this falls under the purview of {\emph{shape constrained inference}}\cite{groeneboom2014nonparametric} which dates back to at least 1960s with the seminal work of~\cite{grenander1956theory}. 
Due to their usefulness and applicability in a wide variety of domains, the last 10 years or so has witnessed a significant interest in shape constrained inference in terms of methodology development---see~\cite{samworth2018special} for a recent overview.

We consider a structured variant of~\eqref{crit-our} which allows for density estimation under shape constraints.
That is, we assume the density $x \mapsto \sum_{i \in [M]} w_{i}\psi_{i}(x)$ obeys a pre-specified shape constraint such as monotonicity (e.g, decreasing), convexity, unimodality, among others~\cite{groeneboom2014nonparametric}.  
To this end, we consider a special family of basis functions, the Bernstein polynomial bases~\cite{carnicer1993shape}, \cite[Ch 7]{phillips2003interpolation},  which are well-known for their (a) appealing shape-preserving properties, and 
(b) ability to approximate any smooth function over a compact support to high accuracy. Density estimation with Bernstein polynomials have been studied in the statistics community~\cite{vitale1975bernstein,babu2002application,ghosal2001convergence,turnbull2014unimodal}, though the exact framework and algorithms we explore herein appear to be novel, to our knowledge. 
Given an interval $[0,1]$ (say)\footnote{For a general compact interval $[a,b] \subset \R$, one can obtain a suitable bases representation via a standard affine transformation mapping $[a,b]$ to $[0,1]$.}, the Bernstein polynomials of degree $M$ are given by a linear combination of the following Bernstein basis elements 
\begin{equation}\label{def: Bernstein kernel}
\td b_m (x) = \begin{cases} 	
\frac{\Gamma (M+1)}{\Gamma(m) \Gamma(M-m+1) } x^{m-1} (1-x)^{M-m} & \text{ if } x \in [0,1]\\
0& \text{ otherwise}
\end{cases}
\end{equation}
for $ m\in [M]$; and $\Gamma(\cdot)$ is the standard Gamma function. 
This leads to a family of 1D \emph{smooth} densities on the unit interval, of the form $g_{B}(x) = \sum_{m=1}^{M} w_{m} \td b_m (x)$ where the mixture proportions (given by $w$) lie on the $M$-dimensional simplex (i.e., $\Delta_{M}$). Note that $\td b_{m}(x)$ is a Beta-density with shape parameters $m$ and $M-m+1$; and
$g_{B}(x)$ can be interpreted as a finite mixture of Beta-densities. 

\smallskip

\noindent{\bf Estimation under polyhedral shape constraints.} 
An appealing property of the Bernstein bases representation is their shape preserving nature~\cite{carnicer1993shape,phillips2003interpolation}: a shape constraint on $m \mapsto w_{m}$ translates to a shape constraint on $x \mapsto g_{B}(x)$, restricted to its support $[0,1]$. In particular, consider the following examples:

\begin{itemize}
    \item If $w_{m} \le w_{m+1}, m\in [M-1]$ i.e., $m \mapsto w_{m}$ is an increasing sequence, then $x \mapsto g_B(x)$ is increasing. Similarly, if $m \mapsto w_{m}$ is decreasing, then $x \mapsto g_B(x)$ is a decreasing density. 

\item If $ 2w_m \ge w_{m-1} + w_{m+1}  $ for all $2\le m\le M-1$, i.e., $m \mapsto w_{m}$ is concave, then $x \mapsto g_B(x)$ is concave. Similarly, 
if $m \mapsto w_{m}$ is convex, then $x \mapsto g_B(x)$ is convex. 

\end{itemize}

The shape preserving property also extends to combinations of constraints mentioned above: for example, when the sequence $m \mapsto w_{m}$ is decreasing and convex, then so is the density $x\mapsto g_{B}(x)$. 
For all these shape constraints, $w$ lies in a convex polyhedral set.

In order to estimate the mixture density $g_{B}(x)$, we estimate the weights $\{w_{m}\}_{1}^{M}$ such that they lie in the $M$-dimensional unit simplex and in addition, obey suitable shape constraints. The MLE for this setup is given by a constrained version of~\eqref{crit-our}:
\begin{equation}\label{problem1}
\begin{aligned}
\min_w ~~ f(w) := - \frac{1}{N} \sum_{j=1}^N \log (\sum_{i \in [M]} w_{i} B_{ij})~~{\rm s.t.}~~~w\in \cC:= \Dt_M \cap S.
\end{aligned}
\end{equation}
 where, $B_{ij}=\psi_{i}(X_{j})=\tilde{b}_{i}(X_{j})$; and 
 $S$ is a convex polyhedron corresponding to the shape constraint\footnote{For example, $S = \{ w\in \R^{M}: w_{1} \le \cdots \le w_M  \}$ implies that $x\mapsto g_{B}(x)$ is decreasing. See Section~\ref{section: using FW to solve  sub-problem} for details.}.
 In addition to the shape constraints mentioned earlier,
 another useful shape restricted family arises when $m \mapsto w_{m}$ is a unimodal sequence~\cite{carnicer1993shape,turnbull2014unimodal} i.e., a sequence with one mode or maximum\footnote{In this paper, for a unimodal sequence $\{w_{m}\}_{1}^M$, we use the convention that the sequence is first increasing and then decreasing, with a maximum located somewhere in $\{1, \ldots, M\}$.}. A unimodal sequence $\{w_{m}\}$ leads to a unimodal density $g_{B}(x)$~\cite{carnicer1993shape}. In order to obtain the maximum likelihood estimate of $g_B(x)$ under the assumption that it is unimodal (but the mode location is unknown), we will need to minimize
 $f(w)$ over all unimodal sequences $w$, lying on the unit simplex. 
 If we know that the unimodal sequence $m \mapsto w_{m}$  has a mode at $m_0 \in [M]$, then $w$ lies in a convex polyhedral set $\cC_{m_0}$ (say)---this falls into the framework of~\eqref{problem1}. To find the best modal location, we will need to find the location $m_0 \in [M]$ that leads to the largest log-likelihood---this can be done by a one-dimensional search over $m_0$ (for further details, see Section~\ref{sec:unimodal-update}). 
A detailed discussion of the different polyhedral shape constraints we consider in this paper can be found in Section~\ref{sec:solving-LP-oracle}. 

The presence of additional (shape) constraints in~\eqref{problem1} poses further algorithmic/scalability challenges, compared to the special case~\eqref{crit-our}, discussed earlier. Our main goal in this paper is to present a new scalable computational framework for~\eqref{problem1} with computational guarantees.  

For motivation, Figure \ref{fig:Synthetic-compare1-shape} illustrates density estimates available from solutions of~\eqref{problem1}. We consider a synthetic dataset where the underlying density is decreasing; and estimate the density by solving~\eqref{problem1} for two choices of $S$: (a) Here we consider $S=\R^M$ (with no shape constraint), and (b) $S$ corresponds to all decreasing sequences $m \mapsto w_{m}$. Figure~\ref{fig:Synthetic-compare1-shape} shows that without shape constraints, the estimated density can have spurious wiggly artifacts, which disappear under shape constraints. The estimated weights $w$, are shown in the right panel. 
\begin{figure}[h!]
        \centering
\scalebox{0.9}{\begin{tabular}{cc}
density estimates & estimated weights \\
            \includegraphics[trim=15 5 20 35,clip,width=0.5\textwidth]{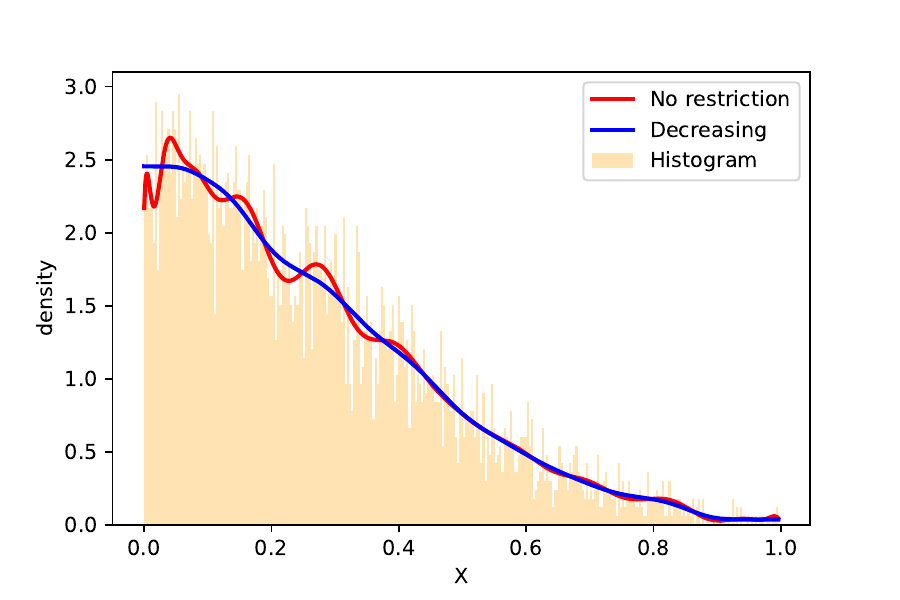}&
        \includegraphics[trim=12 5 20  35,clip,width=0.5\textwidth]{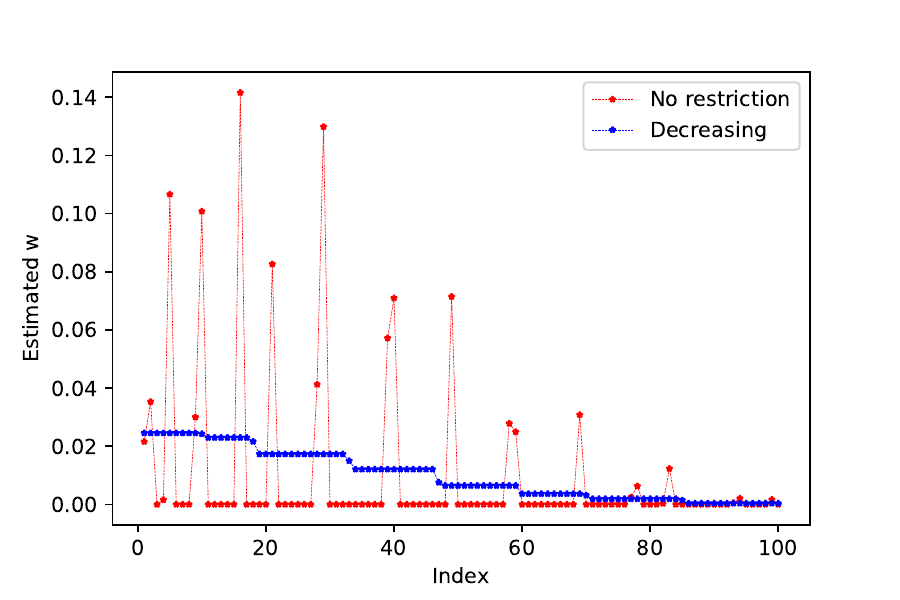}
\end{tabular}}
\caption{{\small [Left panel] Nonparametric density estimates obtained from~\eqref{problem1} under two different settings: (a) no shape restriction; (b) a decreasing constraint. 
We draw 
$N=5000$ independent samples, each $X_i$ was taken to be the absolute value of a standard Gaussian (values $\geq 3$ were discarded)---all $X_{i}$-values were then re-scaled to $[0,1]$. 
We consider $M=100$ Bernstein bases elements for both estimators. [Right panel] shows the estimated weight vector $w$ as available from~\eqref{problem1}. }}        
\label{fig:Synthetic-compare1-shape}
\end{figure}

Earlier work in statistics has studied density estimation under various shape constraints: examples include,
convex~\cite{groeneboom2001estimation}, 
monotone~\cite{groeneboom1985estimating}, 
convex and decreasing~\cite{groeneboom2001estimation} shape constraints\footnote{Another important family of shape constrained densities are log-concave densities~\cite{samworth2018recent}. While a log-concavity constraint on the weights $\{w_{i}\}$ leads to a log-concave density $x \mapsto g_{B}(x)$~\cite[see for example]{mu2015log}, a log-concavity constraint on $w_{i}$'s is not convex; and hence does not fall into our framework. This being said, log-concave densities are unimodal, and our framework can address unimodal densities on a compact support.}.
As an example of a concave density, we can consider the triangular distribution on $[0,1]$---Section~\ref{section: Experiments} presents our numerical experience with shape-constrained densities arising in real-world datasets.
The approach we pursue here, based on learning the proportions of a 
nonparametric mixture of Bernstein polynomials, differs from these earlier approaches---the density estimate available from~\eqref{problem1}, satisfies the imposed shape constraint and is also smooth.
On the other hand, nonparametric density estimation under shape constraints such as monotonicity, convexity (for example) do not lead to smooth estimators.

\subsection{Contributions} In this paper we study a new algorithmic framework for nonparametric density estimation with possible shape constraints.
Our approach is based on a maximum likelihood framework to learn the proportions of a finite mixture model subject to polyhedral constraints; and is given by Problem~\eqref{problem1}. 
To our knowledge, the general form of estimator~\eqref{problem1} that we consider here, has not been studied earlier---though, special instances of this framework has been studied before~\cite[see for example]{kim2020fast,koenker2014convex,turnbull2014unimodal,jiang2009general}. 
In the special case of Problem~\eqref{crit-our} when the bases elements correspond to a Gaussian location mixture family, we present guarantees on how well a solution to the finite mixture problem approximates the infinite version of Kiefer-Wolfowitz nonparametric MLE~\cite{kiefer1956consistency}.

Our algorithm for~\eqref{problem1} is based on a Newton method with cubic regularization, a related method was first proposed by~\cite{nesterov2006cubic} for a different family of optimization problems. Specifically, the setup of~\cite{nesterov2006cubic} does not apply to our setting as we have a self-concordant objective function $f$ with (possibly) unbounded gradients, and polyhedral constraints. Our proposed algorithm at every iteration solves a modified Newton step: this is a second order local model for $f$  with an additional cubic regularization that depends upon the local geometry of $f$ (see Section~\ref{section: Self-concordant Cubic-regularized Newton method} for details). 
We establish a worst case $O(1/k^2)$ convergence rate of our algorithm (where, $k$ denotes the number of cubic regularized Newton steps); and also derive a local quadratic convergence guarantee. 
For every cubic Newton step, we need to solve a structured convex problem for which we use Frank-Wolfe methods~\cite{frank1956an,Jaggi2015}. We present nearly closed-form solutions for the linear optimization oracles arising in the context of the Frank-Wolfe algorithms. To our knowledge, these linear optimization oracles have not been studied before, and may be of independent interest. 

We numerically compare our algorithms for~\eqref{problem1} with different choices of $\cC$, versus various existing benchmarks for both synthetic and real datasets.
For the special case of~\eqref{crit-our}, our approach leads up to a 3X-5X improvement in runtime over the recent solver~\cite{kim2020fast}; and a 10X-improvement over the commercial solver Mosek (both in terms of obtaining a moderate-accuracy solution). Furthermore, in the presence of shape constraints, our approach can lead to a 20X-improvement in runtimes over Mosek, and is more memory-friendly (especially, for larger problem instances).

\subsection{Related Work} 
Damped Newton methods~\cite{nesterov1994interior} are commonly used for unconstrained problems with a self-concordant objective. 
\cite{nesterov2013introductory,nesterov1994interior} present computational guarantees of damped Newton for unconstrained minimization of self-concordant functions. 
\cite{tran2015composite} generalize such damped Newton methods for optimization over additional (simple) constraints; and establish computational guarantees---these methods have been further generalized by~\cite{sun2019generalized}, motivated by applications in statistics and machine learning.

In a different line of work, \cite{nesterov2006cubic} introduced a cubic regularized Newton algorithm in the context of a convex optimization problem of the form $\min_{x \in \R^d} H(x)$ where, the convex function $H$ has a Lipschitz continuous Hessian:  $\|\na^2 H(x) - \na^2 H(y) \|\leq L \| x-y\|$ for all $x, y$ (here, $\| \cdot\|$ is the Euclidean norm), for some constant $L>0$. 
As noted by~\cite{nesterov2006cubic}, a unique aspect of the cubic Newton method is that it leads to
a \emph{global worst case} convergence guarantee of $O(1/k^2)$, similar guarantees are not generally available for Newton-type methods with few exceptions.
\cite{nesterov2006cubic} also show local quadratic convergence guarantees for the cubic Newton method.  
 However, as mentioned earlier, these results do not directly carry over to the setting of~\eqref{problem1} as our objective function $f(\cdot)$ does not have a Lipschitz continuous Hessian; additionally, Problem~\eqref{problem1} is a constrained optimization problem. A main technical contribution of our work is to address these challenges.

In other work pertaining to first order methods,~\cite{dvurechensky2020self} presents an analysis of the Frank Wolfe method for self-concordant objective functions with a possible constraint (e.g, a simplex or the $\ell_{1}$-ball constraint).
Proximal gradient methods~\cite{tran2015composite} have also been proposed for a similar family of problems. 
During the preparation of this manuscript, we became aware of a recent work~\cite{liu2020newton} who also consider the minimization of a family of self-concordant functions with possible constraints. They propose a proximal damped Newton-method where the resulting sub-problem is solved via Frank Wolfe methods, leading to a double-loop algorithm. 
\cite{liu2020newton} study problems in portfolio optimization, the D-optimal design and logistic regression with an Elastic net penalty---these problems involve simple constraints (e.g., $\ell_1$-ball or a simplex constraint) with well-known linear optimization oracles.  Our work differs in that we propose and study a variant of the cubic regularized Newton method which is a different algorithm with global computational guarantees in the spirit of~\cite{nesterov2006cubic}. Moreover, our focus is on nonparametric density estimation with possible shape constraints. 
In order to solve the cubic Newton step via Frank Wolfe, we derive closed-form solutions of the linear optimization oracles over different polyhedral constraints appearing in~\eqref{problem1}. To our knowledge, these linear optimization oracles have not been studied earlier.

As mentioned earlier, recently~\cite{kim2020fast} presents a specialized algorithm for~\eqref{crit-our} leading to improvements over Mosek's interior point algorithms~\cite{koenker2014convex,koenker2017rebayes} and the 
EM algorithm~\cite{jiang2009general}. The method of~\cite{kim2020fast} however does not apply to the shape constrained problem~\eqref{problem1}. \cite{kim2020fast} do not discuss complexity guarantees for their proposed approach.

\smallskip

\noindent \textbf{Organization.} The rest of the paper is organized as follows. 
Section~\ref{section: Self-concordant Cubic-regularized Newton method} presents our main algorithm, along with its global and local computational guarantees. 
In Section~\ref{section: using FW to solve  sub-problem}, we discuss how to solve the cubic-regularized Newton step via Frank-Wolfe along with the linear optimization oracles. In Section~\ref{sec: approximation-guarantees}, we derive bounds on how well a solution to the finite mixture problem approximates the infinite dimensional Kiefer-Wolfowitz problem. 
Section~\ref{section: Experiments}  presents numerical results for our proposed method and comparisons with several benchmarks. Proofs of all results and additional technical details can be found in the Supplement.

\section{Notation and Preliminaries}
For an integer $n\ge1$, let $[n]:=\{1,2,....,n\}$. 
We let $\R^n_{+} := \{x\in \R^n: x_i \ge 0 ~ \forall i\in [n]\}$ denote the nonnegative orthant; $\R^n_{++} := \{x\in \R^n: x_i > 0 ~ \forall i\in [n]\}$ denote the interior of  $\R^n_{+}$; and
$1_n$ denote the vector in $\R^n$ with all coordinates being $1$. 
For any $M \times M$ positive semidefinite matrix $A$, we define 
$\| w\|_{A} = \sqrt{w^\T A w}$ for any vector $w \in \R^M$; $A[w,u] := u^\T A w$ for any vectors $w,u$; and write $A[u]^2 := A[u,u]$. Moreover, for any symmetric tensor $T \in \R^{M\times M\times M}$ and vectors $u,v,w\in \R^M$, we define $T[u,v,w]:= \sum_{i,j,k = 1}^M T_{ijk} u_iv_jw_k$ and use the shorthand 
$T[u]^3 := T[u,u,u]$. 
For any function $g$ that is thrice continuously differentiable on an open set $U \in \R^M$, and $w \in U$, let $\na^3 f(w) \in \R^{M\times M\times M}$ be the tensor of third-order derivatives: 
$$ [\na^3 f(w)]_{i,j,k} = \frac{\pa^3}{\pa w_i \pa w_j \pa w_k} f(w), ~~~~ i,j,k \in [M].$$
For any two symmetric matrices $A,\tilde{A}\in \R^{n\times n}$, we use the notation $A\preceq \tilde{A}$ if $\tilde{A}-A$ is positive semidefinite, and $A\succeq \tilde{A}$ if $A-\tilde{A}$ is positive semidefinite. 
For matrix $B = ((B_{ij})) \in \R^{M\times N}$ in \eqref{problem1}, let the $j$-th column be denoted by $B_{j}:= [B_{1j}, B_{2j}, ..., B_{Mj}]^\T \in \R^M$ for $j \in [N]$.

\subsection{Preliminaries on self-concordant functions}
We present some basic properties of self-concordant functions following~\cite{nesterov1994interior} that we use. For further details, see~\cite{nesterov1994interior}.

Let $Q$ be an open nonempty convex subset of $\R^n$. 
A thrice continuously differentiable function $F: Q \rightarrow \R$ is called $\alpha$-self-concordant if it
is convex and for any $x\in Q$, $u\in \R^n$, it holds
\begin{equation}
|\na^3 F(x) [u,u,u]| \le 2 \alpha^{-1/2} (\na^2 F(x) [u,u])^{3/2}. \nonumber
\end{equation}
For the same function $F$ and $x\in Q$, let us define:
\begin{equation}\label{defn-norm-F,x}
\| u \|_{F,x} := (   (1/\alpha) \na^2 F(x)[u,u]    )^{1/2}. 
\end{equation}
For any $y\in Q$ such that $\bar{r}:=\| x-y \|_{F,x} <1$, we have the following ordering: 
\begin{equation}\label{eqn: hessian compare}
    (1-\bar{r})^2 \na^2 F(x) \preceq \na^2 F(y) \preceq \frac{1}{(1-\bar{r})^2} \na^2 F(x).
\end{equation}
If a function $F$ is $1$-self-concordant on $Q$, then it holds
\begin{eqnarray}\label{eqn: self-concordant strong convex}
F(y) ~\ge~ F(x) + \na F(x)^\T (y-x) + \rho \lt(\| y-x \|_{\na^2 F(x)}\rt) \ , 
\end{eqnarray}
where the function $\rho:\R_+ \rightarrow \R_+$ is given by $\rho(t) = t - \log(1+t)$.

The following result~\cite[Section 4.1.3]{nesterov2013introductory}
follows from the definition of self-concordance.
\begin{lemma}\label{lemma: f-self-concordant}
The function	$w\mapsto f(w)$ defined in \eqref{problem1} is  $(1/N)$-self-concordant. 
\end{lemma}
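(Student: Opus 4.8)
The plan is to reduce the claim to the one-dimensional self-concordance of the scalar function $t \mapsto -\log t$ and then accumulate over the $N$ data points. Recall that $f(w) = \frac{1}{N}\sum_{j=1}^N f_j(w)$ where $f_j(w) := -\log\la B_j, w\ra$ and $B_j \in \R^M_+$ is the $j$-th column of $B$; the domain is the open convex set $Q = \{w : \la B_j, w\ra > 0 \text{ for all } j\}$. First I would fix $w \in Q$ and a direction $u \in \R^M$, and compute the directional derivatives of $f_j$ along $u$ explicitly. Writing $s_j := \la B_j, w\ra > 0$ and $a_j := \la B_j, u\ra$, one gets $\na^2 f_j(w)[u,u] = a_j^2/s_j^2$ and $\na^3 f_j(w)[u,u,u] = -2 a_j^3/s_j^3$. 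Hence $|\na^3 f_j(w)[u,u,u]| = 2\,(\na^2 f_j(w)[u,u])^{3/2}$, so each $f_j$ is $1$-self-concordant on $Q$ (indeed with equality, since $-\log$ composed with a linear form is the canonical self-concordant barrier).

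Next I would invoke the standard calculus rule for self-concordance: a sum of $1$-self-concordant functions is $1$-self-concordant, and scaling a $1$-self-concordant function $F$ by a positive constant $c$ yields a $(1/c)$-self-concordant function — this is immediate from the definition, since multiplying $F$ by $c$ multiplies $\na^2 F$ by $c$ and $\na^3 F$ by $c$, so the ratio $|\na^3|/(\na^2)^{3/2}$ scales by $c^{-1/2}$, matching $2\alpha^{-1/2}$ with $\alpha = 1/c$. Applying the sum rule to $\sum_{j=1}^N f_j$ gives a $1$-self-concordant function, and then multiplying by $1/N$ gives that $f = \frac{1}{N}\sum_j f_j$ is $(1/N)$-self-concordant on $Q$. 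The convexity requirement in the definition is clear since each $f_j$ is convex (composition of the convex $-\log$ with a linear map).

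The one genuinely delicate point — and the part I expect to require the most care — is the behavior of the Hessian when $\na^2 f(w)$ is singular, i.e. when $B$ does not have full row rank so that some directions $u$ have $a_j = 0$ for all $j$. In that case $\na^2 f(w)[u,u] = 0$ and also $\na^3 f(w)[u,u,u] = 0$, so the self-concordance inequality holds trivially ($0 \le 0$); the definition as stated in the excerpt only requires the inequality $|\na^3 F(x)[u,u,u]| \le 2\alpha^{-1/2}(\na^2 F(x)[u,u])^{3/2}$, which does not demand nondegeneracy of the Hessian, so no restriction on $B$ is needed. I would also note at the outset that $f$ is thrice continuously differentiable on the open set $Q$ (each $s_j$ is a positive affine function there), so all the derivative computations are legitimate. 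This is essentially the argument of \cite[Section 4.1.3]{nesterov2013introductory} specialized to the log-likelihood objective; the only thing to check beyond citing it is that the $1/N$ prefactor produces exactly the stated constant $1/N$ rather than $1$.
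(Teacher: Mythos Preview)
Your approach is correct and is exactly the standard argument the paper defers to by citing Nesterov's book: each $-\log\la B_j,\cdot\ra$ is $1$-self-concordant, the sum preserves $1$-self-concordance, and the $1/N$ scaling adjusts the constant. One slip to fix: the scaling rule should read that multiplying a $1$-self-concordant $F$ by $c>0$ yields a $c$-self-concordant function, not $(1/c)$-self-concordant---your own computation shows the ratio $|\na^3|/(\na^2)^{3/2}$ picks up a factor $c^{-1/2}$, which matches $2\alpha^{-1/2}$ with $\alpha = c$; happily, with $c = 1/N$ this still gives the claimed $\alpha = 1/N$, so your final conclusion is unaffected.
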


\section{A cubic-regularized Newton method for Problem~\eqref{problem1}}\label{section: Self-concordant Cubic-regularized Newton method}
In this section, we develop our proposed algorithmic framework for problem \eqref{problem1}. For $w,y\in \cC$, define
\begin{equation}\label{def: Phi_f}
\Phi_f(y,w) := f(w) + \na f(w)^\T (y-w) + \frac{1}{2} \na^2 f (w) [y-w]^2   \end{equation}
as a local second order model for $y \mapsto f(y)$ around the point $w$.

Our proposed method is presented in Algorithm \ref{algorithm: Self-concordant Cubic-regularized Newton Method}, which is a modification of the original cubic regularized Newton method by~\cite{nesterov2006cubic} proposed for a different family of problems. 

\begin{algorithm}[h!]
	\caption{A cubic regularized Newton method for Problem~\eqref{problem1}}
	\label{algorithm: Self-concordant Cubic-regularized Newton Method}
	\begin{algorithmic}
		\STATE \textbf{Input}: initial point $w^0 \in \cC$; initial regularization parameter $L_0>0$; two non-negative sequences $\{\ga_i\}_{i=1}^\infty$ and $\{\rho_i\}_{i=1}^\infty$; 
		enlarging parameter $\beta>1$.

		\textbf{For} $k=1,2,...$
		\STATE 1. Set $L_k = L_{k-1}$.
		
		\STATE 2. Find an approximate solution $y^k$ of
		\begin{eqnarray}\label{update-yk}
		\min_{y\in\cC}  ~\Big\{ h_f (y, w^k) := \Phi_f(y,w^k) + \frac{L_k}{6} 
		\|  y -w^k \|_{\na^2 f(w^k)}^{3}
		\Big\} \ .
		\end{eqnarray}

		\STATE 3. \textbf{If} 
		\begin{eqnarray}\label{condition1}
		f(y^k) \le \Phi_f(y^k,w^k) + (L_k/6) \|  y^k -w^k \|_{\na^2 f(w^k)}^{3} + \ga_k \ ,
		\end{eqnarray}
	~~	let $w^{k+1}$ be a point satisfying 
		\begin{eqnarray}\label{condition2}
		f(w^{k+1}) \le f(w^k) ~~ \text{and} ~~
		f(w^{k+1}) \le f(y^{k}) + \rho_k  \ .
		\end{eqnarray}
		~~
		\textbf{Else}, 
		let $L_k = \beta L_k$ and go back to step 2.

	\end{algorithmic}
\end{algorithm}

Note that Step~2 in Algorithm~\ref{algorithm: Self-concordant Cubic-regularized Newton Method} differs from that proposed in~\cite{nesterov2006cubic}---the cubic regularization term we use
$(L_k/6) \lt( \na^2 f(w^k) [y-w^k] \rt)^{3/2}$, depends upon the \emph{local curvature} of the objective function; and a local $L_{k}$.
In contrast, \cite{nesterov2006cubic} uses a cubic regularization term $L/6\| y - w^k\|^2$
(note the use of the Euclidean norm) 
and a fixed value of the Lipschitz constant (of the Hessian), $L$.
Intuitively speaking, the  regularization term in~\eqref{update-yk} restricts the distance of the update $y^k$ from $w^k$, by adapting to the  
local geometry of the self-concordant function $f(\cdot)$.
If the regularization term is too small, and the condition \eqref{condition1} in step 3 does not hold, then we increase $L_k$. In Section~\ref{subsection: Convergence guarantees}, we prove that $\sup_{k\ge 1} \{ L_k \}$ will be bounded by a constant multiple of $N$ (this does not depend upon the data $B$ or the constraints), implying that $L_k$ can be increased for at most a finite number of times.

Note that Algorithm~\ref{algorithm: Self-concordant Cubic-regularized Newton Method} allows for an inexact solution of sub-problem~\eqref{update-yk}. As we discuss in Section~\ref{section: using FW to solve  sub-problem}, this  sub-problem cannot be solved in closed form---we use a Frank Wolfe method to approximately solve the sub-problems efficiently.  
As long as the accuracy of solving \eqref{update-yk} suitably decreases as the algorithm proceeds, the global convergence still holds (See Section~\ref{subsection: Convergence guarantees}). 
 
Finally, Algorithm~\ref{algorithm: Self-concordant Cubic-regularized Newton Method} allows for a flexible choice of $w^{k+1}$ as long as it satisfies condition~\eqref{condition2} with a sequence of tolerances $\{\rho_i\}_{i=1}^\infty$. A natural special case is $w^{k+1} = y^k$ with $\rho_k = 0$ --- In our numerical experiments, we observe that this flexibility can be helpful in the initial stages of the algorithm and can lead to improved computational performance and an overall reduction of running times.

Finally, we note that Algorithm \ref{algorithm: Self-concordant Cubic-regularized Newton Method} is different from both the damped Newton method in \cite{nesterov1994interior} and the cubic-regularized Newton method in \cite{nesterov2006cubic}.

\subsection{Computational guarantees}\label{subsection: Convergence guarantees}
Here we establish the convergence properties of Algorithm~\ref{algorithm: Self-concordant Cubic-regularized Newton Method}. 
In Section~\ref{global-sublinear-CN} we establish a global (worst case) sublinear  $O(1/k^2)$ rate of convergence for Algorithm~\ref{algorithm: Self-concordant Cubic-regularized Newton Method} (See Theorem~\ref{theorem: global convergence}). An ingredient in this proof is that the adaptive parameter $L_{k}$ is bounded above by a constant that does not depend upon the problem data $B$ or the constraints (cf Lemma~\ref{lemma: Boundedness of L_k}). 
In Section~\ref{local-quadratic-CN} we establish a local quadratic convergence of
Algorithm~\ref{algorithm: Self-concordant Cubic-regularized Newton Method} in terms of $\| w^{k} - w^*\|_{H_*}$ where, $w^*$ is an optimal solution of Problem~\eqref{problem1} and $H_*=\nabla^2 f(w^*)$ is the Hessian of the objective at an optimal solution. Proofs of results in this section appear in Section~\ref{sec:append:computational-guarantees}.

\subsubsection{A global sublinear convergence rate}\label{global-sublinear-CN}
We make a mild assumption pertaining to the updates~\eqref{update-yk} in Algorithm \ref{algorithm: Self-concordant Cubic-regularized Newton Method}.
\begin{assumption}\label{assumption:  sub-problem solution}
	Suppose $y^k$ (computed in Step 2 of Algorithm \ref{algorithm: Self-concordant Cubic-regularized Newton Method}) satisfies
	\begin{equation}\label{condition3}
	\Phi_f(y^k,w^k) + ({L_k}/{6}) 
	\|  y^k -w^k \|_{\na^2 f(w^k)}^{3} \le f(w^k).
	\end{equation}
\end{assumption}
Note that condition \eqref{condition3} is satisfied if we set $y^k = w^k$, hence Assumption \ref{assumption:  sub-problem solution} should be satisfied by any reasonable solution of  sub-problem~\eqref{update-yk}. The following lemma makes use of Assumption~\ref{assumption:  sub-problem solution} to establish that the  sequence $\{L_{k}\}$ is bounded. 

\begin{lemma}\label{lemma: Boundedness of L_k}
	{\rm \textit{(Boundedness of $L_k$)} }
	Let $f(\cdot)$ be defined in \eqref{problem1} and
	$\{L_k\}_{k=0}^{\infty}$ be the parameters generated in Algorithm \ref{algorithm: Self-concordant Cubic-regularized Newton Method}. 
	Suppose Assumption \ref{assumption:  sub-problem solution} is satisfied. If we take $\beta\in (1,2)$ in Algorithm \ref{algorithm: Self-concordant Cubic-regularized Newton Method}, 
	then 
	\begin{equation}\label{bound on Lk}
	L_k \le \max\{48N, L_0\} \quad \forall k\ge 1.
	\end{equation}
\end{lemma}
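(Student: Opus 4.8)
The plan is to show that condition~\eqref{condition1} in Step~3 is satisfied automatically as soon as $L_k$ exceeds a threshold of order $N$ (concretely $24N$). Since each inflation in Step~3 multiplies $L_k$ by $\beta<2$ and can be triggered only while $L_k<24N$, it then follows that $L_k$ never exceeds $\max\{L_0,48N\}$. The crucial point — that the threshold depends neither on $B$ nor on $\cC$ — comes from a structural identity for the objective.

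First I would record that identity. For $w$ in the interior of the domain of $f$, writing $a_j := B_j/(B_j^\T w)$ gives $\na f(w) = -\tfrac1N\sum_{j} a_j$, $\na^2 f(w) = \tfrac1N\sum_j a_j a_j^\T$, and $a_j^\T w = 1$; hence $\na^2 f(w)\,w = -\na f(w)$, $\na f(w)^\T w = -1$, and $\na^2 f(w)[w]^2 = 1$. In particular $\na f(w)$ lies in the range of $\na^2 f(w)$ and the Newton decrement is identically one: $\na f(w)^\T (\na^2 f(w))^{\dagger}\na f(w) = -\na f(w)^\T w = 1$. Cauchy--Schwarz in the $\na^2 f(w^k)$-seminorm then yields $|\na f(w^k)^\T (y^k-w^k)| \le \|y^k-w^k\|_{\na^2 f(w^k)} =: r_k$. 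Next, expanding Assumption~\ref{assumption:  sub-problem solution}, i.e.\ \eqref{condition3}, gives $\na f(w^k)^\T(y^k-w^k) + \tfrac12 r_k^2 + \tfrac{L_k}{6} r_k^3 \le 0$, so $\tfrac12 r_k^2 + \tfrac{L_k}{6} r_k^3 \le r_k$, which forces $r_k^2 \le 6/L_k$ and hence $N r_k^2 \le 6N/L_k$. Thus whenever $L_k \ge 24N$ we have $t := \sqrt N\, r_k \le \tfrac12 < 1$.

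By Lemma~\ref{lemma: f-self-concordant}, $Nf$ is $1$-self-concordant and $\|u\|_{\na^2(Nf)(w)} = \sqrt N\,\|u\|_{\na^2 f(w)}$, so the standard self-concordant upper bound is valid in the unit local-norm ball. Dividing it by $N$ and subtracting $\Phi_f(y^k,w^k)$ gives, with $\omega_*(t):=-t-\log(1-t)$,
\[
f(y^k) - \Phi_f(y^k, w^k) \;\le\; \frac1N\Big(\omega_*(t) - \frac{t^2}{2}\Big) \;=\; \frac1N\sum_{i\ge 3}\frac{t^i}{i} \;\le\; \frac{t^3}{3N(1-t)}.
\]
When $L_k \ge 24N$ we have $t\le\tfrac12$, so the right-hand side is at most $\tfrac{2t^3}{3N} = \tfrac{2\sqrt N}{3} r_k^3 \le \tfrac{L_k}{6} r_k^3$ (using $24N \ge 4\sqrt N$ for $N\ge1$). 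Since $\ga_k \ge 0$, this is precisely condition~\eqref{condition1}; hence once $L_k \ge 24N$, Step~3 never increases $L_k$.

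Finally, the bookkeeping: an increase of $L_k$ occurs only when $L_k < 24N$ at the point of the check, and then replaces $L_k$ by $\beta L_k < 2\cdot 24N = 48N$ since $\beta<2$. As each outer iteration starts from the final value $L_{k-1}$ of the previous one, a straightforward induction on $k$ gives $L_k \le \max\{L_0, 48N\}$ for all $k\ge1$. The only genuinely non-routine ingredient is the structural identity $\na^2 f(w)w = -\na f(w)$ (equivalently, that the Newton decrement of $f$ is identically $1$), which is exactly what makes the threshold independent of the data $B$ and of the feasible set; everything else is standard self-concordant calculus and elementary bookkeeping. A minor technical point is the possible degeneracy of $\na^2 f(w^k)$, which is handled by working on its range (or by the usual assumption that the rows of $B$ span $\R^M$ modulo the affine constraint).
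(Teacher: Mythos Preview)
Your proof is correct and follows essentially the same structure as the paper's: both show that condition~\eqref{condition1} is automatically satisfied once $L_k \ge 24N$, by first using Assumption~\ref{assumption:  sub-problem solution} to bound $r_k = \|y^k-w^k\|_{\na^2 f(w^k)} \le \sqrt{6/L_k}$, and then using self-concordance to control the Taylor remainder $f(y^k)-\Phi_f(y^k,w^k)$. The only differences are cosmetic: where the paper writes $p = \tfrac1N\sum_i \mu_i$, $q^2 = \tfrac1N\sum_i \mu_i^2$ and invokes Jensen to get $|p|\le q$, you obtain the same inequality via the identity $\na^2 f(w)\,w = -\na f(w)$ (equivalently, Newton decrement $\equiv 1$) and Cauchy--Schwarz; and where the paper bounds the remainder by an explicit triple integral using~\eqref{eqn: hessian compare} to get the constant $3\sqrt N$, you invoke the standard $\omega_*$ upper bound and obtain the slightly sharper constant $\tfrac{2\sqrt N}{3}$ --- both suffice since the binding constraint is $L_k \ge 24N$ needed for $t\le 1/2$.
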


Lemma \ref{lemma: Boundedness of L_k} shows that $L_k$ is bounded by a constant that depends on $N$, but not the data $B$. This is a consequence of the self-concordance property of $f(\cdot)$. 
Based on our numerical experience however, it appears that the worst-case 
bound in~\eqref{bound on Lk} is conservative.

Before stating the global convergence result, we introduce some new notations. Let us define the sub-level set
$$
X^0 := \lt\{  w\in \cC  : f(w) \le f(w^0)  \rt\},
$$
where $w^0$ is an initialization for Algorithm \ref{algorithm: Self-concordant Cubic-regularized Newton Method}. Let $w^*$ be an optimal solution of~\eqref{problem1} and 
define
\begin{equation} \label{def: C}
    \begin{aligned}
    C_1 := \sup_{w\in X^0}   
\frac{1}{N} \sum_{i=1}^N \max \Big\{\Big|\frac{\la B_i, w-w^* \ra }{ \la B_i, w\ra} \Big|^3 , \Big|\frac{\la B_i, w-w^* \ra }{ \la B_i, w^*\ra} \Big|^3 
\Big\},& ~~~~~ \text{and}\\
C:= (1/3) \max \Big\{
\Big(1+ \sup_{k\ge 0} L_k/2 \Big) C_1 , ~ f(w^0) - f^*
\Big\}.& 
\end{aligned}
\end{equation}
Note that $C_1$ defined above, is finite\footnote{To see this, note that if $C_{1}=\infty$, then we can find a sequence $\wtd w^k \in X^0 $ such that $f(\wtd w^k)\rightarrow \infty$ as $ k \rightarrow \infty$, 
which is a contradiction to the fact $\wtd w^k \in X^0 $.}. 
In addition, by Lemma~\ref{lemma: Boundedness of L_k}, $C$ is also finite. Let $f^*$ be the optimal objective value of \eqref{problem1}. 
With these notations, we can state the following global convergence result:
\begin{theorem}\label{theorem: global convergence}
	{\rm \textit{(Global sublinear convergence)}}
	Let $\{w^k\}_{k=1}^{\infty}$ be the sequence from Algorithm \ref{algorithm: Self-concordant Cubic-regularized Newton Method}. Let $\dt_k$ be a measure of how close $y^k$ is to minimizing Problem~\eqref{update-yk}, that is, 
	\begin{equation}
		\dt_k := h_f(y^k,w^k) - \min_{y\in \cC}  h_f(y,w^k).
	\end{equation}
	For $i\ge 1$
let $E_i:= \ga_i + \rho_i+\dt_i$. 
	Then for all $k\ge 2$ it holds
	\begin{equation}
	f(w^{k+1}) - f^* \le \frac{30 C}{(k+2)^2} + \frac{1}{(k+1)^3} \sum_{i=1}^k (i+3)^3 E_i \ . \nonumber
	\end{equation}
\end{theorem}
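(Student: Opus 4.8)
The plan is to mimic the convergence analysis of the cubic regularized Newton method from \cite{nesterov2006cubic}, but adapted to the self-concordant setting where the Hessian is not globally Lipschitz; the replacement for the ``$L$-Lipschitz Hessian'' bound will come from the self-concordance of $f$, and the quantity $C_1$ in \eqref{def: C} plays the role of a local curvature constant that controls the third-order remainder along the segments $[w^*, w^k]$ and $[w^*, y^k]$. First I would establish the basic per-iteration descent inequality: using Assumption \ref{assumption: sub-problem solution} together with condition \eqref{condition1}, and the fact that $y^k$ is a $\dt_k$-approximate minimizer of $h_f(\cdot, w^k)$, I would show that comparing $h_f(y^k, w^k)$ with $h_f(\tau w^* + (1-\tau) w^k, w^k)$ for $\tau \in [0,1]$ yields a recursion of the form $f(w^{k+1}) - f^* \le (1-\tau)(f(w^k) - f^*) + (\text{cubic term in }\tau) + E_k$, where the cubic term is bounded using the definition of $C_1$ and the boundedness of $L_k$ (Lemma \ref{lemma: Boundedness of L_k}); this is where the constant $C$ enters. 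The key estimate here is that $f(\tau w^* + (1-\tau)w^k) \le \Phi_f(\tau w^* + (1-\tau)w^k, w^k) + (\text{remainder})$ and that the remainder, plus the cubic regularization evaluated at this feasible point, is at most $3C \tau^3 / $ (something), controlled by $C_1$ and $\sup_k L_k$.

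Next I would turn the one-step recursion $f(w^{k+1}) - f^* \le (1-\tau_k)(f(w^k) - f^*) + 3C\tau_k^3 + E_k$ (valid for every $\tau_k \in [0,1]$) into the claimed $O(1/k^2)$ rate. The standard trick is to define $\Delta_k := f(w^k) - f^*$, choose $\tau_k = 3/(k+c)$ for an appropriate shift $c$, and prove by induction that $\Delta_k \le 30C/(k+2)^2 + (\text{error sum})$. The $(1-\tau)\Delta_k$ term, with $\tau = 3/(k+c)$, shrinks $\Delta_k$ by roughly the right factor so that $A/(k+2)^2$ propagates; the residual $3C\tau_k^3 = 81C/(k+c)^3$ is summable and absorbed into the leading constant via the slack between $(1-3/(k+c))/(k+1+2)^2$... wait, the arithmetic here is the usual ``accelerated-gradient-style'' telescoping, and the constant $30$ is what falls out after optimizing $c$. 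The error terms $E_i = \ga_i + \rho_i + \dt_i$ accumulate: unrolling the recursion and using $\tau_i = 3/(i+c)$, the coefficient multiplying $E_i$ in $\Delta_{k+1}$ is $\prod_{j=i+1}^{k}(1-\tau_j)$, which telescopes to something like $\big(\tfrac{i+c-2}{k+c-2}\big)\cdots$, i.e. $\le (i+3)^3/(k+1)^3$ after bounding $(1-\tau_j) = (j+c-3)/(j+c)$ and comparing the product to a ratio of cubes. Choosing $c$ (likely $c=3$ or $c=4$) so that $(i+3)^3/(k+1)^3$ emerges is a bookkeeping matter.

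The main obstacle I anticipate is the first step: carefully bounding the third-order remainder $f(x) - \Phi_f(x, w^k)$ for $x = \tau w^* + (1-\tau)w^k$ in a way that (a) is cubic in $\tau$, (b) involves only $C_1$ (which is defined via sub-level-set suprema of $|\la B_i, w - w^*\ra / \la B_i, w\ra|^3$ type quantities), and (c) plays well with the cubic regularization term $\tfrac{L_k}{6}\|x - w^k\|_{\na^2 f(w^k)}^3$. Because $f$ is a sum of $-\tfrac1N \log \la B_i, w\ra$, its directional derivatives along $w - w^*$ have the explicit form $\pm \tfrac1N \sum_i (\la B_i, w-w^*\ra/\la B_i, w\ra)^p$; a Taylor expansion with integral remainder for $-\log$ gives exactly a cubic term of the shape appearing in $C_1$, but one must check that the relevant point on the segment (and the sign/absolute-value conventions) match the $\max\{\cdot, \cdot\}$ in the definition of $C_1$ — that is why $C_1$ takes a maximum over expansions anchored at $w$ and at $w^*$. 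The rest (the boundedness of $L_k$, the summability of $E_i$, the induction) is routine given Lemma \ref{lemma: Boundedness of L_k} and the structure above. I would also double-check the restriction $k \ge 2$ in the statement: it is presumably needed so that the shift in $\tau_k$ keeps $\tau_k \le 1$ and the induction base case is clean.
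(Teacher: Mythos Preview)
Your plan is essentially the paper's proof: restrict the minimization in $h_f(\cdot,w^k)$ to the segment $w^k(\tau)=\tau w^*+(1-\tau)w^k$, use the integral Taylor remainder together with the explicit $-\tfrac1N\sum_i\log\langle B_i,\cdot\rangle$ structure to bound the cubic error by $\tfrac{C_1\tau^3}{3}$ (the $\max$ in the definition of $C_1$ is exactly what handles the denominator $\langle B_i, w^k+str(w^k(\tau)-w^k)\rangle$ lying between $\langle B_i,w^k\rangle$ and $\langle B_i,w^*\rangle$), bound $\|w^*-w^k\|_{\na^2 f(w^k)}^3\le C_1$ by Jensen, obtain $f(w^{k+1})-f^*\le(1-\tau)(f(w^k)-f^*)+C\tau^3+E_k$, and telescope with $\tau_k=3/(k+3)$ and $T_k=\prod_{i\le k}(1-\tau_i)=6/((k+1)(k+2)(k+3))$. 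One small correction: the inequality you need goes the other way --- you want $\Phi_f(w^k(\tau),w^k)\le f(w^k(\tau))+O(\tau^3)$ (so that $h_f(w^k(\tau),w^k)\le f(w^k(\tau))+C\tau^3$ and then convexity of $f$ along the segment gives the recursion), not $f\le\Phi_f+\text{remainder}$ as you wrote; this is immediate from the same Taylor identity once you note the third-order integrand is bounded in absolute value.
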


The following corollary states that under mild assumptions on the sequence $\{(\delta_{k}, \gamma_{k},$ $ \rho_{k})\}_{k\geq 1}$, Algorithm~\ref{algorithm: Self-concordant Cubic-regularized Newton Method} converges to the minimum of~\eqref{problem1} at a sublinear rate.
\begin{corollary}\label{cor:inexact-solns}
Under the statements of Theorem \ref{theorem: global convergence}, if the sequences $\{\ga_k\}_{k=1}^\infty$, $\{\rho_k\}_{k=1}^\infty$ are set such that $ \sum_{k=1}^\infty  (\ga_k+\rho_k ) k^3 <\infty $, and the errors $\dt_k$ of solving  sub-problems also satisfy $ \sum_{k=1}^\infty \dt_k k^3 <\infty  $, then Algorithm \ref{algorithm: Self-concordant Cubic-regularized Newton Method} has a global convergence rate $O(1/k^2)$.
\end{corollary}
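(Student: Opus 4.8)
The plan is to feed the hypotheses of the corollary directly into the bound established in Theorem~\ref{theorem: global convergence} and control its two terms separately. Recall that for $k \ge 2$ the theorem gives
\[
f(w^{k+1}) - f^* \le \frac{30 C}{(k+2)^2} + \frac{1}{(k+1)^3}\sum_{i=1}^k (i+3)^3 E_i, \qquad E_i = \ga_i + \rho_i + \dt_i,
\]
where $C$ is the finite constant defined in~\eqref{def: C} (its finiteness uses Lemma~\ref{lemma: Boundedness of L_k}). The first term is manifestly $O(1/k^2)$ with an absolute multiplicative constant $30C$, so the whole task reduces to showing that the second term is $O(1/k^2)$; in fact I expect it to be $O(1/k^3)$.

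First I would use the elementary inequality $i+3 \le 4i$, valid for every integer $i \ge 1$, so that $(i+3)^3 \le 64\, i^3$. Hence
\[
\sum_{i=1}^k (i+3)^3 E_i \;\le\; 64 \sum_{i=1}^k i^3 E_i \;\le\; 64 \sum_{i=1}^\infty i^3 \big(\ga_i + \rho_i + \dt_i\big) \;=:\; 64\, S .
\]
Under the corollary's assumptions $\sum_{i=1}^\infty (\ga_i + \rho_i) i^3 < \infty$ and $\sum_{i=1}^\infty \dt_i i^3 < \infty$, so $S$ is finite, and therefore the second term in the theorem's bound is at most $64 S / (k+1)^3$. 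Combining the two estimates yields
\[
f(w^{k+1}) - f^* \;\le\; \frac{30 C}{(k+2)^2} + \frac{64 S}{(k+1)^3} \;=\; O\!\left(\frac{1}{k^2}\right) \qquad \text{for all } k \ge 2,
\]
which is the asserted rate.

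I do not expect any genuine obstacle here: the only nontrivial inputs — the $O(1/k^2)$ recursion-type bound and the uniform boundedness of $L_k$ needed to make $C$ finite — were already supplied by Theorem~\ref{theorem: global convergence} and Lemma~\ref{lemma: Boundedness of L_k}. The remaining work is the routine verification of the inequality $(i+3)^3 \le 64\, i^3$ and the observation that the stated tail-summability hypotheses on $\{(\ga_k,\rho_k,\dt_k)\}$ make $S$ finite. If one prefers slightly different constants, essentially any bound of the form $(i+3)^3 \le c\, i^3$ for $i \ge 1$ works equally well, so the argument is robust to the precise form of the constants appearing in the theorem.
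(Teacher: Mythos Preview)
Your proposal is correct and is precisely the intended derivation: the paper states the corollary as an immediate consequence of Theorem~\ref{theorem: global convergence} without a separate proof, and your argument---bounding $(i+3)^3 \le 64\,i^3$ and invoking the summability hypotheses to make the second term $O(1/k^3)$---is exactly the routine verification one would supply.
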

Particular choices of the parameters $\delta_{k}, \gamma_{k},$ and $ \rho_{k}$ are discussed in Section~\ref{section: Experiments}.

\subsubsection{A local quadratic convergence rate}\label{local-quadratic-CN}

Theorem~\ref{theorem: local convergence} below presents the local convergence properties of Algorithm~\ref{algorithm: Self-concordant Cubic-regularized Newton Method} when $E_{k}=0$ for all $k$.
The proof is based on the analysis of the cubic regularized Newton method in~\cite{nesterov2006cubic} combined with basic properties of self-concordant functions. For this result we assume the matrix $B$ has full row rank -- this implies the Hessian $\na^2 f(w)$ is positive definite for all $w \in \cC$ with $f(w)<\infty$. Our local convergence rate is stated using the norm  $\|\cdot\|_{\na^2 f(w^*)}$, similar to the local convergence rates in~\cite{tran2015composite} for the damped Newton method.

\begin{theorem}\label{theorem: local convergence}
	{\rm \textit{(Local quadratic convergence)}}
	Consider a special case of Algorithm~\ref{algorithm: Self-concordant Cubic-regularized Newton Method} with $\delta_{k}=\ga_k=\rho_k=0$ and let $w^{k+1} = y^k$ in Step 3.
	Let $\{w^k\}_{k=0}^{\infty}$ be the iterates from Algorithm~\ref{algorithm: Self-concordant Cubic-regularized Newton Method}, and let $\bar L:= \sup_{k\ge 0} \{L_k\}$ and $H_* := \na^2 f(w^*)$. 
If for some $K>0$, the following holds:
 $$(6\sqrt{N}+2\bar L) \| w^K - w^* \|_{H_*} \le 1.$$	Then for all $k \ge K$, we have
	\begin{equation}
	\| w^{k+1} - w^* \|_{H_*} \le  (6\sqrt{N} + 2\bar L ) \| w^{k} - w^* \|_{H_*}^2 \ . \nonumber
	\end{equation}
\end{theorem}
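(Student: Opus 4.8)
The plan is to mimic the local-convergence analysis of Nesterov--Polyak for cubic regularized Newton, but with the Euclidean norm replaced throughout by the local norm $\|\cdot\|_{\na^2 f(w^*)}$, exploiting self-concordance to control how the Hessian changes along the iterates. First I would record the exact optimality characterization of the cubic Newton step. Since $E_k=0$ and $w^{k+1}=y^k$ exactly solves $\min_{y\in\cC} h_f(y,w^k)$, the first-order condition reads: for all $y\in\cC$,
\begin{equation}
\la \na f(w^k) + \na^2 f(w^k)(w^{k+1}-w^k) + \tfrac{L_k}{2}\|w^{k+1}-w^k\|_{\na^2 f(w^k)}\,\na^2 f(w^k)(w^{k+1}-w^k),\ y-w^{k+1}\ra \ge 0.\nonumber
\end{equation}
Plugging in $y=w^*$ gives the workhorse inequality. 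Separately, optimality of $w^*$ for Problem~\eqref{problem1} gives $\la \na f(w^*), w^{k+1}-w^*\ra \ge 0$. Adding these and rearranging, the cross term $\la \na f(w^k) - \na f(w^*) + \na^2 f(w^k)(w^{k+1}-w^k), w^*-w^{k+1}\ra$ must dominate a positive quantity of order $\|w^{k+1}-w^k\|_{\na^2 f(w^k)}\cdot\|w^{k+1}-w^*\|_{\na^2 f(w^k)}^2$ (up to constants), after using $\na^2 f(w^k)\succeq 0$ and bounding $L_k\le\bar L$.

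Next I would estimate the ``gradient remainder'' term $r^k := \na f(w^k) - \na f(w^*) - \na^2 f(w^k)(w^k-w^*)$, which by the integral form of Taylor's theorem equals $\int_0^1 \big(\na^2 f(w^* + t(w^k-w^*)) - \na^2 f(w^k)\big)(w^k-w^*)\,dt$. Here self-concordance of $f$ (Lemma~\ref{lemma: f-self-concordant}, so $f$ is $(1/N)$-self-concordant) enters: using the Hessian-comparison bound~\eqref{eqn: hessian compare} with $\bar r = \sqrt{N}\|w^k - w^*\|_{\na^2 f(w^k)}$-type quantities, I can bound $\|r^k\|_{(\na^2 f(w^k))^{-1}}$ (the dual local norm) by a constant times $\sqrt{N}\,\|w^k-w^*\|_{\na^2 f(w^k)}^2$, at least once $\|w^k-w^*\|$ is small enough that $\bar r$ is bounded away from $1$. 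One also needs to translate between $\|\cdot\|_{\na^2 f(w^k)}$ and $\|\cdot\|_{H_*}$: again by~\eqref{eqn: hessian compare}, in the regime enforced by the hypothesis $(6\sqrt N + 2\bar L)\|w^K-w^*\|_{H_*}\le 1$, these two norms are within a universal constant factor of each other, and this relation propagates inductively once we know the iterates stay in a neighborhood. Combining the workhorse inequality with the remainder bound, and using $\|w^{k+1}-w^k\|\le \|w^{k+1}-w^*\| + \|w^k-w^*\|$ to absorb the step-length factor, yields $\|w^{k+1}-w^*\|_{H_*} \le (6\sqrt N + 2\bar L)\|w^k-w^*\|_{H_*}^2$ after tracking constants; the precise constants $6\sqrt N$ and $2\bar L$ come out of being careful with the factor-of-two slack in the self-concordance comparisons and the $\tfrac{L_k}{2}$ coefficient.

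Finally I would set up the induction: assuming $(6\sqrt N + 2\bar L)\|w^k - w^*\|_{H_*}\le 1$, the displayed quadratic bound gives $\|w^{k+1}-w^*\|_{H_*}\le \|w^k-w^*\|_{H_*}$, hence $(6\sqrt N+2\bar L)\|w^{k+1}-w^*\|_{H_*}\le 1$ as well, so the hypothesis is self-perpetuating from $k=K$ onward and the quadratic rate holds for all $k\ge K$. The main obstacle I anticipate is the bookkeeping in the remainder estimate: one must keep the self-concordance parameter $1/N$ straight (so that norms carry the correct $\sqrt N$ factors), handle the fact that $\na^2 f(w^k)$ rather than $H_*$ appears naturally in the subproblem optimality condition, and ensure all the Hessian-comparison inequalities~\eqref{eqn: hessian compare} are applied only when the relevant local radius is strictly below $1$ — which is exactly what the smallness hypothesis on $\|w^K-w^*\|_{H_*}$ guarantees. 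A secondary subtlety is that the subproblem is constrained to $\cC$, so one cannot use unconstrained stationarity; the variational inequalities above are what make the constrained case go through, and combining the two variational inequalities (for the subproblem at $w^k$ and for the original problem at $w^*$) correctly is the one genuinely delicate algebraic step.
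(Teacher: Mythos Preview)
Your proposal is correct and follows essentially the same route as the paper's proof: the paper writes the optimality conditions via the indicator-subdifferential (equivalently your variational inequalities), takes the inner product with $w^{k+1}-w^*$ to obtain $\mu_k\|w^{k+1}-w^*\|_{H_k}^2 \le J_1 + J_2$ with $\mu_k = 1 + \tfrac{L_k}{2}\|w^{k+1}-w^k\|_{H_k}$, bounds the gradient remainder $J_1$ in the dual local norm $\|H_k^{-1/2}(\cdot)\|_2$ via the integral form and the self-concordance Hessian comparison~\eqref{eqn: hessian compare} (together with a matrix sandwich lemma), bounds $J_2$ via the triangle inequality exactly as you suggest, and then converts between $\|\cdot\|_{H_k}$ and $\|\cdot\|_{H_*}$ using~\eqref{eqn: hessian compare} before closing the induction. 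One small clarification: the quantity dominated on the left is $\mu_k(\eta_k')^2$, whose leading part is $(\eta_k')^2$ (coming from $\mu_k\ge 1$), not merely the cubic piece $\|w^{k+1}-w^k\|_{H_k}\,(\eta_k')^2$; keeping that leading term is what lets you divide through by $\eta_k'$ and conclude.
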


\section{Solving the cubic regularized Newton step with Frank-Wolfe}\label{section: using FW to solve  sub-problem}
To implement Algorithm \ref{algorithm: Self-concordant Cubic-regularized Newton Method}, we need an efficient method for (approximately) solving the cubic regularized Newton step~\eqref{update-yk}.
We use the Frank-Wolfe method with away steps~\cite{guelat1986some,lacoste2015global} for this purpose. 
Frank-Wolfe method \cite{frank1956an,Jaggi2013}
is a classical algorithm for constrained convex optimization with a compact constraint set. It is called a ``projection-free" method since at every iteration, we need access to a linear programming (LP) oracle (over the constraint set). The Away-step Frank-Wolfe (AFW) is a variant of the classical Frank-Wolfe method which usually leads to improved sparsity properties and enjoys a global linear convergence rate when the objective function is strongly convex and the constraint set is a polytope\footnote{We note that there are several variants of Frank-Wolfe~\cite[see for example]{freund2017extended} that result in improvements over the basic version of the Frank-Wolfe method. We use the away-step variant for convenience, though other choices are also possible.}. 
Owing to space constraints, a formal description of our AFW algorithm is presented in Section~\ref{sec:append:AFW}. 
For improved practical performance, AFW requires efficient solutions to the LP oracles and efficient methods to store a subset of vertices of the polyhedral constraint set, as we discuss next.

\subsection{Solving the LP oracle}\label{sec:solving-LP-oracle}
We discuss how to efficiently solve the LP oracle
\begin{equation}\label{LP:oracle}
    \min_{w \in {\mathcal C} }~\langle g, w \rangle 
\end{equation} for the different choices of the polyhedral set ${\mathcal C}$ that we consider here. In particular, we provide a description of the set of vertices $V(\cC)$, which leads to a solution of~\eqref{LP:oracle}. 
This can also be used to compute the away step, which requires solving a linear oracle over a subset of vertices.

Proofs of all technical results in this section can be found in Section~\ref{sec:append:LP-oracles}.

\subsubsection{No shape restriction}
The constraint $\cC = \Dt_M$, the $M$-dimensional unit simplex, has $M$ vertices, corresponding to the standard orthogonal bases elements $\{e_i\}_{i=1}^M$. Here,  solving the LP~\eqref{LP:oracle} is equivalent to finding the minimum value of $g$.

\subsubsection{Monotonicity constraint}
For monotone density estimation via formulation~\eqref{problem1} with Bernstein polynomial bases, we consider the following choices for the polyhedral set $\cC$: 
\begin{equation}\label{C: monotone}
\begin{aligned}
\cC^{-} &=& \lt\{    w \in \R^M :~ 1_M^\T w = 1, ~w \ge 0, ~   w_1\ge \cdots \ge w_M \rt\} , \\
\cC^{+} &=& \lt\{    w \in \R^M :~ 1_M^\T w = 1, ~w \ge 0, ~   w_1\le \cdots \le w_M \rt\}   
\end{aligned}
\end{equation}
which correspond to mononotically decreasing and increasing sequences (densities), respectively.
We show below that after a linear transformation, the constraint sets in~\eqref{C: monotone} can be transformed to the standard simplex, for which the LP oracle can be solved in closed form.  
\begin{proposition}\label{proof-proposition-4.1}
(Decreasing)
	Let $U\in \R^{M\times M}$ be an upper triangular matrix with $U_{i,j} = 1/j$ for $j\ge i$ and $i \in [M]$. Then, we can equivalently write $\cC^{-}$ as
	$$ \cC^{-} = U(\Dt_M) := \lt\{    Uy \in \R^M :~ 1_M^\T y = 1, ~y \ge 0 \rt\} .$$
\end{proposition}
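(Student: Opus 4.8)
The plan is to prove set equality $\cC^{-} = U(\Dt_M)$ by establishing the two inclusions, with the key observation being that $U$ encodes the partial-sum (or rather, averaged partial-sum) reparametrization that turns a monotone nonnegative sequence summing to $1$ into an arbitrary point of the simplex.

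First I would compute explicitly what $w = Uy$ looks like coordinatewise. Since $U_{i,j} = 1/j$ for $j \ge i$ and $0$ otherwise, we get $w_i = \sum_{j \ge i} y_j / j$. From this formula I would read off three facts directly: (i) $w_i - w_{i+1} = y_i/i \ge 0$ whenever $y \ge 0$, so $w_1 \ge w_2 \ge \cdots \ge w_M$; (ii) $w_M = y_M/M \ge 0$, hence combined with (i) all $w_i \ge 0$; (iii) $1_M^\T w = \sum_{i=1}^M \sum_{j \ge i} y_j/j = \sum_{j=1}^M (y_j/j) \cdot |\{i : i \le j\}| = \sum_{j=1}^M (y_j/j)\cdot j = \sum_j y_j = 1$. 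This shows $U(\Dt_M) \subseteq \cC^{-}$.

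For the reverse inclusion, given $w \in \cC^{-}$ I would exhibit the preimage $y$ explicitly: from $w_i - w_{i+1} = y_i/i$ (with the convention $w_{M+1} = 0$), set $y_i := i(w_i - w_{i+1})$ for $i \in [M]$. Then $y_i \ge 0$ because $w$ is decreasing (and $w_M \ge 0$ handles $i = M$), and $\sum_i y_i = \sum_i i(w_i - w_{i+1}) = \sum_i w_i = 1$ by an Abel summation / telescoping argument (the same computation as (iii) read backwards), so $y \in \Dt_M$. It then remains to check $Uy = w$, i.e. $\sum_{j\ge i} y_j/j = \sum_{j \ge i}(w_j - w_{j+1}) = w_i$, which telescopes. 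Hence $\cC^{-} \subseteq U(\Dt_M)$, completing the proof.

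I do not expect any genuine obstacle here — the result is essentially a change of variables and every step is a short finite summation. The only things to be careful about are the indexing conventions (the $w_{M+1} = 0$ convention, and making sure the range $j \ge i$ in $U$ matches the upper-triangular description), and confirming that $U$ is invertible so that the correspondence is genuinely a bijection (which is automatic since $U$ is upper triangular with nonzero diagonal entries $U_{i,i} = 1/i$), though strictly only the two set inclusions are needed for the stated claim. The increasing case $\cC^{+}$ would follow by the analogous argument with the coordinate reversal $w_i \mapsto w_{M+1-i}$, or equivalently with a lower-triangular matrix, but that is Proposition's sibling and not required here.
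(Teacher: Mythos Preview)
Your proposal is correct and follows essentially the same approach as the paper: both argue the two inclusions via the explicit change of variables $y_i = i(w_i - w_{i+1})$ (with $w_{M+1}=0$) and verify the simplex/monotonicity conditions by telescoping sums. The only cosmetic difference is that the paper proves $\cC^- \subseteq U(\Dt_M)$ first and is terser, while you reverse the order and spell out the intermediate computations more fully.
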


For any given $g\in \R^M$, the LP~\eqref{LP:oracle} (with $\cC = \cC^{-}$),
is equivalent to solving
\begin{equation}
\min_y ~ \la U^\T g, y \ra  ~~~ {\rm s.t.} ~~~ 1_M^\T y = 1, ~ y \ge 0 \nonumber
\end{equation}
which is equivalent to finding the minimum value in the vector $U^\T g$. By exploiting the structure of matrix $U$, computing $U^\T g$ only needs $O(M)$ operations. 

Similarly, when $\cC=\cC^{+}$ (increasing sequence), we arrive at the following proposition.
\begin{proposition}
(Increasing)
	Let $U\in \R^{M\times M}$ be a lower triangular matrix with $U_{i,j} = 1/(M-j+1)$ for $i\ge j$ and $j \in [M].$ Then we have
	$$ \cC^{+} = U(\Dt_M) = \lt\{    Uy \in \R^M :~ 1_M^\T y = 1, ~y \ge 0 \rt\} .$$
\end{proposition}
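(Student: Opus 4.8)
The plan is to mirror the proof of Proposition~\ref{proof-proposition-4.1} (the decreasing case) and reduce the claim to the observation that a linear bijection carries the simplex onto the constraint polytope. First I would establish that the stated lower-triangular matrix $U$, with $U_{i,j} = 1/(M-j+1)$ for $i \ge j$ and $U_{i,j}=0$ for $i<j$, is invertible: being lower triangular with nonzero diagonal entries $U_{j,j} = 1/(M-j+1)$, its determinant is $\prod_{j=1}^M 1/(M-j+1) \ne 0$. Hence $y \mapsto Uy$ is a bijection on $\R^M$, and it suffices to show that the image of $\Dt_M$ under this map equals $\cC^{+}$.

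The heart of the argument is to check the two defining features of $\cC^{+}$ — the normalization $1_M^\T w = 1$ and the monotonicity $w_1 \le \cdots \le w_M$, together with $w \ge 0$ — translate exactly to $y \in \Dt_M$. Writing $w = Uy$, the $i$-th coordinate is $w_i = \sum_{j \le i} y_j/(M-j+1)$. The key step is to compute the consecutive differences: for $2 \le i \le M$,
\begin{equation}
w_i - w_{i-1} = \frac{y_i}{M-i+1} \ , \nonumber
\end{equation}
so the monotonicity constraints $w_i \ge w_{i-1}$ are equivalent to $y_i \ge 0$ for $i = 2,\dots,M$. Separately, $w_1 = y_1/M \ge 0$ is equivalent to $y_1 \ge 0$, so $w_1 \le \cdots \le w_M$ together with $w_1 \ge 0$ is equivalent to $y \ge 0$ (and then automatically $w \ge 0$, since each $w_i$ is a nonnegative combination of nonnegative $y_j$). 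It remains to handle the normalization: I would compute $1_M^\T w = \sum_{i=1}^M w_i = \sum_{j=1}^M \frac{y_j}{M-j+1}\,\#\{i : i \ge j\} = \sum_{j=1}^M \frac{y_j}{M-j+1}(M-j+1) = \sum_{j=1}^M y_j$, so $1_M^\T w = 1 \iff 1_M^\T y = 1$. Combining these equivalences gives $w \in \cC^{+} \iff y \in \Dt_M$, which is the claim.

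I do not expect a genuine obstacle here — the only thing to be careful about is bookkeeping with the index ranges in $U$ (where the nonzero entries sit, and that $U_{i,j}$ depends only on $j$ within a column), and making sure the telescoping/difference identity $w_i - w_{i-1} = y_i/(M-i+1)$ is written with the correct normalization constant. An alternative, essentially equivalent, route is to exhibit the inverse map explicitly: from the difference identity, $y_1 = M w_1$ and $y_i = (M-i+1)(w_i - w_{i-1})$ for $i \ge 2$, which makes the bijection between $\cC^{+}$ and $\Dt_M$ completely transparent and also shows directly that the vertices of $\cC^{+}$ are the images $U e_i$ of the standard basis vectors, so that the LP oracle~\eqref{LP:oracle} over $\cC^{+}$ reduces to finding the minimum entry of $U^\T g$, computable in $O(M)$ time by exploiting the triangular structure.
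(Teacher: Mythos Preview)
Your proposal is correct and mirrors the paper's approach for the analogous decreasing case (Proposition~\ref{proof-proposition-4.1}); the paper itself omits the proof of the increasing case, simply noting it is similar. Your organization is slightly more streamlined---you establish the bijection via the difference identity $w_i - w_{i-1} = y_i/(M-i+1)$ and check all constraints at once, whereas the paper's proof of the decreasing case argues the two inclusions separately---but the underlying idea is identical.
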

The LP oracle~\eqref{LP:oracle} (for $\cC = \cC^+$) can be solved in a manner similar to the case of $\cC^-$ --- we omit the details for brevity.

\subsubsection{Concavity constraint}\label{sec:concavity-constraint}
For concave density estimation with a Bernstein polynomial bases (see Section~\ref{sec: shape-constraints}), in Problem~\eqref{problem1},
we consider $\cC=\cC^{\wedge}$, where 
\begin{equation}\label{C: concavity}
\cC^{\wedge} = \lt\{    w \in \R^M :~ 1_M^\T w = 1, ~w \ge 0, ~   w \text{ is concave} \rt\}.
\end{equation}
The following proposition provides a description of the extreme points of the set $\cC^{\wedge}$.
\begin{proposition}\label{proposition:4point3}
	(Concave) $\cC^{\wedge}$ has $M$ vertices, which are given by
	\begin{equation}
	\begin{aligned}
	v_1 &=  \frac{2}{M(M-1)} (0,1,...,M-1)^\T , ~~ v_M = \frac{2}{M(M-1)} ( M-1,M-2,...,1,0)^\T , \\
	v_j&=(0, p_j, ...,(j-1)p_j, (M-j-1)q_j,  (M-j-2)q_j, ..., q_j,0)^\T ~~ 2 \le j \le M-1, 
	\end{aligned}
	\nonumber
	\end{equation}
	where $p_j = 2/((M-1)(j-1))$ and $q_j = 2/((M-1)(M-j))$ for all $2 \le j \le M-1$.
\end{proposition}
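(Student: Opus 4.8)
\textbf{Proof proposal for Proposition~\ref{proposition:4point3}.}

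The plan is to show the claimed vectors are exactly the extreme points of $\cC^{\wedge}$ by two complementary arguments: first verifying that each proposed $v_j$ indeed lies in $\cC^{\wedge}$, and second showing that the only candidate extreme points of $\cC^{\wedge}$ are these $M$ vectors. Recall that $\cC^{\wedge} = \Dt_M \cap S$ where $S = \{w : 2w_m \ge w_{m-1}+w_{m+1},\ 2\le m \le M-1\}$ is the ``concave sequence'' cone, so $\cC^{\wedge}$ is a polytope defined by the affine equality $1_M^\T w = 1$, the $M$ inequalities $w_m \ge 0$, and the $M-2$ concavity inequalities. A point of $\cC^{\wedge}$ is a vertex iff the active constraints there span $\R^M$ (equivalently, iff $M-1$ linearly independent constraints among the inequalities are active, together with the equality). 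First I would check membership: for each $v_j$, nonnegativity is immediate from the explicit nonnegative entries; the normalization $1_M^\T v_j = 1$ is a short arithmetic-series computation (e.g. for $v_1$, $\sum_{k=0}^{M-1} k = M(M-1)/2$, times $2/(M(M-1))$, equals $1$); and concavity of the sequence should be checked by noting that each $v_j$ is piecewise-linear in the index $m$ with at most one ``kink'' at $m=j$, where the slope decreases (from $p_j>0$ to $-q_j<0$ for the middle ones, and with the boundary-zero entries forcing a downward kink at the endpoints) — so the second difference is $\le 0$ everywhere.

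Next I would count active constraints at each $v_j$ to confirm it is a vertex, and then argue no other vertices exist. The key structural observation is that a concave nonnegative sequence on $\{1,\dots,M\}$ that is not identically a scaled ``tent'' must have strictly positive second differences nowhere; so the concavity constraints that are \emph{not} active correspond to the indices where the sequence has a genuine kink. A vertex needs only one or zero strict kinks: if $w$ is concave, nonnegative, sums to $1$, and has two or more indices $m$ with $2w_m > w_{m-1}+w_{m+1}$, then one can construct a nontrivial perturbation $w \pm \epsilon d$ staying in $\cC^{\wedge}$ (the perturbation direction $d$ chosen to sum to zero, supported so as to preserve nonnegativity, and small enough to keep the two slack concavity constraints slack), contradicting extremality. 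This forces every vertex to be piecewise-linear with a single kink location $j$, which — combined with the boundary behavior dictated by which $w_m=0$ constraints are active (either $w_1=0$, or $w_M=0$, or both, since a concave sequence that is positive at both endpoints and has one interior kink still has a $2$-dimensional family) — pins down $w$ up to the normalization, yielding exactly the listed $v_1$ (kink at the left, $w_M = 0$ forced... wait, rather increasing then the boundary), $v_M$, and the $v_j$ for $2\le j\le M-1$. The normalization constant is then recovered by the same arithmetic-series evaluation as in the membership check, giving the stated $p_j, q_j$.

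The main obstacle will be the second half: making the ``no other vertices'' argument fully rigorous, i.e. precisely characterizing, via the active-set/rank condition, which combinations of active $w_m \ge 0$ constraints and active concavity constraints can simultaneously hold with full rank $M-1$. One has to handle the interaction between the two families of constraints carefully — e.g. ruling out vertices with an interior zero entry (a concave nonnegative sequence with $w_m=0$ for some $1<m<M$ must be zero on one whole side, which then over-constrains things given the sum-to-one equality), and correctly accounting for why exactly one kink (not zero) is needed when neither endpoint-zero constraint would otherwise give enough active constraints. I would organize this as: (i) show any vertex has support that is an ``interval'' $\{a, a+1, \dots, b\}$ with $w$ affine on it except at $\le 1$ interior kink; (ii) a case split on whether $a=1$, $b=M$, or both, determining the kink location; (iii) in each case solve the resulting small linear system for $w$ and normalize. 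The arithmetic to extract $p_j$ and $q_j$ is routine (summation of two arithmetic progressions on the two linear pieces, set equal to $1$), so I would not belabor it.
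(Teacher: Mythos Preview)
Your approach is correct and essentially the same as the paper's: both hinge on the active-set characterization of a vertex (namely, $M-1$ independent active inequality constraints together with the equality) and then a case split on the resulting structure. The paper organizes the split slightly differently --- it first observes that since the concavity group $(g2)$ contains only $M-2$ constraints, any vertex must have at least one coordinate equal to zero, and concavity forces at most two zeros; the split is then on whether there are one zero (forcing all of $(g2)$ active, hence $\bar w$ affine, giving $v_1$ or $v_M$) or two zeros (forcing $\bar w_1=\bar w_M=0$ and $M-3$ of $(g2)$ active, hence a two-piece tent, giving $v_2,\dots,v_{M-1}$) --- which sidesteps your perturbation argument for ``at most one kink'' and makes the case analysis a touch cleaner.
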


Let $V$ be the matrix whose columns are $v_1,\ldots,v_M$. 
For any given $g\in \R^M$, the LP oracle
$\min_{w\in \cC^{\wedge}}  \la g,w \ra $ 
amounts to solving the following problem 
\begin{equation}
\min_y ~ \la V^{\T} g, y \ra  ~~~ {\rm s.t.} ~~~ 1_M^\T y = 1, ~ y \ge 0 \nonumber
\end{equation}
which is equivalent to finding the minimum entry in the vector $V^{\T} g$. A careful analysis shows that computing $V^{\T} g$ only requires $O(M)$ operations.

\subsubsection{Convexity constraint}
Similar to Section~\ref{sec:concavity-constraint} (concavity constraint), 
we consider the case when $\cC$ in~\eqref{problem1} corresponds to a convex sequence, resulting in a convex density estimate. Here, in Problem~\eqref{problem1}, we take $\cC=\cC^{\vee}$ where, 
\begin{equation}\label{C: convexity}
\cC^{\vee} = \lt\{    w \in \R^M :~ 1_M^\T w = 1, ~w \ge 0, ~   w \text{ is convex} \rt\}. 
\end{equation} 
For any $k\ge 0$, let $0_k$ denote a vector in $\R^k$ with all coordinates being $0$. The following proposition describes the vertices of $\cC^{{\vee}}$. 
\begin{proposition}\label{proposition: vectices concavity}
(Convex)	$\cC^{\vee} $ has $2M$ vertices, which are given by
	\begin{equation}
	\cup_{k=1}^M \lt\{    a_k^{-1}  (0_{M-k}^\T, 1, 2, \dots , k)    ~,~   a_k^{-1} (k, k-1,  \dots , 2,1, 0_{M-k}^\T)   \rt\} \nonumber
	\end{equation}
	where $a_k = k(k+1)/2$ for all $k\in [M]$.
\end{proposition}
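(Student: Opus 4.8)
\textbf{Proof proposal for Proposition~\ref{proposition: vectices concavity} (vertices of $\cC^{\vee}$).}

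The plan is to characterize the extreme points of $\cC^{\vee}$ directly from the linear description. The set $\cC^{\vee}$ is cut out by the equality $1_M^\T w = 1$, the nonnegativity constraints $w_m \ge 0$ for $m \in [M]$, and the convexity inequalities $w_{m-1} - 2w_m + w_{m+1} \ge 0$ for $2 \le m \le M-1$; so $\cC^{\vee}$ is a polytope of dimension $M-1$ lying in the hyperplane $\{1_M^\T w = 1\}$. A point $w$ is a vertex exactly when the active constraints among these span a space of dimension $M-1$ (together with the equality constraint, they pin down $w$ uniquely). First I would set up the second-difference operator $D\colon\R^M\to\R^{M-2}$ with $(Dw)_m = w_{m-1} - 2w_m + w_{m+1}$, and analyze which configurations of active constraints are feasible.

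The key structural observation is that for a convex \emph{and} nonnegative sequence on $[M]$, if any ``interior'' convexity constraint is slack the sequence behaves like a discrete convex function, which on a finite interval is determined by its boundary behavior; the extreme points arise when $w$ is piecewise-affine with at most one ``kink,'' i.e., $w$ is affine (zero second difference) on an index range and identically zero outside. Concretely, I would argue that at a vertex, the sequence must vanish on a contiguous block of indices (the active nonnegativity constraints form an interval, because a nonnegative convex sequence that is zero at two indices is zero in between), and on the complementary block the second difference is zero (all convexity constraints there active), forcing an affine sequence there. An affine nonnegative sequence that is zero at one endpoint of a length-$k$ block is a scalar multiple of $(1,2,\dots,k)$ or its reverse; normalizing by $1_M^\T w = 1$ gives the factor $a_k^{-1}$ with $a_k = \sum_{i=1}^k i = k(k+1)/2$. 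This yields precisely the $2M$ candidate points listed (for each $k \in [M]$, the increasing ramp $a_k^{-1}(0_{M-k}^\T, 1, \dots, k)$ and the decreasing ramp $a_k^{-1}(k, \dots, 1, 0_{M-k}^\T)$; note $k=M$ gives the two ``full ramps'' with no zero block, and these two coincide only degenerately).

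I would then verify two things. First, each listed point is feasible: nonnegativity and the normalization are immediate, and convexity holds since an affine piece followed by a constant-zero piece has nonnegative second differences at the junction (the second difference at the kink index equals the positive slope of the ramp). Second, each listed point has enough active constraints to be a vertex: with a zero block of size $M-k$ and an affine block of size $k$, we get $M-k$ active nonnegativity constraints plus $(k-2)$ active interior convexity constraints within the affine block plus (for $k<M$) additional active convexity/nonnegativity constraints at the junction — I would do the bookkeeping to confirm the active set has rank $M-1$. Conversely, to show there are no other vertices, I would take an arbitrary vertex $w$, let $Z = \{m : w_m = 0\}$ and show $Z$ is an interval (using nonnegativity + convexity as above), show the convexity constraints strictly inside $[M]\setminus Z$ must all be tight (otherwise one can perturb $w$ along a direction in the kernel of the active constraints, contradicting extremality), and conclude $w$ is one of the ramps.

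\textbf{Main obstacle.} The delicate part is the converse/completeness direction — proving the enumeration is exhaustive and that each listed point is genuinely extreme (not just feasible). This requires careful handling of the rank of the active-constraint system: the convexity constraints near the boundary of the index set ($m=2$ and $m=M-1$) interact with the nonnegativity constraints, and one must be sure that ``affine on the support, zero off the support'' with a single kink is forced, ruling out configurations with two separated affine pieces or a zero block not touching an endpoint. A clean way to close this is an explicit perturbation argument: given any feasible $w$ that is not one of the listed ramps, exhibit a nonzero $d$ with $1_M^\T d = 0$ such that $w \pm \epsilon d \in \cC^{\vee}$ for small $\epsilon>0$, built from the slack structure (a slack interior convexity constraint or a non-interval zero set yields such a $d$). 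Everything else is routine: feasibility checks and the arithmetic $a_k = k(k+1)/2$.
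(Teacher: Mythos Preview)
Your approach is correct and essentially the same as the paper's: both partition the inequality constraints into the nonnegativity group and the convexity (second-difference) group, use the $M-1$ active-constraint count to force at least one zero coordinate, observe that the zero set of a nonnegative convex sequence is a contiguous interval, argue this interval must touch an endpoint of $[M]$, and then count active convexity constraints on the complement to force affinity there. The only minor difference is that for the step ``the zero block touches an endpoint'' the paper gives a direct independence-counting argument (an interior zero block $[i_1,i_2]$ with $1<i_1\le i_2<M$ leaves at most $(i_2-i_1+1)+(i_1-2)+(M-i_2-1)=M-2$ independent active inequalities), whereas you flag this as the delicate point and propose closing it either by the same bookkeeping or by an explicit perturbation; these are equivalent and equally clean here.
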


By Proposition~\ref{proposition: vectices concavity}, solving the LP oracle~\eqref{LP:oracle} with $\cC = \cC^{\vee}$ amounts to computing
\begin{equation}
\min \Big\{     \min_{1\le k \le M} \Big\{ a_k^{-1} \sum_{i=1}^k i g_{i+M-k} \Big\}, ~   
\min_{1\le k \le M} \Big\{ a_k^{-1} \sum_{i=1}^k i g_{k+1-i} \Big\}
   \Big\}  \ . \nonumber
\end{equation}
With a careful cost accounting, the above can be computed within $O(M)$ operations.

\subsubsection{Concavity and monotonicity restriction}
We consider the cases when $\cC$ corresponds to a combination of concavity and  monotonity constraints:
\begin{eqnarray}\label{C: concave and monotone}
\cC^{\wedge +} = \lt\{    w \in \R^M :~ 1_M^\T w = 1, ~w \ge 0, ~   w \text{ is increasing and concave} \rt\}  ,  \nonumber\\
\cC^{\wedge  -} = \lt\{    w \in \R^M :~ 1_M^\T w = 1, ~w \ge 0, ~   w \text{ is decreasing and concave} \rt\}  .  \nonumber
\end{eqnarray}
Following our discussion in Section~\ref{sec: shape-constraints}, if we set $\cC=\cC^{\wedge +}$ in~\eqref{problem1} and use Bernstein polynomial bases, we obtain a maximum likelihood density estimate that is concave and increasing. Similarly, $\cC^{\wedge -}$ leads to a density that is concave and decreasing. 
Propositions~\ref{proposition: concave increasing restriction} and~\ref{proposition: concave decreasing restriction} below describe the set of vertices of $\cC^{\wedge +}$ and $\cC^{\wedge -}$, respectively.

	\begin{proposition}\label{proposition: concave increasing restriction} (Concave and increasing)
		$\cC^{\wedge  +}$ has $M$ vertices, which are given by 
		\begin{equation}
		\begin{aligned}
		v_1 &=M^{-1} 1_M, ~~ v_M = r_M (0,1,2,...,M-1)^\T , \\
		v_i &= r_i \lt(0,1,...,i-2,   i-1, (i-1)1_{M-i}^\T \rt)^\T, \quad 2\le i \le M-1 , 
		\end{aligned}
		\nonumber
		\end{equation}
		where $r_i = 2/((2M-i)(i-1))$ for $2\le i \le M$.

	\end{proposition}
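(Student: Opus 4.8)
The plan is to characterize the extreme points of the polyhedron $\cC^{\wedge+}$ by combining the three families of constraints that define it --- the simplex equality $1_M^\T w = 1$, the increasing constraints $w_1 \le w_2 \le \cdots \le w_M$, and the concavity constraints $w_{m-1} - 2w_m + w_{m+1} \le 0$ for $2 \le m \le M-1$ --- and counting active inequalities. First I would observe that any extreme point of an $M$-dimensional polyhedron lying in the hyperplane $1_M^\T w = 1$ must have at least $M-1$ linearly independent facet inequalities active. The second-difference sequence $d_m := w_{m+1} - w_m$ for $m \in [M-1]$ is a convenient reparametrization: the increasing constraint says $d_m \ge 0$, the concavity constraint says $d_m - d_{m-1} \le 0$ (i.e., $d$ is itself a nonincreasing nonnegative sequence), and $w_M - w_1 = \sum_m d_m$. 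This transforms $\cC^{\wedge+}$ (up to the affine normalization) into the set of nonnegative nonincreasing sequences $d \in \R^{M-1}_+$ with $\sum_{m}(M-m)\,d_m$ fixed by the normalization (this weighted sum arises because $\sum_i w_i = 1$ expresses $w_1$ and hence the whole profile in terms of the $d_m$'s).

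Next I would enumerate the vertices. A nonnegative nonincreasing sequence $d$ is an extreme ray direction (before normalization) exactly when it is a ``staircase'' $d = c\,(1,1,\dots,1,0,\dots,0)^\T$ with the first $i-1$ entries equal and the rest zero, for $i \in \{2,\dots,M\}$, together with the degenerate case $d \equiv 0$ (which forces $w$ constant, giving $v_1 = M^{-1}1_M$). For the staircase with $d_1 = \cdots = d_{i-1} = c > 0$ and $d_i = \cdots = d_{M-1} = 0$, integrating back gives $w$ of the form (constant, then arithmetic progression of common difference $c$ on coordinates $1$ through $i$, then constant on coordinates $i$ through $M$) --- and imposing $w \ge 0$ at an extreme point forces the smallest entry $w_1 = 0$, so $w = c(0,1,\dots,i-2,i-1,(i-1),\dots,(i-1))^\T$. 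Then the normalization $1_M^\T w = 1$ pins down $c = r_i$; a direct summation of $0 + 1 + \cdots + (i-1) + (M-i)(i-1) = \binom{i}{2} + (M-i)(i-1) = (i-1)\big(\tfrac{i}{2} + M - i\big) = (i-1)(2M-i)/2$ yields $r_i = 2/((2M-i)(i-1))$, matching the statement; the endpoint cases $i = M$ (no trailing constant block, giving $v_M = r_M(0,1,\dots,M-1)^\T$) and $i=2$ need to be checked separately but fall out of the same formula, and $v_1$ is the $d\equiv 0$ case. This produces exactly $M$ candidate vertices.

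Finally I would verify that each candidate is genuinely a vertex (not merely a feasible point) and that the list is exhaustive. Vertex-ness: for each $v_i$ I count $M-1$ linearly independent active constraints among $\{w_1 \ge 0,\ w_m \le w_{m+1},\ \text{concavity}\}$ --- for $v_i$ with $2 \le i \le M-1$ these are $w_1 = 0$, the $M-i$ equalities $w_i = w_{i+1} = \cdots = w_M$ (these are active ``increasing'' constraints), and the $i-2$ concavity equalities on the arithmetic-progression block --- then check linear independence of the corresponding gradient vectors, which is a routine rank computation on a sparse, essentially bidiagonal matrix. Exhaustiveness: conversely, if $w$ is any extreme point, the active-set argument forces the second-difference sequence $d$ to be a scalar multiple of a staircase and forces $w_1 = 0$ unless $d \equiv 0$, so $w$ must be one of the listed $v_i$. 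The main obstacle I anticipate is the bookkeeping in the linear-independence check at the boundary indices $i \in \{2, M-1, M\}$, where the structure of the active set degenerates slightly (e.g., empty arithmetic block or empty constant block) and one must confirm the count $M-1$ still holds; everything else is routine reparametrization and summation.
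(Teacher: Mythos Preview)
Your argument is correct and reaches the same conclusion, but via a genuinely different route from the paper. The paper first observes that under concavity the single inequality $w_{M-1}\le w_M$ already implies the full increasing chain, so it works with the compact constraint system $(g1)$ $w_i\ge 0$, $(g2)$ concavity, $(g3)$ $w_{M-1}\le w_M$; it then shows any vertex must have $\bar w_i>0$ for $i\ge 2$, and does a direct case split on whether $\bar w_1>0$ and whether $(g3)$ is tight. You instead pass to the first-difference coordinates $d_m=w_{m+1}-w_m$, recognize that the feasible $d$'s form (after the affine normalization) a simplex whose extreme directions are the staircase vectors of the nonnegative--nonincreasing cone, and read off the vertices from there. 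Your route is arguably more conceptual and immediately explains the count $M$ (one staircase length, plus the degenerate $d\equiv 0$), and it transfers cleanly to the other shape polytopes in the paper; the paper's route is shorter and avoids the change of variables.

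Two small points to tighten. First, in your opening enumeration of the defining constraints you omit $w\ge 0$; you do invoke it later (``imposing $w\ge 0$ at an extreme point forces $w_1=0$''), but you should list it from the start, since without it the set is unbounded and the vertex count makes no sense. Second, what you call the ``second-difference sequence'' in the exhaustiveness paragraph is your first-difference $d$; the second differences are what encode concavity. Finally, make the ``forces $w_1=0$'' step explicit: a nontrivial staircase in $d$ ties down exactly $M-2$ of the $M-1$ chain inequalities $d_1\ge d_2\ge\cdots\ge d_{M-1}\ge 0$, so for a vertex one more inequality must bind, and the only remaining non-redundant one is $w_1\ge 0$. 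With those clarifications your proof goes through without difficulty.
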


Similar to Proposition~\ref{proposition: concave increasing restriction}, we present the corresponding proposition for $\cC^{\wedge -}$, whose proof is omitted due to brevity. 
	\begin{proposition}\label{proposition: concave decreasing restriction} (Concave and decreasing)
	$\cC^{\wedge -}$ has $M$ vertices, which are given by 
	\begin{equation}
	\begin{aligned}
	v_1 &=M^{-1} 1_M, ~~ v_M = r_M (M-1, M-2, ...., 1,0)^\T , \\
	v_i &= r_i \lt( (i-1)1_{M-i}^\T, i-1, i-2, ..., 1,0 \rt)^\T \quad 2\le i \le M-1 , 
	\end{aligned}
	\nonumber
	\end{equation}
	where $r_i = 2/((2M-i)(i-1))$ for $2\le i \le M$. 
\end{proposition}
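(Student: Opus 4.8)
The plan is to mirror the proof strategy used for Proposition~\ref{proposition: concave increasing restriction} (concave and increasing), since $\cC^{\wedge -}$ is obtained from $\cC^{\wedge +}$ by reversing the coordinate order via the reversal matrix $J$ with $J_{i,M+1-i}=1$. First I would record the linear-algebraic observation that $w \in \cC^{\wedge -}$ if and only if $Jw \in \cC^{\wedge +}$: the reversal map preserves the simplex constraint $1_M^\T w=1$ and nonnegativity, it interchanges ``increasing'' and ``decreasing,'' and it preserves concavity (since concavity of a sequence is symmetric under index reversal: $2w_m \ge w_{m-1}+w_{m+1}$ becomes the same inequality for the reversed indices). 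Hence $J$ is a bijection between the two polytopes, and in particular it maps the vertex set of $\cC^{\wedge +}$ bijectively onto the vertex set of $\cC^{\wedge -}$, because an invertible linear map carries extreme points to extreme points.

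Next I would apply this bijection to the explicit vertices from Proposition~\ref{proposition: concave increasing restriction}. Reversing $v_1 = M^{-1}1_M$ gives $M^{-1}1_M$ again; reversing $v_M = r_M(0,1,2,\dots,M-1)^\T$ gives $r_M(M-1,M-2,\dots,1,0)^\T$; and reversing $v_i = r_i(0,1,\dots,i-2,i-1,(i-1)1_{M-i}^\T)^\T$ gives $r_i((i-1)1_{M-i}^\T, i-1, i-2, \dots, 1, 0)^\T$ with the same coefficient $r_i = 2/((2M-i)(i-1))$, exactly the claimed list. This shows the stated vectors are precisely the images under $J$ of the vertices of $\cC^{\wedge +}$, so they are the vertices of $\cC^{\wedge -}$, and there are $M$ of them. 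As a sanity check one verifies each listed vector satisfies $1_M^\T v_i = 1$ (for the generic $v_i$ the sum is $r_i\big[(i-1)(M-i) + (1+2+\cdots+(i-1))\big] = r_i\big[(i-1)(M-i) + (i-1)i/2\big] = r_i(i-1)(2M-i)/2 = 1$), is nonnegative, and is decreasing and concave.

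The only real obstacle is justifying that the map $J$ genuinely induces a bijection on vertex sets and that concavity is preserved under reversal — but both are elementary: invertible affine maps send polytopes to polytopes and extreme points to extreme points, and the concavity inequalities are invariant under relabeling $m \mapsto M+1-m$. So this Proposition follows directly from Proposition~\ref{proposition: concave increasing restriction} together with a coordinate-reversal symmetry argument; no new case analysis of candidate vertices is needed, which is presumably why the authors omitted the proof ``due to brevity.'' If one preferred a self-contained argument not relying on Proposition~\ref{proposition: concave increasing restriction}, one would instead directly characterize the extreme points of $\cC^{\wedge -}$ by noting that a decreasing concave sequence on the simplex is determined by where its ``support'' and ``linear then constant'' structure transitions occur, and that extremality forces the sequence to be affine on an initial segment $\{1,\dots,i\}$ (taking values $(i-1), (i-2), \dots, 1, 0$ up to scaling, or constant $1_M$) and zero afterward; the normalization constant is then forced by $1_M^\T v = 1$, yielding $r_i$. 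I would present the short symmetry proof as the main argument and leave the direct verification as the routine check described above.
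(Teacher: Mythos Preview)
Your main argument via the coordinate-reversal map $J$ is correct and clean: $J$ is a linear bijection between $\cC^{\wedge -}$ and $\cC^{\wedge +}$ (it preserves the simplex and nonnegativity constraints, swaps monotonicity direction, and preserves discrete concavity), hence it carries extreme points to extreme points; applying $J$ to the vertex list of Proposition~\ref{proposition: concave increasing restriction} yields exactly the claimed list, and your normalization check is right.

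The paper does not give a proof of this proposition at all --- it simply says the proof is omitted ``due to brevity'' and points to the analogy with Proposition~\ref{proposition: concave increasing restriction}. The route the paper implicitly has in mind is therefore a direct active-constraint case analysis parallel to the proof of Proposition~\ref{proposition: concave increasing restriction} (rewrite the decreasing constraint as the single inequality $w_1\ge w_2$, split the inequalities into groups, and count active constraints at a vertex). Your symmetry argument is a genuinely different and more economical approach: it reduces the statement to one already proved rather than repeating the case analysis, at the cost of depending on Proposition~\ref{proposition: concave increasing restriction}.

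One small slip in your closing ``self-contained'' sketch: the shape of the vertices of $\cC^{\wedge -}$ is \emph{constant on an initial block and then affine decreasing to $0$ on a terminal block}, not ``affine on an initial segment and zero afterward'' as you wrote. This does not affect your main proof, since that proof does not use this description; but if you were to expand the direct argument, the correct picture is that any decreasing concave nonnegative sequence with $w_M=0$ and enough active constraints must be piecewise linear with a leading constant piece followed by a linear piece ending at $0$ (mirroring the ``linear then constant'' structure in the increasing case).
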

In light of Propositions~\ref{proposition: concave increasing restriction} and \ref{proposition: concave decreasing restriction}, we can efficiently compute the corresponding LP oracles~\eqref{LP:oracle} with $\cC=\cC^{\wedge +}$ and $\cC=\cC^{\wedge -}$.

\subsubsection{Convexity and monotonicity restriction}
We consider the cases when $\cC$ corresponds to a combination of convexity and  monotonity constraints. Here, the constraint set $\cC$ is one of the following:
\begin{eqnarray}\label{C: convex and monotone}
\cC^{\vee+} = \lt\{    w \in \R^M :~ 1_M^\T w = 1, ~w \ge 0, ~   w \text{ is convex and increasing}\rt\}  ,  \nonumber\\
\cC^{\vee -} = \lt\{    w \in \R^M :~ 1_M^\T w = 1, ~w \ge 0, ~   w \text{ is convex and decreasing} \rt\}  .  \nonumber
\end{eqnarray}
The following proposition describes the set of vertices of $\cC^{\vee +}$. 

	\begin{proposition}\label{proposition: convex increasing restriction} (Convex and increasing)
	$\cC^{\vee +}$ has $M$ vertices, which are given by 
	\begin{eqnarray}
	v_i = \frac{2}{i(i+1)} \lt( 0^\T_{M-i}, 1,2,..., i \rt)^\T,~1\le i \le M-1;~~\text{and}~~ v_M = \frac{1}{M} 1_M \ .  \nonumber
	\end{eqnarray}
\end{proposition}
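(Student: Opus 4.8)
The plan is to characterize the vertices of the polytope $\cC^{\vee+}$ by analyzing the structure of a convex, increasing, nonnegative sequence that sums to $1$. First I would set up the face structure: a point $w \in \cC^{\vee+}$ satisfies the equality $1_M^\T w = 1$ together with the inequality constraints $w_1 \ge 0$, $w_{m+1} \ge w_m$ for $m \in [M-1]$ (monotonicity), and $w_{m+1} - w_m \ge w_m - w_{m-1}$ for $2 \le m \le M-1$ (convexity of the sequence). Since $\cC^{\vee+}$ lies in the hyperplane $\{1_M^\T w = 1\}$, which is $(M-1)$-dimensional, a vertex must be the unique point where at least $M-1$ linearly independent inequality constraints are active (in addition to the equality). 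I would introduce the second-difference variables $d_m := w_{m+1} - w_m \ge 0$ for $m \in [M-1]$, so that monotonicity becomes $d_m \ge 0$ and convexity becomes $d_m$ nondecreasing in $m$; thus the "slack" directions are governed by $e_m := d_m - d_{m-1} \ge 0$ (with $d_0 := 0$, and also $w_1 \ge 0$).

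Next I would enumerate which constraints can be tight at a vertex. At a vertex, all but at most one of the $M$ quantities $\{w_1\} \cup \{e_m : 1 \le m \le M-1\}$ must vanish (a counting/dimension argument: we need $M-1$ active independent constraints among these $M$, since they are in "triangular" position with respect to $w$ and hence independent as a group). If $w_1 = 0$ and $e_m = 0$ for all $m \ne i$ (some fixed $i$), then $d_1 = \cdots = d_{i-1} = 0$ and $d_i = d_{i+1} = \cdots = d_{M-1} = c$ for some $c \ge 0$; this forces $w_1 = \cdots = w_{M-i} = 0$ and $w_{M-i+\ell} = \ell c$ for $\ell = 0,1,\dots,i$... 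I would track the indexing carefully so that the nonzero tail has the form $(0,\dots,0,c,2c,\dots,ic)$ of length $M$. The normalization $1_M^\T w = 1$ gives $c \sum_{\ell=1}^{i} \ell = c\,i(i+1)/2 = 1$, hence $c = 2/(i(i+1))$, recovering $v_i$ for $1 \le i \le M-1$. The remaining case is $w_1 > 0$ with all $e_m = 0$, which forces $d_1 = \cdots = d_{M-1} = c$; but then additionally we would need one more active constraint, forcing $c = 0$, giving the constant sequence $w = \tfrac1M 1_M = v_M$. Conversely, I would verify each listed $v_i$ genuinely lies in $\cC^{\vee+}$ (nonnegative, increasing, convex, sums to one) and is a vertex by exhibiting a linear functional uniquely minimized there, or by checking that the $M-1$ active constraints at $v_i$ are linearly independent.

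The main obstacle I anticipate is the bookkeeping in the dimension/independence argument: one must argue cleanly that the only candidate active sets yielding a zero-dimensional face are exactly the $M$ described above, i.e., ruling out configurations where two or more of the $e_m$ are simultaneously "forced" in a way that over-determines the system inconsistently, and correctly handling the boundary interplay between the $w_1 \ge 0$ constraint and the convexity constraints. A secondary subtlety is confirming there are no additional vertices arising from active sets that mix $w_1 = 0$ with a tight constraint whose removal does not increase the dimension — here the triangular structure of the map from the differences $(d_1,\dots,d_{M-1})$ back to $w$ (which is invertible given $w_1$) is the key fact that keeps the active constraints independent and the count exact. Once that combinatorial skeleton is pinned down, the normalization computations are the routine $i(i+1)/2$ evaluations already displayed in the statement.
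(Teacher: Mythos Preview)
Your proposal is correct and follows essentially the same strategy as the paper: reduce the description of $\cC^{\vee+}$ to a non-redundant family of inequalities, then identify vertices by counting active constraints. The paper keeps the constraints $w \ge 0$, convexity, and the single monotonicity inequality $w_1 \le w_2$ (observing that under convexity this one inequality implies full monotonicity), and then does a short case analysis on whether $\bar w_1 > 0$ and whether $\bar w_1 = \bar w_2$. Your change of variables to $(w_1, e_1, \ldots, e_{M-1})$ accomplishes the same reduction but more explicitly: because the triangular map back to $w$ is invertible, $\cC^{\vee+}$ becomes a simplex in these coordinates, and the vertices are exactly the points with a single nonzero coordinate --- so the ``bookkeeping obstacle'' you anticipate essentially disappears. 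One small wrinkle: your treatment of the $v_M$ case is slightly garbled, since with your convention $d_0 = 0$ one has $e_1 = d_1$, so ``all $e_m = 0$'' already forces every $d_m = 0$ directly and no extra active constraint is needed to conclude $c = 0$.
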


Similarly, we have the following proposition for $\cC^{\vee -}$, whose proof is omitted. 
	\begin{proposition}\label{proposition: convex decreasing restriction} (Convex and decreasing)
	$\cC^{\vee -}$ has $M$ vertices, which are given by 
	\begin{eqnarray}
	v_i = 
		\frac{2}{i(i+1)} \lt( i, i-1, ..., 2,1,0^\T_{M-i} \rt)^\T, ~~ 1\le i \le M-1;~~\text{and}~~
	  v_M = \frac{1}{M} 1_M. \nonumber
	\end{eqnarray}
\end{proposition}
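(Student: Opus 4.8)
The plan is to deduce this proposition from Proposition~\ref{proposition: convex increasing restriction} via an order-reversing linear isomorphism, which is what makes the two statements ``similar'' in the precise sense needed to justify omitting the proof. Define $R:\R^M\to\R^M$ by $(Rw)_m = w_{M+1-m}$ for $m\in[M]$; this is the permutation matrix that reverses coordinate order, hence a linear involution (so $R^{-1}=R$) and in particular a linear isomorphism of $\R^M$. A linear isomorphism maps polytopes to polytopes and carries the vertex set of any polytope bijectively onto the vertex set of its image, so it suffices to understand $R(\cC^{\vee-})$.

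Next I would verify that $R(\cC^{\vee-}) = \cC^{\vee+}$. Nonnegativity of the coordinates is preserved by any coordinate permutation, and $1_M^\T(Rw) = 1_M^\T w$, so membership in the affine hyperplane $\{1_M^\T w = 1\}$ is preserved. For the shape conditions, note that for $2\le m\le M-1$ one has $(Rw)_{m-1}-2(Rw)_m+(Rw)_{m+1} = w_{M-m+2}-2w_{M-m+1}+w_{M-m}$, which is the second difference of $w$ at the index $M-m+1\in\{2,\dots,M-1\}$; as $m$ ranges over $\{2,\dots,M-1\}$ so does $M-m+1$, so $w$ is a convex sequence if and only if $Rw$ is. Likewise $(Rw)_m-(Rw)_{m+1} = w_{M-m+1}-w_{M-m}$, so $(Rw)_1\le\cdots\le(Rw)_M$ exactly when $w_1\ge\cdots\ge w_M$. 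Combining, $w\in\cC^{\vee-}$ if and only if $Rw\in\cC^{\vee+}$, i.e. $\cC^{\vee-}=R(\cC^{\vee+})$.

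Finally, the vertices of $\cC^{\vee-}$ are exactly the images under $R$ of the vertices of $\cC^{\vee+}$ listed in Proposition~\ref{proposition: convex increasing restriction}. Using $R(0_{M-i}^\T,1,2,\dots,i)^\T = (i,i-1,\dots,2,1,0_{M-i}^\T)^\T$ and $R(\tfrac1M 1_M)=\tfrac1M 1_M$, and that scalar multiplication commutes with $R$, we recover precisely the claimed list; in particular there are exactly $M$ vertices because $\cC^{\vee+}$ has exactly $M$. I do not expect a real obstacle here: the only point needing (routine) care is the index bookkeeping for the convexity equivalence under reversal, and checking that the reversal interchanges ``increasing'' with ``decreasing'' without affecting the other constraints. (Alternatively, one could mirror the direct argument for $\cC^{\vee+}$: show $\cC^{\vee-}$ is a polytope, check each candidate $v_i$ lies in $\cC^{\vee-}$, verify that at each $v_i$ the active constraints among $\{w_m\ge 0\}$, $\{w_{m-1}-2w_m+w_{m+1}\ge 0\}$, $\{w_m\ge w_{m+1}\}$ together with $1_M^\T w=1$ have full rank $M$, and then rule out other extreme points by counting; the reversal map is simply a shortcut that reuses that enumeration.)
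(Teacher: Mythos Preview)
Your proposal is correct, and it is essentially the approach the paper has in mind: the paper omits the proof entirely, remarking only that it follows similarly to Proposition~\ref{proposition: convex increasing restriction}, and your order-reversing involution $R$ is precisely the device that makes that ``similarly'' rigorous. There is nothing to add.
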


Using Propositions~\ref{proposition: convex increasing restriction}  and~\ref{proposition: convex decreasing restriction}, one can maintain the active vertices and solve the LP oracles, in a manner similar to our earlier discussions.

\subsubsection{Unimodality constraint}\label{sec:unimodal-update}
We consider the problem of learning a mixture of Bernstein polynomial bases elements, such that the resulting density $x \mapsto g_{B}(x)$ is unimodal. Recall that the density $x \mapsto g_{B}(x)$ supported on $[0,1]$ is unimodal if it is increasing on $[0,a]$ and then decreasing on $[a,1]$ with an unknown mode location at $a \in (0,1)$. 
If the sequence $m \mapsto w_m$ is unimodal\footnote{That is, $m \mapsto w_{m}$ is increasing at the beginning and then decreasing, with a maximum located somewhere in $\{1, \ldots, M\}$.}, then $x \mapsto g_{B}(x)$ will also be unimodal~\cite{carnicer1993shape,turnbull2014unimodal}. If we know the location of the mode in $m \mapsto w_{m}$, then the unimodal shape restriction can be represented via a convex polyhedral constraint on $w$. A unimodal shape restriction on $w$ where the modal location is unknown, can be represented by a union of polyhedral sets and is not a convex constraint. 

Consider sequences $\{w_{i}\}_{1}^{M}$ that have a mode (or maximum) at a pre-specified location $k \in [1,M]$. In this case, $\cC$ is given by the following convex polyhedron:
\begin{equation}\label{C: unimodal}
\cC^{um}_k = \{ w\in \R^M ~:~ 1^\T_M w = 1, ~ w\ge 0, ~w_{1} \le \cdots \le w_k , ~ w_k \ge \cdots \ge w_{M}\}. 
\end{equation}
To estimate $g_{B}(x)$ under a unimodality constraint on $w$ where the mode location is unknown, we solve Problem~\eqref{problem1} with $\cC = \cC_{k}^{um}$ for each $k \in [M]$. Then we choose the value of $k$ which minimizes the negative log-likelihood or equivalently, the objective $f(w)$ in~\eqref{problem1}. This leads to the MLE for the mixture density $g_{B}(x)$ under the assumption of unimodality. 

We now discuss how to solve the LP oracle when $w \in \cC^{um}_k$ for a fixed $k$. The following result lists all vertices of $\cC^{{um}}_k$. 
\begin{proposition}\label{proposition-unimodality-formula}
	$\cC^{um}_k$ has $k(M-k+1)$ vertices, which are given by: 
$v^{k_1,k_2}$ for $1\le k_1 \le k \le k_2\le M$, 
	where the $i$-th element in $v^{k_1,k_2} \in \R^M$ is: 
	$$v^{k_1,k_2}_i = \begin{cases}0 & \text{for $i<k_1$ or $i>k_2$} \\
	1/(k_2-k_1+1) & \text{for $k_1 \le i \le k_2$}. 
	\end{cases} $$
\end{proposition}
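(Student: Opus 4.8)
\textbf{Proof proposal for Proposition~\ref{proposition-unimodality-formula}.}

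The plan is to show that the polyhedron $\cC^{um}_k$ defined in~\eqref{C: unimodal} is the convex hull of exactly the $k(M-k+1)$ candidate points $v^{k_1,k_2}$, and that each of these is in fact a vertex (extreme point). Since $\cC^{um}_k$ is a bounded polyhedron (it is contained in the simplex $\Dt_M$), it equals the convex hull of its vertices, so it suffices to (i) verify each $v^{k_1,k_2}$ lies in $\cC^{um}_k$, (ii) show every point of $\cC^{um}_k$ is a convex combination of the $v^{k_1,k_2}$, and (iii) show none of the $v^{k_1,k_2}$ is a proper convex combination of the others.

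First I would check membership: for $1\le k_1 \le k \le k_2 \le M$, the vector $v^{k_1,k_2}$ is a uniform probability vector on the contiguous block $\{k_1,\dots,k_2\}$, so $1_M^\T v^{k_1,k_2}=1$ and $v^{k_1,k_2}\ge 0$; moreover its coordinates are $0$ up to index $k_1-1$, then constant $1/(k_2-k_1+1)$ through index $k_2$, then $0$ again, which is (weakly) increasing on $[1,k]$ and (weakly) decreasing on $[k,M]$ because $k_1\le k\le k_2$. Hence $v^{k_1,k_2}\in\cC^{um}_k$. For the decomposition in (ii), I would take an arbitrary $w\in\cC^{um}_k$ and peel off uniform-block pieces greedily: this is the standard ``layer-cake'' argument for unimodal sequences. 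Concretely, sort the distinct positive values of $w$; at each level, the super-level set $\{i: w_i \ge t\}$ of a unimodal sequence with mode at $k$ is a contiguous interval containing $k$, i.e. of the form $\{k_1,\dots,k_2\}$ with $k_1\le k\le k_2$; integrating $w = \int_0^{\max w_i} \mathbbm{1}\{w \ge t\}\, dt$ over the (finitely many) levels expresses $w$ as a nonnegative combination of indicator vectors of such intervals, and after normalizing each indicator to a probability vector (dividing by its block length and absorbing the factor into the coefficient) and using $1_M^\T w = 1$ to see the coefficients sum to $1$, we get $w$ as a convex combination of the $v^{k_1,k_2}$.

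For (iii), I would argue each $v^{k_1,k_2}$ is an extreme point by exhibiting $M$ linearly independent constraints of $\cC^{um}_k$ active at it (equivalently, by a direct perturbation argument): the active zero constraints $w_i = 0$ for $i\notin\{k_1,\dots,k_2\}$ give $M-(k_2-k_1+1)$ of them, the equalities $w_i = w_{i+1}$ forced along the block $\{k_1,\dots,k_2\}$ give $k_2-k_1$ more, and together with the normalization $1_M^\T w = 1$ this is $M-(k_2-k_1+1)+(k_2-k_1)+1 = M$ active constraints, which are readily checked to be linearly independent; hence $v^{k_1,k_2}$ is a vertex. Finally, counting the pairs $(k_1,k_2)$ with $1\le k_1\le k$ and $k\le k_2\le M$ gives $k\cdot(M-k+1)$ vertices. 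The main obstacle is step (ii): making the layer-cake decomposition fully rigorous requires care in handling ties among the $w_i$ (so that each super-level set is used with the correct multiplicity) and in verifying that the accumulated coefficients are nonnegative and sum to exactly one — though conceptually it is just the observation that super-level sets of a mode-$k$ unimodal sequence are precisely the intervals straddling $k$.
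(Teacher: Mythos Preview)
Your proof is correct and complete, but it follows a genuinely different route from the paper's argument. The paper does not use a layer-cake decomposition at all. Instead, it argues directly at the level of vertices: given a vertex $\bar w$ of $\cC^{um}_k$, unimodality forces its support to be a contiguous interval $\{k_1,\dots,k_2\}$ with $k_1\le k\le k_2$; the restriction $\bar u = (\bar w_{k_1},\dots,\bar w_{k_2})$ is then shown to be a vertex of the lower-dimensional polytope $\cC^{um}_{k-k_1+1}$ in $\R^{k_2-k_1+1}$ with strictly positive entries, and a short active-constraint count (essentially your step~(iii) specialized to the full-support case) forces $\bar u$ to be the uniform vector. Thus the paper reduces the general vertex to a trivial ``base case'' rather than decomposing a generic point.

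Your approach has the advantage of being constructive: the layer-cake argument in step~(ii) produces, for every $w\in\cC^{um}_k$, an explicit convex combination $w=\sum_j \lambda_j\,v^{k_1^{(j)},k_2^{(j)}}$, which is exactly the kind of vertex representation one maintains in the away-step Frank--Wolfe scheme of Section~\ref{section: using FW to solve  sub-problem}. The paper's approach, by contrast, sidesteps the bookkeeping you flag (ties, coefficient nonnegativity) entirely, at the cost of not yielding such a decomposition. Both arguments rely on the same two facts --- that the support of a mode-$k$ unimodal nonnegative vector is an interval straddling $k$, and that $M$ independent active constraints pin down a vertex --- but they assemble them in opposite orders.
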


Let $U$ be the matrix with columns being the vertices of $\cC^{um}_k$. For a given $g\in \R^M$, solving the linear oracle $\min_{w\in \cC_k^{um}} \la g , w \ra$ amounts to finding the minimum entry in the vector $U^\T g$. This requires $O(k(M-k))$ operations with a careful manipulation.

\section{Approximation guarantees for the Kiefer-Wolfowitz MLE}\label{sec: approximation-guarantees}
We  consider a special case of~\eqref{crit-our} when the $i$-th mixture component is a Gaussian density with unit variance and mean $\mu_{i}$  for $i \in [M]$. In this case, Problem~\eqref{crit-our} can be interpreted as a finite dimensional version of the original Keifer-Wolfowitz nonparametric MLE problem~\cite{kiefer1956consistency,koenker2014convex}:
\begin{equation}\label{infinite-dim-primal}
p^*:=\min_F ~~~-\sum_{i=1}^N \log \Big(    \int \varphi (X_i - \mu) d {\mathcal Q}(\mu)  \Big)
\end{equation}
where ${\mathcal Q}$ runs over all mixing distributions (of the location parameter, $\mu$) on $\R$.
Note that a solution to~\eqref{infinite-dim-primal}
admits a finite dimensional solution~\cite{koenker2014convex}, which  
can be described via a dual formulation of~\eqref{infinite-dim-primal}, as presented below.
\begin{theorem}\label{theorem: infinite duality}
[Theorem 2 of \cite{koenker2014convex}]
{\rm \textit{(Duality of~\eqref{infinite-dim-primal})}}
Problem~\eqref{infinite-dim-primal} has a minimizer ${\mathcal Q}^*$, which is an atomic probability measure with at most $N$ atoms. The locations, $\{\mu_j^*\}$, and the masses, $\{w_j^*\}$, at these $N$ locations can be found via the following dual characterization:
\begin{equation}\label{infinite-constraints-dual}
\max_{\nu\in \R^N_{++}}~   \sum_{i=1}^N  \log \nu_i ~~~\text{s.t.} ~~~
\sum_{i=1}^N \nu_i \varphi (X_i - \mu ) \le N ~~ \forall \mu\in \R.  
\end{equation}
The solution $\nu^*$ of~\eqref{infinite-constraints-dual} 
satisfies the extremal equations $\sum_{j} \varphi (X_i - \mu_j^* ) w_j^* = 1/{\nu_i^*}$, $i\in [N]$, 
and $\mu^*_j$'s are exactly those values of $\mu$ at which the dual constraint is active.
\end{theorem}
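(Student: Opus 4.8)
This statement is quoted verbatim from \cite{koenker2014convex}, so in the paper its proof reduces to a pointer to that reference; I outline here how one would derive it from first principles. The underlying object is the infinite-dimensional convex program \eqref{infinite-dim-primal} over the set $\cP(\R)$ of probability measures on $\R$. The key step is to pass to the finite-dimensional ``likelihood map'' $\Phi:\R\rightarrow\R^N$ defined by $\Phi(\mu):=(\varphi(X_1-\mu),\dots,\varphi(X_N-\mu))$, and to observe that the set of attainable likelihood vectors is $G:=\{(g_1(\cQ),\dots,g_N(\cQ)):\cQ\in\cP(\R)\}=\conv(\Phi(\R))\subseteq\R^N$, so that \eqref{infinite-dim-primal} becomes the finite-dimensional problem $\min_{t\in G}\,-\sum_{i=1}^N\log t_i$. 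The main ingredients are: compactness of $\overline{\Phi(\R)}=\Phi(\R)\cup\{0\}$ (since $\Phi(\mu)\rightarrow 0$ as $|\mu|\rightarrow\infty$, and $\varphi>0$ precludes $0\in\Phi(\R)$); Carath\'eodory's theorem; an elementary first-order / Lagrangian argument; and real-analyticity of $\mu\mapsto\sum_i\nu_i\varphi(X_i-\mu)$.

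\emph{Existence of a minimizer.} The set $K:=\conv(\overline{\Phi(\R)})$ is compact, and the extended-real objective $t\mapsto-\sum_i\log t_i$ (set to $+\infty$ whenever some $t_i=0$) is lower semicontinuous on $K$, hence attains its infimum on $K$ at a point $t^*$ with all coordinates positive. Since $K=\conv(G\cup\{0\})$ and the objective strictly decreases when all coordinates are scaled up, the minimizer over $K$ can be taken in $G$; thus $t^*=(g_1(\cQ^*),\dots,g_N(\cQ^*))$ for some $\cQ^*\in\cP(\R)$, and by Carath\'eodory $\cQ^*$ may be chosen atomic with finitely many atoms.

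\emph{Duality, extremal equations, and the atom count.} Because $t^*$ minimizes the differentiable convex function $F(t)=-\sum_i\log t_i$ over the convex set $G$, the first-order condition $\la\nabla F(t^*),t-t^*\ra\ge0$ for all $t\in G$ reads, with $\nu_i^*:=1/t_i^*>0$, that $\la\nu^*,t\ra\le\la\nu^*,t^*\ra=N$ for all $t\in G$; taking $t=\Phi(\mu)\in G$ gives dual feasibility $\sum_i\nu_i^*\varphi(X_i-\mu)\le N$ for all $\mu\in\R$. The dual objective at $\nu^*$ equals $\sum_i\log\nu_i^*=-\sum_i\log t_i^*=p^*$, and weak duality follows since for any dual-feasible $\nu$ and any $\cQ$, concavity of $\log$ gives $\sum_i\log(\nu_i g_i(\cQ))\le N\log\big(\tfrac{1}{N}\sum_i\nu_i g_i(\cQ)\big)$, while $\sum_i\nu_i g_i(\cQ)=\int\sum_i\nu_i\varphi(X_i-\mu)\,d\cQ(\mu)\le N$, so $\sum_i\log\nu_i\le-\sum_i\log g_i(\cQ)$; hence $\nu^*$ is dual optimal with no gap. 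Writing $t_i^*=\sum_j w_j^*\varphi(X_i-\mu_j^*)$ and $\nu_i^*=1/t_i^*$ yields the extremal equations. Complementary slackness ($N=\la\nu^*,t^*\ra=\sum_j w_j^*\la\nu^*,\Phi(\mu_j^*)\ra$ together with $\la\nu^*,\Phi(\mu_j^*)\ra\le N$) forces $\la\nu^*,\Phi(\mu_j^*)\ra=N$ whenever $w_j^*>0$, i.e.\ $\supp(\cQ^*)\subseteq A:=\{\mu:\sum_i\nu_i^*\varphi(X_i-\mu)=N\}$; since $\mu\mapsto\sum_i\nu_i^*\varphi(X_i-\mu)$ is real-analytic, strictly positive, and $\rightarrow0$ as $|\mu|\rightarrow\infty$, it is non-constant and bounded below $N$ outside a compact set, so its level set $A$ at height $N$ is finite (and nonempty, since it contains $\supp(\cQ^*)$). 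Finally, every $\Phi(\mu)$ with $\mu\in A$ lies in the $(N-1)$-dimensional hyperplane $\{t:\la\nu^*,t\ra=N\}$, and $t^*$ lies in the convex hull of $\{\Phi(\mu):\mu\in A\}$ inside it; Carath\'eodory within that hyperplane then gives $t^*=\sum_{j=1}^N w_j^*\Phi(\mu_j^*)$, i.e.\ $\cQ^*$ has at most $N$ atoms, all located in $A$.

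\emph{Main obstacle.} The delicate points are the rigorous treatment of the infinite-dimensional features: justifying that the first-order / normal-cone condition at $t^*$ transfers to a statement valid over all of $\cP(\R)$ (the passage to $G=\conv(\Phi(\R))$ makes this routine, but one must be careful that $G$ need not be closed), and closing the duality gap; and pinning down the precise sense of ``exactly those values of $\mu$ at which the dual constraint is active''---the inclusion $\supp(\cQ^*)\subseteq A$ is immediate from complementary slackness, and the masses are then recovered by solving the now-finite linear extremal system restricted to $A$. For a complete argument I would simply invoke \cite[Theorem 2]{koenker2014convex}.
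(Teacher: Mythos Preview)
Your proposal is correct and matches the paper's treatment: the paper does not prove this theorem at all but simply cites it from \cite{koenker2014convex}, exactly as you anticipate in your opening sentence. Your additional first-principles sketch (the likelihood-map reduction to a finite-dimensional convex program over $\conv(\Phi(\R))$, Carath\'eodory, the first-order/weak-duality argument, and analyticity to bound the number of atoms) is standard and sound, and goes well beyond what the paper itself provides.
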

The dual objective in~\eqref{infinite-constraints-dual} contains $N$ decision variables, but the constraints are infinite dimensional. While an optimal solution to~\eqref{infinite-dim-primal} contains $N$ atoms, these locations are not known; and it is not clear how to efficiently identify these locations without solving the infinite-dimensional problem. 
\cite{lindsay1983geometry,groeneboom2008support} present algorithms to approximate this infinite dimensional problem~\eqref{infinite-dim-primal}. In particular, \cite{lindsay1983geometry} proposes a vertex-directional method and provides bounds for the computed solution. 
We explore some properties of the commonly-used discretized version   of~\eqref{infinite-dim-primal} where ${\mathcal Q}$ is supported on a pre-specified discrete set of atoms between
$\min_{i} X_i$ and $\max_{i} X_i$---this version can be written in the form \eqref{crit-our} and solved by Algorithm~\ref{algorithm: Self-concordant Cubic-regularized Newton Method}. Specifically, we present bounds on how well a solution to the discretized problem approximates the infinite-dimensional counterpart in Propositions~\ref{prop: dicrete-dual} (in terms of optimal primal objective values) and~\ref{prop: error bound dual} (in terms of dual optimal solutions).

Note that if $\cG = \{\hat \mu_1,...,\hat \mu_M \} $ ($\hat \mu_1 < \cdots < \hat \mu_M$) denotes a pre-specified set of $M$ atoms, then the discrete version of \eqref{infinite-dim-primal} is 
\begin{equation}\label{discrete-primal}
\hat{p}_{\mathcal G} := \min_{w\in \R^M} ~ - \sum_{i=1}^N \log\Big(
\sum_{j=1}^M w_j \varphi (X_i - \hat \mu_j )
\Big) ~~~\text{s.t.} ~~~w \ge 0, ~ \sum_{i \in [M]} w_{i} = 1. 
\end{equation}
The following proposition presents a dual of~\eqref{discrete-primal}.
\begin{proposition}\label{prop: dicrete-dual}
		{\rm \textit{(Duality for finite dimensional approximation)}}
	The optimal objective of \eqref{discrete-primal} equals the optimal objective of the following dual problem:
	\begin{equation}\label{discrete-dual}
	\max_{\nu \in \R^N} ~     \sum_{i=1}^N \log \nu_i ~~~\text{s.t.}~~ \sum_{i=1}^N \nu_i \varphi (X_i- \hat \mu_j) \le N ~~ \forall j\in [M], ~ \nu \ge 0     \ . 
	\end{equation}
 Furthermore, an optimal solution $\hat w$ of \eqref{discrete-primal} and the optimal solution $\hat \nu$ of \eqref{discrete-dual} are linked by $ \sum_{j=1}^M \hat w_j \varphi (X_i - \hat \mu_j )= {1}/{\hat \nu_i}$ for all $i\in [N]$. 
	If for some $j\in [M]$ we have a strict inequality $\sum_{i=1}^N \nu_i\varphi(X_i - \hat \mu_j)<N$, then
	by complementary slackness, we have $\hat w_j = 0$. 
\end{proposition}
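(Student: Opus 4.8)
The plan is to establish this as a standard Lagrangian-duality statement for the finite-dimensional problem \eqref{discrete-primal}, which is a concrete instance of the convex program \eqref{crit-our}. First I would rewrite \eqref{discrete-primal} by introducing auxiliary variables $u_i := \sum_{j=1}^M w_j \varphi(X_i - \hat\mu_j)$ for $i \in [N]$, so the problem becomes $\min -\sum_i \log u_i$ subject to $u_i = \sum_j w_j \varphi(X_i-\hat\mu_j)$, $\sum_j w_j = 1$, $w \ge 0$. Attaching multipliers $\nu_i$ to the equality constraints $u_i = B_j^\T$-type linear equations (with a sign/scaling choice to be fixed along the way) and forming the Lagrangian, the $u$-minimization gives the term $\sum_i \log \nu_i$ plus a constant, while the $w$-minimization over the simplex forces the dual feasibility constraint $\sum_i \nu_i \varphi(X_i - \hat\mu_j) \le N$ for every $j \in [M]$ (the simplex constraint turns an equality-type stationarity condition into the inequality, with the normalization producing the right-hand side $N$). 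The nonnegativity $\nu \ge 0$ comes out of requiring the inner infimum over $u_i > 0$ of $-\log u_i + \nu_i u_i$ to be finite.

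Next I would verify strong duality: the objective is convex, the constraints are linear (affine), and Slater's condition holds because the uniform weight vector $w = \frac{1}{M}1_M$ lies in the relative interior of the simplex and makes every $u_i$ strictly positive (since the Gaussian densities are strictly positive), so there is no duality gap and the dual optimum is attained. This justifies the equality $\hat p_{\mathcal G} = $ optimal value of \eqref{discrete-dual}. The KKT stationarity condition in $u_i$ reads $-1/u_i + \nu_i = 0$, i.e. $\hat\nu_i = 1/\hat u_i = 1/\sum_j \hat w_j \varphi(X_i - \hat\mu_j)$, which is exactly the claimed linkage. Finally, complementary slackness between the primal variable $\hat w_j \ge 0$ and the dual inequality constraint indexed by $j$ gives: if $\sum_i \hat\nu_i \varphi(X_i - \hat\mu_j) < N$ strictly, then the associated multiplier (which is $\hat w_j$) must vanish, so $\hat w_j = 0$.

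The only mildly delicate points, rather than genuine obstacles, are (i) getting the bookkeeping of signs and scalings right so that the constant from the $u$-minimization ($\sum_i \log\nu_i + N$ or similar) and the factor $N$ in the dual constraint land as stated — this is purely a matter of carefully carrying the Lagrange multiplier attached to the normalization $\sum_j w_j = 1$ through the computation; and (ii) making sure the domain manipulation is valid, i.e. that at optimality each $\hat u_i > 0$ so the logarithm and the linkage $\hat\nu_i = 1/\hat u_i$ make sense, which again follows from strict positivity of the Gaussian kernel together with $\sum_j \hat w_j = 1$. I expect step (i) to be where essentially all the writing goes, but there is no conceptual difficulty: this is the finite-dimensional analogue of Theorem~\ref{theorem: infinite duality} and can be proved by the same Lagrangian argument restricted to finitely many constraints $j \in [M]$ instead of the continuum $\mu \in \R$.
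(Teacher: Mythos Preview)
Your approach is correct and complete. The paper does not actually supply a proof for this proposition: in the appendix section ``Proofs from Section~\ref{sec: approximation-guarantees}'' only Propositions~\ref{prop: optimality gap} and~\ref{prop: error bound dual} are proved, and Proposition~\ref{prop: dicrete-dual} is stated without argument, presumably because it is the routine finite-dimensional analogue of Theorem~\ref{theorem: infinite duality} (itself quoted from \cite{koenker2014convex}).

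Your Lagrangian derivation is the standard one and goes through exactly as you describe. The only point worth making explicit when you carry out the bookkeeping in (i) is how the constant $N$ appears on the right-hand side of the dual constraint: after minimizing the Lagrangian over $u$ and over $w\ge 0$ you obtain a dual in $(\nu,\lambda)$ of the form $\sum_i \log\nu_i + N - \lambda$ subject to $\sum_i \nu_i\varphi(X_i-\hat\mu_j)\le\lambda$, and either maximizing out $\lambda$ directly, or rescaling $\nu\mapsto (N/\lambda)\nu$ and then optimizing in $\lambda$, shows that $\lambda=N$ at the optimum, which produces \eqref{discrete-dual} with no additive constant. The Slater argument and the KKT/complementary-slackness consequences you give are exactly right.
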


The optimal objective value $\hat{p}_{\mathcal G}$ of \eqref{discrete-primal}  is larger than the optimal objective value $p^*$ of \eqref{infinite-dim-primal}. 
In the following, we present an upper bound on $\hat{p}_{\mathcal G} - p^*$, which depends upon the discrete set of atoms ${\mathcal G}$.
To derive our result, we make use of the observation that $\varphi$ is Lipschitz continuous over $\R$, that is, for all $t_1,t_2 \in \R$, $| \varphi (t_1) -\varphi (t_2) | \le 
(1/{\sqrt{2\pi e}}) |t_1 - t_2|$. 
\begin{proposition}\label{prop: optimality gap}
	{\rm \textit{(Primal optimality gap)}}
	Let
	 $\cG = \{\hat \mu_1,...,\hat \mu_M\}$  (with $\hat \mu_1 < \cdots < \hat \mu_M$) be a pre-specified set of discretized locations satisfying $\hat \mu_1 \le \min_{i\in [N]} X_i \le \max_{i\in [N]} X_i \le \hat \mu_M $.
	  Let $\hat p_\cG$ be the optimal value of \eqref{discrete-primal}, and let $p^*$ denote the optimal value of the infinite dimensional problem \eqref{infinite-dim-primal}. Let $\hat \nu$ be the optimal solution of \eqref{discrete-dual}.
	Suppose $\Gamma>0$ satisfies 
	$\Gamma\ge \max_{\mu \in \R} \sum_{i=1}^N \hat \nu_i \varphi (X_i - \mu)   $.
	Then it holds
	\begin{equation}\label{bound1}
	p^* ~\le ~\hat p_\cG ~\le~ p^* + N\log(\Gamma/N).
	\end{equation}
Let	$\Dt \cG$ be the largest spacing among ordered atoms in $\cG$, that is, $\Dt \cG := \max_{i} (\hat \mu_{i+1} - \hat \mu_i)$.
Then we have
	\begin{equation}\label{bound2}
	p^* ~\le ~\hat p_\cG ~\le~ p^* + N \log \Big( 1+ \frac{\Dt \cG}{\sqrt{8\pi e}} \frac{1}{N} \sum_{i=1}^N \hat \nu_i  \Big) \ .
	\end{equation}
\end{proposition}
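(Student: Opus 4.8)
The plan is to route everything through the duality pair for the two problems and then a single scaling argument. The left inequalities in \eqref{bound1} and \eqref{bound2} (i.e.\ $p^*\le\hat p_\cG$) are immediate and were already noted: any $w$ feasible for \eqref{discrete-primal} yields the discrete mixing distribution ${\mathcal Q}=\sum_j w_j\delta_{\hat\mu_j}$, which is feasible for \eqref{infinite-dim-primal} with the same objective, so $p^*$, being an infimum over a larger family, is no larger than $\hat p_\cG$. For the upper bound I would work with the dual problems. By Proposition~\ref{prop: dicrete-dual}, $\hat p_\cG=\sum_{i=1}^N\log\hat\nu_i$ (the optimal value of \eqref{discrete-dual}), and the linking identity $\sum_j\hat w_j\varphi(X_i-\hat\mu_j)=1/\hat\nu_i$ forces $\hat\nu\in\R^N_{++}$. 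Likewise, Theorem~\ref{theorem: infinite duality} gives $p^*=\sum_{i=1}^N\log\nu_i^*$ with $\nu^*$ optimal for \eqref{infinite-constraints-dual}: the extremal equations there read $\sum_j\varphi(X_i-\mu_j^*)w_j^*=1/\nu_i^*$, and substituting into \eqref{infinite-dim-primal} evaluated at ${\mathcal Q}^*$ gives $p^*=-\sum_i\log(1/\nu_i^*)$. Thus strong duality holds for both problems with matching values; in particular $\sum_i\log\nu_i\le p^*$ for every $\nu$ feasible for \eqref{infinite-constraints-dual}.

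The key step is to rescale $\hat\nu$ to make it feasible for the infinite-dimensional dual. Set $\tilde\nu:=(N/\Gamma)\,\hat\nu$. Since $\Gamma>0$ and $\hat\nu\in\R^N_{++}$, we have $\tilde\nu\in\R^N_{++}$, and for every $\mu\in\R$, $\sum_i\tilde\nu_i\varphi(X_i-\mu)=(N/\Gamma)\sum_i\hat\nu_i\varphi(X_i-\mu)\le(N/\Gamma)\,\Gamma=N$ by the defining property of $\Gamma$; hence $\tilde\nu$ is feasible for \eqref{infinite-constraints-dual}. Therefore $p^*\ge\sum_i\log\tilde\nu_i=\sum_i\log\hat\nu_i+N\log(N/\Gamma)=\hat p_\cG-N\log(\Gamma/N)$, which rearranges to the right inequality in \eqref{bound1}. (Complementary slackness at an atom $\hat\mu_j$ with $\hat w_j>0$ gives $\sum_i\hat\nu_i\varphi(X_i-\hat\mu_j)=N$, so $\Gamma\ge N$ and the correction $N\log(\Gamma/N)$ is nonnegative, consistent with the lower bound.)

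For \eqref{bound2} I would exhibit an explicit admissible $\Gamma$ by bounding $\Psi(\mu):=\sum_i\hat\nu_i\varphi(X_i-\mu)$ via the Lipschitz continuity $|\varphi(t_1)-\varphi(t_2)|\le(1/\sqrt{2\pi e})|t_1-t_2|$. First reduce the maximization over $\mu\in\R$ to $\mu\in[\hat\mu_1,\hat\mu_M]$: for $\mu\le\hat\mu_1\le\min_iX_i$ one has $X_i-\mu\ge X_i-\hat\mu_1\ge0$, so monotonicity of $\varphi$ on $[0,\infty)$ gives $\varphi(X_i-\mu)\le\varphi(X_i-\hat\mu_1)$ and hence $\Psi(\mu)\le\Psi(\hat\mu_1)\le N$ by dual feasibility at the atom $\hat\mu_1$; the case $\mu\ge\hat\mu_M$ is symmetric. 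For $\mu\in[\hat\mu_1,\hat\mu_M]$, pick the nearest atom $\hat\mu_j\in\cG$, so $|\mu-\hat\mu_j|\le\Dt\cG/2$, and bound $\Psi(\mu)\le\Psi(\hat\mu_j)+\sum_i\hat\nu_i|\varphi(X_i-\mu)-\varphi(X_i-\hat\mu_j)|\le N+\frac{1}{\sqrt{2\pi e}}\,|\mu-\hat\mu_j|\sum_i\hat\nu_i\le N+\frac{\Dt\cG}{\sqrt{8\pi e}}\sum_i\hat\nu_i$, where $\frac{\Dt\cG/2}{\sqrt{2\pi e}}=\frac{\Dt\cG}{\sqrt{8\pi e}}$. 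Thus $\Gamma:=N+\frac{\Dt\cG}{\sqrt{8\pi e}}\sum_i\hat\nu_i$ is admissible, and $\Gamma/N=1+\frac{\Dt\cG}{\sqrt{8\pi e}}\cdot\frac1N\sum_i\hat\nu_i$ plugged into \eqref{bound1} yields \eqref{bound2}.

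The only genuinely delicate points are (a) invoking the ``dual characterization'' of Theorem~\ref{theorem: infinite duality} as strong duality with equal optimal values — which the extremal equations make transparent — and (b) the tail reduction, i.e.\ that $\sup_{\mu\in\R}\Psi(\mu)$ is controlled by the values of $\Psi$ at the atoms $\hat\mu_1,\hat\mu_M$, together with carrying the factor $1/2$ through to obtain the constant $1/\sqrt{8\pi e}$. Everything else — feasibility of $\tilde\nu$, positivity of $\hat\nu$, and the logarithm arithmetic — is routine bookkeeping. I expect step (b) to be the main thing to get right.
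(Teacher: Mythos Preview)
Your proposal is correct and follows essentially the same route as the paper: scale $\hat\nu$ by $N/\Gamma$ to produce a feasible point for the infinite-dimensional dual \eqref{infinite-constraints-dual}, then use the Lipschitz bound on $\varphi$ together with a monotonicity argument for the tails to exhibit the explicit $\Gamma=N+\frac{\Delta\cG}{\sqrt{8\pi e}}\sum_i\hat\nu_i$. The only cosmetic difference is that the paper reduces the supremum over $\mu$ to the interval $[X_{\min},X_{\max}]$ (arguing $\Psi$ is increasing on $(-\infty,X_{\min}]$ and decreasing on $[X_{\max},\infty)$), whereas you reduce to $[\hat\mu_1,\hat\mu_M]$ and then compare directly to $\Psi(\hat\mu_1)\le N$ and $\Psi(\hat\mu_M)\le N$ via dual feasibility at the endpoint atoms; both reductions are valid and yield the same constant.
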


As a consequence of Proposition~\ref{prop: optimality gap}, if $\nabla {\mathcal G} \approx 0$, then $p^* \approx \hat{p}_{\mathcal G}$.
Using Proposition~\ref{prop: optimality gap}, we present the following bound on the proximity of the dual solutions corresponding to the finite and infinite-dimensional problems.

\begin{proposition}\label{prop: error bound dual}
		{\rm \textit{(Error bound on dual optimal solutions)}}
		Let $\cG$, $\hat p_\cG$ and $p^*$ be the same as in Proposition \ref{prop: optimality gap}. 
		Let $\hat \nu$ and $\nu^*$ be the optimal solutions of \eqref{discrete-dual} and \eqref{infinite-constraints-dual} respectively. Then it holds
		\begin{equation}\label{dual value bound}
		\Big(  \sum_{i=1}^N {(\nu_i^* - \hat \nu_i)^2}/{\hat \nu_i^2}  \Big)^{1/2} \le \rho^{-1}(\hat p_{\cG} - p^*), 
		\end{equation}
		where, $\rho^{-1}(t)$ denotes the inverse of the (strictly increasing) function $\rho(t)= t - \log(1+t)$, defined on $[0, \infty)$. 
\end{proposition}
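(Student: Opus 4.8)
The plan is to exploit the self-concordance inequality \eqref{eqn: self-concordant strong convex} applied to the dual objective, viewing the two dual problems \eqref{discrete-dual} and \eqref{infinite-constraints-dual} as (negatives of) $1$-self-concordant optimization problems in the $\nu$ variable. First I would observe that the map $\nu \mapsto -\sum_{i=1}^N \log \nu_i$ is $1$-self-concordant on $\R^N_{++}$ (a standard fact; it is the log-barrier for the nonnegative orthant, analogous to Lemma~\ref{lemma: f-self-concordant}), and that its Hessian at a point $\hat\nu$ is the diagonal matrix $\diag(\hat\nu_1^{-2},\ldots,\hat\nu_N^{-2})$, so that for any $\nu$,
$$
\| \nu - \hat\nu \|_{\na^2 (-\sum_i \log(\cdot))(\hat\nu)}^2 = \sum_{i=1}^N \frac{(\nu_i - \hat\nu_i)^2}{\hat\nu_i^2},
$$
which is exactly the quantity appearing on the left side of \eqref{dual value bound}. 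This identifies the target norm with the Hessian norm at $\hat\nu$.

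Next I would set up the optimality argument. The feasible region of \eqref{infinite-constraints-dual} (the infinite-dimensional constraint set) is contained in the feasible region of \eqref{discrete-dual}, since the former imposes the constraint $\sum_i \nu_i \varphi(X_i-\mu)\le N$ for \emph{all} $\mu\in\R$ whereas the latter only imposes it at the finitely many atoms $\hat\mu_j$. Hence $\nu^*$, the optimal solution of \eqref{infinite-constraints-dual}, is feasible for \eqref{discrete-dual}; and $\hat\nu$ is the \emph{maximizer} of $\sum_i\log\nu_i$ over the larger feasible set of \eqref{discrete-dual}. Now apply \eqref{eqn: self-concordant strong convex} to the convex function $G(\nu):=-\sum_{i=1}^N\log\nu_i$ at the point $x=\hat\nu$ and $y=\nu^*$: this gives
$$
G(\nu^*) \ge G(\hat\nu) + \na G(\hat\nu)^\T(\nu^* - \hat\nu) + \rho\bigl( \| \nu^* - \hat\nu \|_{\na^2 G(\hat\nu)} \bigr).
$$
The key point is that the first-order term is nonnegative: since $\hat\nu$ maximizes $-G$ over a convex set containing $\nu^*$, the first-order optimality condition gives $\na(-G)(\hat\nu)^\T(\nu^* - \hat\nu)\le 0$, i.e. $\na G(\hat\nu)^\T(\nu^* - \hat\nu)\ge 0$. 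Therefore
$$
\rho\bigl( \| \nu^* - \hat\nu \|_{\na^2 G(\hat\nu)} \bigr) \le G(\nu^*) - G(\hat\nu) = \Bigl(-\sum_i \log\nu_i^*\Bigr) - \Bigl(-\sum_i \log\hat\nu_i\Bigr) = \hat p_\cG - p^*,
$$
using Theorem~\ref{theorem: infinite duality} and Proposition~\ref{prop: dicrete-dual} to identify the optimal dual values with $p^*$ and $\hat p_\cG$ respectively. Since $\rho$ is strictly increasing on $[0,\infty)$ with $\rho(0)=0$, it is invertible there, and applying $\rho^{-1}$ to both sides yields $\| \nu^* - \hat\nu \|_{\na^2 G(\hat\nu)} \le \rho^{-1}(\hat p_\cG - p^*)$, which is \eqref{dual value bound} after substituting the explicit form of the Hessian norm computed above.

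I expect the main obstacle to be handling the technical subtleties around the first-order optimality term rather than the self-concordance estimate itself: one must be careful that $\hat\nu\in\R^N_{++}$ (so that $G$ is differentiable there and the Hessian norm is well-defined) and that the variational inequality $\na G(\hat\nu)^\T(\nu^*-\hat\nu)\ge 0$ is valid — this requires knowing $\nu^*$ lies in the feasible set of \eqref{discrete-dual}, which follows from the containment of constraint sets, and that $\hat\nu>0$, which follows since $\sum_i\log\nu_i=-\infty$ on the boundary while the optimal value is finite. A secondary point is confirming that the self-concordance constant is exactly $1$ for $G$ (so that \eqref{eqn: self-concordant strong convex} applies without rescaling), and that $\rho^{-1}$ is being applied on the correct branch; these are routine but should be stated. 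Once these are in place the chain of inequalities is immediate.
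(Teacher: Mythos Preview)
Your proposal is correct and follows essentially the same approach as the paper: apply the $1$-self-concordance lower bound \eqref{eqn: self-concordant strong convex} to $G(\nu)=-\sum_i\log\nu_i$ at $\hat\nu$ and $\nu^*$, use the first-order optimality of $\hat\nu$ together with the feasibility of $\nu^*$ for \eqref{discrete-dual} to drop the linear term, and invert $\rho$. Your write-up is in fact somewhat more careful than the paper's in spelling out why $\nu^*$ is feasible for \eqref{discrete-dual} and why $\hat\nu\in\R^N_{++}$.
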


\section{Experiments}\label{section: Experiments}

In this section\footnote{Some implementation details of Algorithm~\ref{algorithm: Self-concordant Cubic-regularized Newton Method} can be found in Section~\ref{sec:append:experimental-details}.}, we compare the numerical performance of our proposed Algorithm~\ref{algorithm: Self-concordant Cubic-regularized Newton Method} versus other benchmarks for Problem~\eqref{problem1} for different choices of $\cC$ on both synthetic and real datasets. 
Our algorithms are written in Python 3.7.4 and can be found at

{{\small\url{https://github.com/wanghaoyue123/Nonparametric-density-estimation-with-shape-constraints}}}\\
 All computations 
are performed on MIT Sloan's engaging cluster with 1 CPU and 32GB RAM.

\subsection{Synthetic data experiments}
\subsubsection{Solving Problem~\eqref{crit-our}}\label{sec:compute-unconstrained}
We first consider the problem of estimating mixture proportions with 
no shape constraint; and compare our algorithm with several benchmarks on synthetic datasets in terms of optimizing~\eqref{crit-our}.\\
\noindent {\bf Setup.} We generate iid samples $X_1,\ldots,X_N$ from a Gaussian mixture density with $5$ components: $g_{B}^\dagger(x)=\sum_{i=1}^5 w_i^\dagger N (\mu_i^\dagger , (\sigma_i^\dagger)^2)$
with mixture proportions $( w_1^\dagger,\ldots, w_5^\dagger)$= $(0.6, 0.05, $ $0.15, 0.1, 0.1) $, location parameters $(\mu_1^\dagger,\ldots,\mu_5^\dagger)= (0, 4,$ $ 5.5, -3.5, -4.5)$, and 
standard deviations $(\sigma_1^\dagger,\ldots,\sigma_5^\dagger) = (1, 0.5, 1, 0.25, 0.25)$. 
To approximate the underlying density $g_{B}^\dagger(x)$, we consider a mixture of
$M$ Gaussian densities: The basis elements are given by 
$\psi_{i}(x)  = \varphi((x - \mu_{i})/\sigma)$
for $i \in [M]$, and $\{\mu_i\}_{1}^M$ form an equi-spaced grid between the smallest and largest values in $X_1,\ldots,X_N$. 
Note that the true underlying density $g_{B}^\dagger(x)$ is not a member of $g_{B}(x)=\sum_{i \in [M]} w_{i} \psi_{i}(x)$, as each 
basis element $\psi_i$ has variance $\sigma^2$.

\noindent {\bf Benchmarks and metrics.} We 
compare Algorithm \ref{algorithm: Self-concordant Cubic-regularized Newton Method} with (i) the commercial solver Mosek~\cite{andersen2000mosek} using CVXPY interface;
(ii) the sequential quadratic programming based method (MIXSQP) introduced in \cite{kim2020fast}, (iii) the Frank-Wolfe method (FW) in \cite{dvurechensky2020self}, and (iv) proximal gradient descent (PGD)\footnote{For FW and PGD we use the public code available at~\url{https://github.com/kamil-safin/SCFW}.} 
in \cite{tran2015composite}. 

In Table \ref{table: compare timing synthetic}, we present the runtimes of these five methods for different values of $M\in \{200,500,1000\}$ and $N\in\{10^5,2\times 10^5, 5\times 10^5\}$. 
We set the standard deviation of the mixture components to $\sigma = 0.2$ (this value is chosen so that the resulting density leads to a good visual fit). 
For each method, Table~\ref{table: compare timing synthetic} presents the following metrics: (a)~\textit{Time}: the time (s) needed for an algorithm to output a solution.
(b)~\textit{Relative Error}: For an output solution $\tilde{w}$ (computed by an algorithm), we define the relative error as 
$$
\textit{Relative Error} = (f(\tilde{w}) - f^* ) / \max \lt\{1, |f^*|\rt\}  
$$ 
where, $f^*$ is the (estimated) optimal objective value as obtained by 
 Mosek\footnote{For the examples where Mosek fails to output a solution  due to insufficient memory (we limit memory to 32GB), we run Mosek with 128GB memory to obtain the value of $f^*$.}. 
 In Table~\ref{table: compare timing synthetic}, the time and relative error for each method is reported as time/error; and for Mosek, we only report the runtime as the relative error is zero.
In our experiment, we use 
Mosek with its default setting.
For MIXSQP we also use the default setting, but set a loose convergence tolerance of $0.03$. The FW and PGD algorithms are terminated as soon as they reach a solution with a relative error smaller than $10^{-4}$. In Algorithm~\ref{algorithm: Self-concordant Cubic-regularized Newton Method}, we solve every cubic Newton step with AFW; 
we terminate our method as soon as its relative error falls below $10^{-4}$. See Section~\ref{sec:append:experimental-details} for additional implementation details of Algorithm~\ref{algorithm: Self-concordant Cubic-regularized Newton Method}.

Table \ref{table: compare timing synthetic} suggests that our proposed approach can be faster than other methods---we obtain approximately a 3X-5X improvement over the next best method (MIXSQP), to reach a solution with relative error $\sim 10^{-4}$--$10^{-5}$. 
We note that~\cite{kim2020fast} use a low-rank approximation of $B$ for improved computational performance. We do not use low-rank approximations in our implementations, but note that they may lead to further improvements. 
The interior point solver of Mosek also works quite well; and we obtain up to a 10X improvement over Mosek on some instances. Compared to Mosek, our method appears to be more memory efficient---see for example, the instance with $M=1000$ and $N=5\times 10^5$.

\begin{table}[h] 
	\centering 
	\resizebox{\textwidth}{!}{ 
		\begin{tabular}{|c|c|ccccc|} 
\hline 
$N$                     & $M$    & Mosek   & MIXSQP~\cite{kim2020fast}       & FW~\cite{dvurechensky2020self}        & PGD~\cite{tran2015composite} & Algorithm~\ref{algorithm: Self-concordant Cubic-regularized Newton Method}          \\ 
  && time (s) & time(s)/error & time(s)/error& time(s)/error& time(s)/error\\
\hline 
\multirow{3}{*}{1e5} & 200  & 63.2 & 23.8/1.1e-04  & 189.9/1.0e-04 & 101.8/9.6e-05 & 8.6/8.4e-05   \\ 
                      & 500  & 102.9 & 76.6/1.3e-04  & 188.5/1.0e-04 & 270.0/9.9e-05 & 23.9/8.7e-05  \\ 
                      & 1000 & 186.7 & 186.4/8.2e-05  & 414.4/9.7e-05 & 575.0/9.8e-05 & 60.7/9.8e-05  \\ \hline 
\multirow{3}{*}{2e5} & 200  & 114.1 & 47.5/9.1e-05  & 477.5/1.0e-04 & 323.9/9.9e-05 & 13.5/5.9e-05  \\ 
                      & 500  & 207.4 & 165.4/9.2e-05  & 1075.5/1.0e-04 & 856.4/9.8e-05 & 36.7/6.5e-05  \\ 
                      & 1000 & 372.2 & 329.2/9.2e-05  & 1851.9/9.9e-05 & 1729.0/9.9e-05 & 98.7/6.4e-05  \\ \hline 
\multirow{3}{*}{5e5} & 200  & 361.5 & 110.2/1.2e-04  & 1183.6/1.0e-04 & 1094.8/1.0e-04 & 27.1/5.6e-05  \\ 
                      & 500  & 608.6 & 329.1/1.2e-04  & 2609.5/1.0e-04 & 3186.1/1.0e-04 & 88.9/5.5e-05  \\ 
                      & 1000 & O.M. & 738.0/1.2e-04  & 2781.9/9.8e-05 & 6747.7/1.0e-04 & 236.5/5.9e-05 \\ \hline 
\end{tabular}
	}
	\caption{\small Comparison of timings for density estimation with no shape constraints i.e., Problem~\eqref{crit-our}.
		For each method, we present time(s) and (relative) error as defined in the text. Here ``O.M." refers to the instance where Mosek runs out of memory (32G).}
		\label{table: compare timing synthetic}
\end{table}

\subsubsection{Solving Problem~\eqref{problem1}: with shape constraints}\label{sec:compute-constrained}
For density estimation with shape constraints, we present results for two settings: one with a concave density, and the other involving a density that is convex and increasing. 

We generate $N$ iid samples from the true density $g^\dagger_{B}(x) = \sum_{m=1}^5 w_m^\dagger \td{b}^\dagger_m (x)$, which is a finite mixture of 5 Beta-densities. The concave density $g^\dagger_{B}(x)$ is obtained by choosing
$(w_1^\dagger, ..., w_5^\dagger) = (0.05, 0.3, 0.3, 0.3, 0.05)$. 
To obtain a density $g^\dagger_{B}(x)$ which is increasing and convex, we use
$(w_1^\dagger, ..., w_5^\dagger)=(0.05, $ $0.05, 0.1, 0.25, 0.55)$. 
For each case, we learn the density $g_{B}^\dagger$ using a mixture of $M$ Bernstein densities with knots located uniformly in the interval $[0,1]$---we obtain the mixture weights by solving Problem~\eqref{problem1} (with the corresponding shape constraint). 
See Table~\ref{table: compare timing synthetic2} for results.

Unlike~\eqref{crit-our}, there are no specialized benchmarks for the shape constrained problem~\eqref{problem1}.
We only compare our method with Mosek.
When Mosek does not run into memory issues, our method outperforms Mosek in runtime by a factor of 3X$\sim$20X, with a reasonably good accuracy $10^{-4}\sim 10^{-5}$. For larger problems, our approach works well but Mosek would not run due to memory issues.

\begin{table}[h!]
\centering
\scalebox{1}{\begin{tabular}{cc}
		\begin{tabular}{|c|c|cc|} 
			\multicolumn{4}{c}{Concave density} \\ 
			\hline 
			$N$                     & $M$    & Mosek   & Algorithm~\ref{algorithm: Self-concordant Cubic-regularized Newton Method}      \\ 
			   && time (s) & time(s)/error\\ \hline 
			\multirow{3}{*}{1e5} & 200  & 60.9 & 3.5/1.1e-05   \\ 
			& 500  & 107.4 & 11.9/9.6e-06  \\ 
			& 1000 & 225.8 & 34.8/6.2e-05  \\ \hline 
			\multirow{3}{*}{2e5} & 200  & 68.1 & 3.7/6.2e-05  \\ 
			& 500  & 127.8 & 15.5/3.6e-05  \\ 
			& 1000 & 280.6 & 44.3/6.3e-05  \\ \hline 
			\multirow{3}{*}{5e5} & 200  & 200.4 & 10.0/6.9e-05  \\ 
			& 500  & 440.3 & 44.6/5.2e-05  \\ 
			& 1000 & O.M. & 119.3/9.4e-05 \\ \hline 
		\end{tabular} &
		\begin{tabular}{|c|c|cc|} 
			\multicolumn{4}{c}{Increasing and convex density} \\ 
			\hline 
			$N$                     & $M$    & Mosek   & Algorithm~\ref{algorithm: Self-concordant Cubic-regularized Newton Method}            \\ 
			   && time (s) & time(s)/error\\
			\hline 
			\multirow{3}{*}{1e5} & 200  & 24.9 & 5.3/2.9e-05   \\ 
			& 500  & 66.4 & 19.1/2.7e-05  \\ 
			& 1000 & 140.7 & 58.1/1.3e-05  \\ \hline 
			\multirow{3}{*}{2e5} & 200  & 78.6 & 9.2/3.6e-05  \\ 
			& 500  & 139.7 & 36.3/3.8e-05  \\ 
			& 1000 & 248.6 & 98.8/1.6e-05  \\ \hline 
			\multirow{3}{*}{5e5} & 200  & 255.3 & 22.2/1.2e-05  \\ 
			& 500  & O.M. & 67.5/1.7e-05  \\ 
			& 1000 & O.M. & 218.4/2.0e-05 \\ \hline 
		\end{tabular}
			\end{tabular}}
		\caption{\small Comparison of timings for density estimation with shape constraints--we consider Problem~\eqref{problem1} with two shape constraints: [left] concave, [right] convex and increasing. Mosek runs into memory issues (O.M.) for some larger instances.}
		\label{table: compare timing synthetic2}
\end{table}

\subsection{Real data example}\label{sec:real-data}
We illustrate our approach on real-world data from the US Census Bureau 
to demonstrate examples of shape-constrained density estimation for some covariates. The data is publicly available in US Census Planning Database 2020 that provides a range of demographic, socioeconomic, housing and census operational data~\cite{Bureau2020}. It includes covariates from Census 2010 and 2014-2018 American Community Survey (ACS), aggregated at both Census tract level as well as block level in the country. We consider tract level data, which has approximately $74,000$ samples and $500$ covariates. We consider the marginal density estimation for 4 different features (denoted as feature-1,..,feature-4)\footnote{The features correspond to: feature-1: number of children under age 18 living in households below poverty line. feature-2: percentage of population who are citizens of the United States at birth. feature-3: percentage of population that identify as ``Mexican'', ``Puerto Rican'', ``Cuban'', or ``another Hispanic, Latino, or Spanish origin''. feature-4: Percentage of children of age 3 and 4 that are enrolled in school. All these numbers pertain to the ACS database.  All features have been normalized within the interval $[0, 1]$ before performing density estimation.} to illustrate our density estimation procedure with different shape constraints, such as decreasing, increasing, convex and concave---all of which can be handled by Problem~\eqref{problem1}.

\begin{figure}[b!]
        \centering
\scalebox{0.9}{\begin{tabular}{cc}
decreasing density & increasing density \\
            \includegraphics[width=0.5\textwidth]{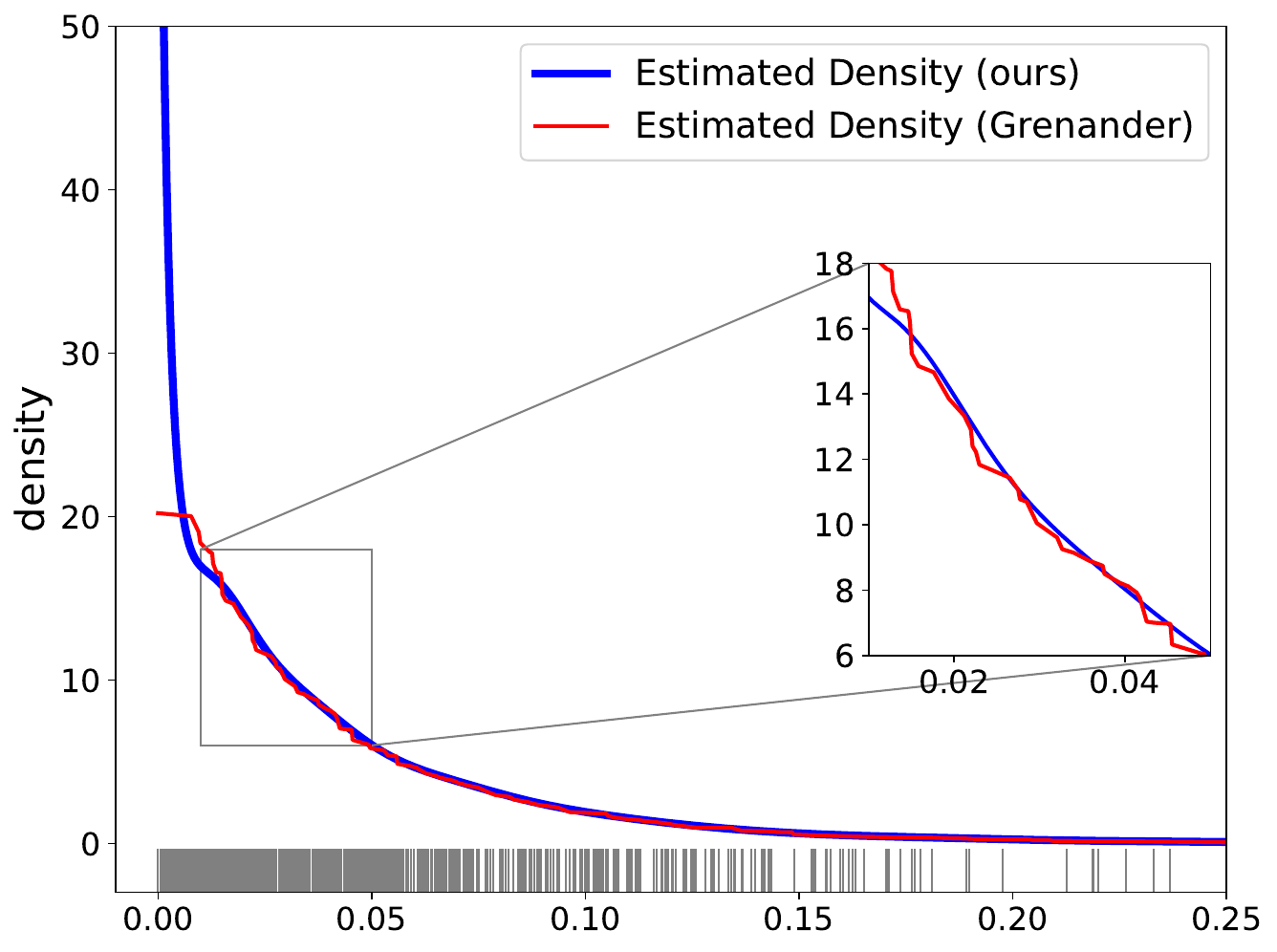}&
        \includegraphics[width=0.5\textwidth]{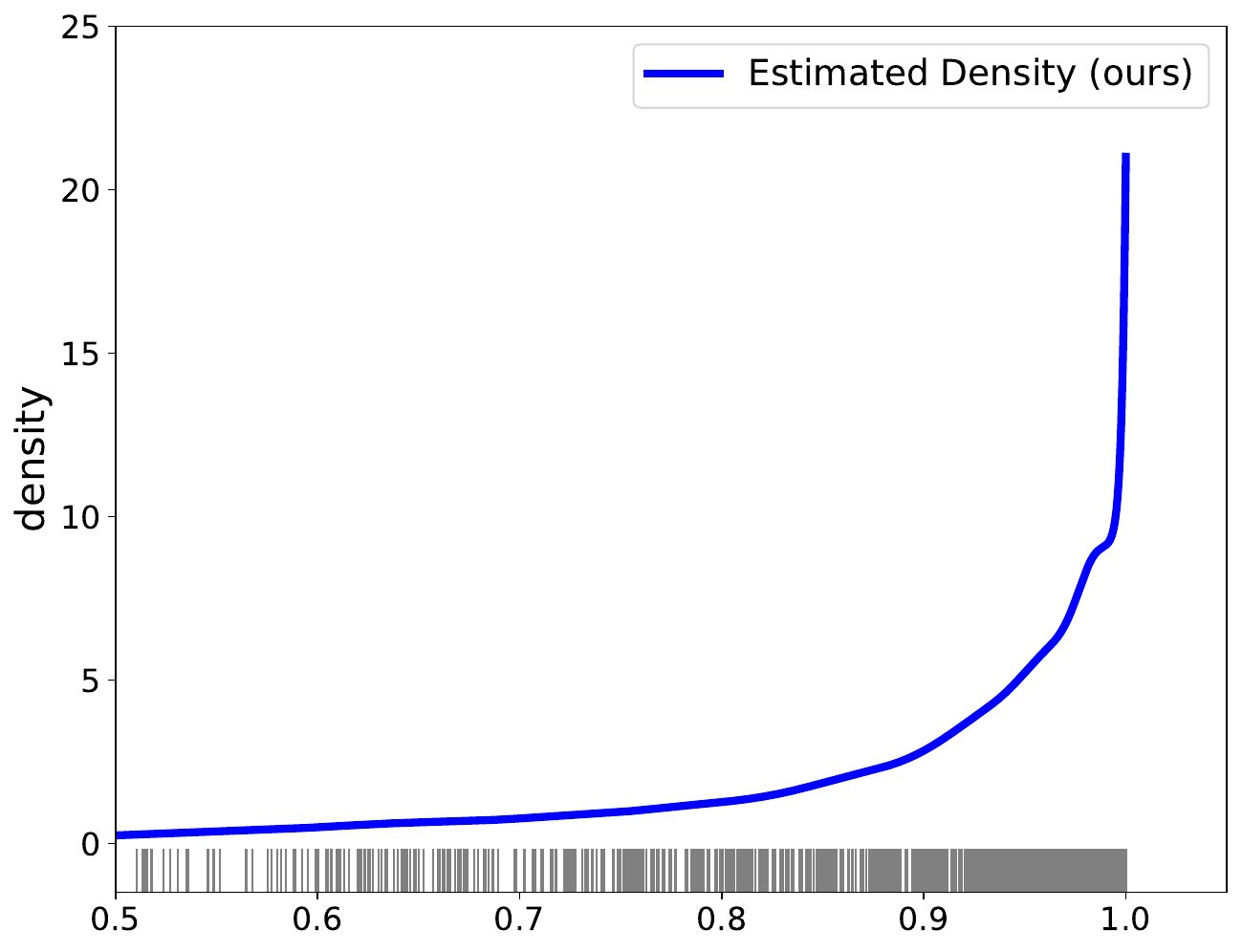}\\
        (a) feature-1 & (b) feature-2  \vspace{1em} \\
    ~convex density/decreasing density  & ~no shape constraint/concave density \\
\includegraphics[width=0.5\textwidth]{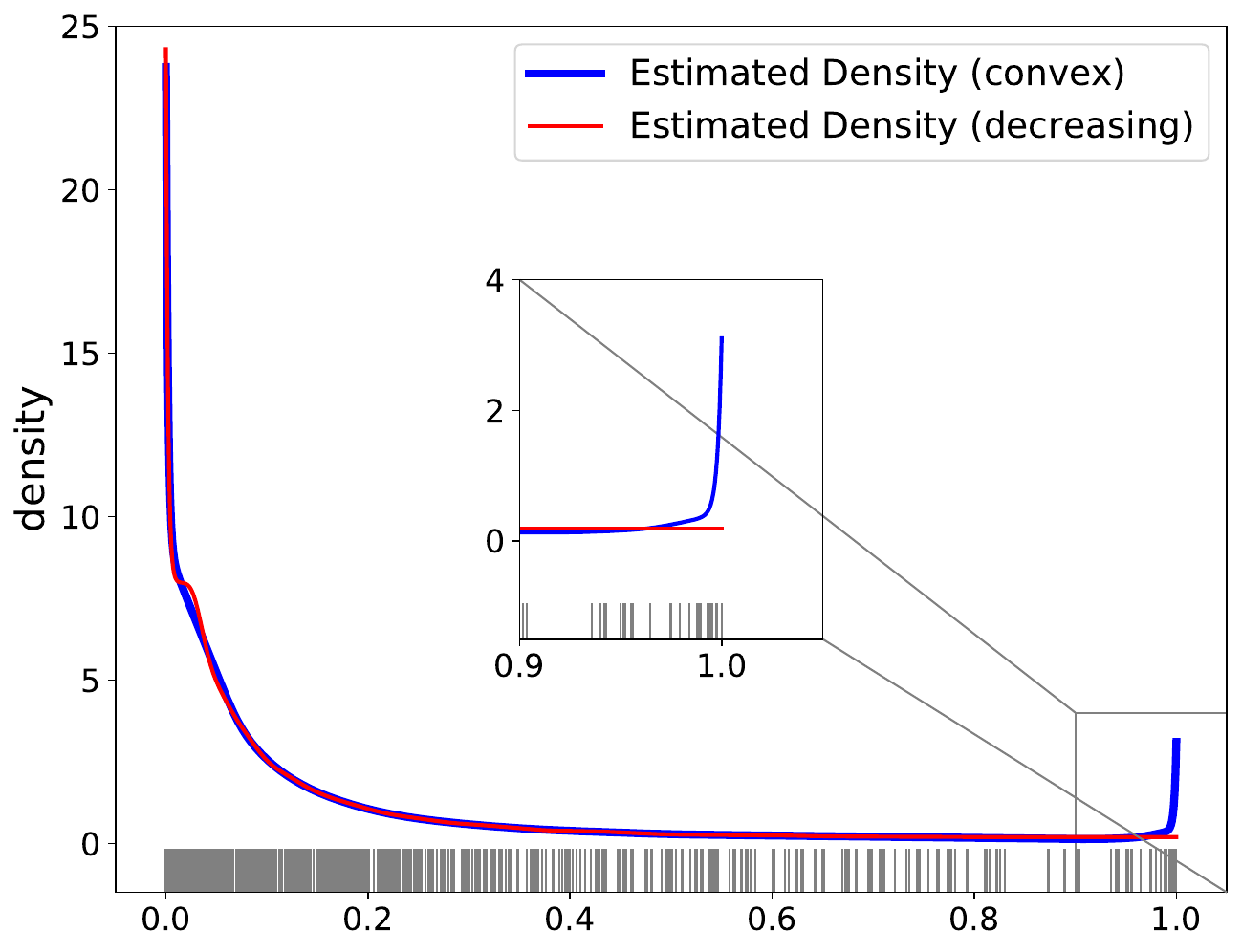}&
\includegraphics[width=0.5\textwidth]{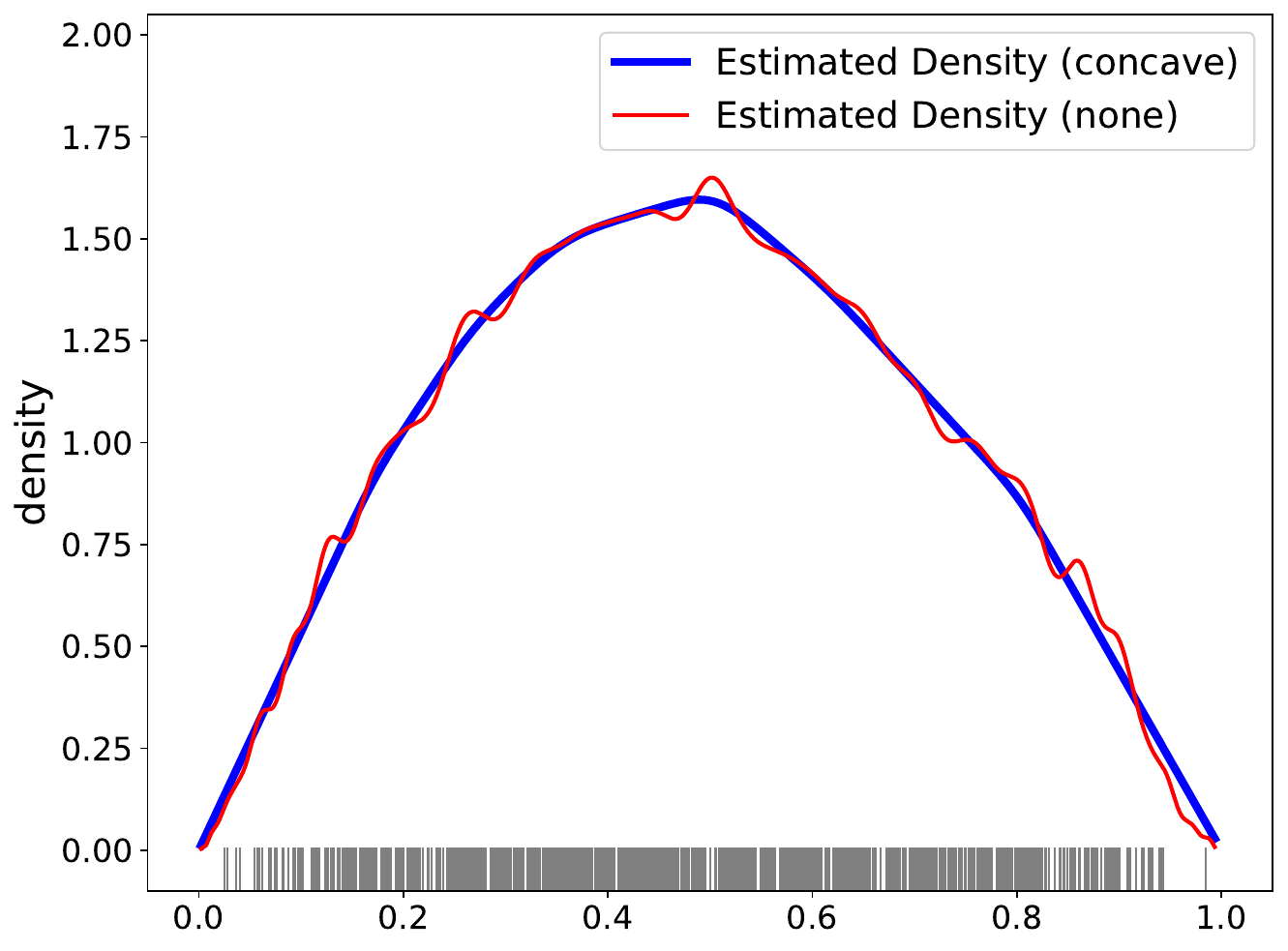}\\
(c) feature-3 & (d) feature-4
\end{tabular}}\caption{{\small Marginal probability density estimates 
of 4 covariates (all normalized between $[0,1]$) from the ACS dataset under different shape restrictions: decreasing, increasing, convex and concave as shown in the headings for each sub-figure. Estimators are obtained by solving~\eqref{problem1} with Bernstein bases with $M=500$ components. For (a), (b) the x-axis ranges have been truncated to aid visualization. }}         \label{fig:Census-ShapeRestrictions}
\end{figure}

Figure~\ref{fig:Census-ShapeRestrictions} shows a visualization of the estimated densities for 4 different covariates. Figure~\ref{fig:Census-ShapeRestrictions}~(a) compares our density estimate with Grenander's monotone density estimator~\cite{grenander1956theory}. We see that Grenander's estimator is not smooth whilst our estimator is smooth (due to the choice of the Bernstein polynomials as bases elements). Our estimator appears to be more expressive near $x=0$. 
Figure~\ref{fig:Census-ShapeRestrictions} (c) shows two of our estimators: one convex, and the other decreasing. The inset shows a zoomed-in version of the right tail showing the contrasting behavior between the convex and decreasing density estimates. Figure~\ref{fig:Census-ShapeRestrictions} (d) shows two estimators: one with no shape constraint~\eqref{crit-our} and the other with a concave shape constraint. The shape constrained density estimator appears to be less `wiggly' than the unconstrained variant---consistent with our observation in Figure~\ref{fig:Synthetic-compare1-shape}.

For the same dataset, Table~\ref{table: compare timing census} presents running times of our method versus Mosek for Problem~\eqref{problem1} using Bernstein bases elements and different shape constraints as shown in the table. We observe that our algorithm reaches a reasonable accuracy in runtimes that are smaller than Mosek. We see up to a 15X improvement in runtimes over Mosek. 

\begin{table}[ht!]
\centering
\scalebox{0.9}{\begin{tabular}{|c|c|c|c|c|}
\hline
Census Feature & Shape Constraint & $M$ & Mosek & Algorithm~\ref{algorithm: Self-concordant Cubic-regularized Newton Method} \\
& ($S$ in~\eqref{problem1})  & & time (s) & time (s)/ error \\ \hline
\multirow{3}{*}{feature-1} & \multirow{3}{*}{Decreasing} & 100 & 18.7 & 7.0/9.9e-5 \\ \cline{3-5}
 &  & 500 & 51.5 & 11.1/1e-4 \\ \cline{3-5} 
 &  & 1000 & 79.2 & 27.5/8.9e-5 \\ \hline
\multirow{3}{*}{feature-2} & \multirow{3}{*}{Increasing} & 100 & 31.8 & 6.4/2.0e-5 \\ \cline{3-5} 
 &  & 500 & 74.7 & 6.7/7.9e-5 \\ \cline{3-5} 
 &  & 1000 & 116.9 & 17.2/8.46e-5 \\ \hline
\multirow{3}{*}{feature-3} & \multirow{3}{*}{Convex} & 100 & 28.5 & 2.1/6.3e-5 \\ \cline{3-5} 
 &  & 500 & 65.3 & 7.7/8.8e-5 \\ \cline{3-5} 
 &  & 1000 & 102.7 & 20.9/9.4e-5 \\ \hline
\multirow{3}{*}{feature-4} & \multirow{3}{*}{Concave} & 100 & 23.2 & 1.15/6.7e-5 \\ \cline{3-5} 
 &  & 500 & 76.9 & 4.8/7.1e-5 \\ \cline{3-5} 
 &  & 1000 & 158.7 & 10.7/8.2e-5 \\ \hline
\end{tabular}}\caption{\small Density estimation with shape constraints for Problem~\eqref{problem1} on four ACS features from the Census Bureau (see Figure~\ref{fig:Census-ShapeRestrictions} for corresponding visualizations). Here, $N \approx 74,000$ and we take different values of $M$. We compare Mosek vs our method (Algorithm~\ref{algorithm: Self-concordant Cubic-regularized Newton Method}). We report running time (s) and solution accuracy i.e., relative error---reported as time/error.}
\label{table: compare timing census}
\end{table}

\section*{Acknowledgments}
Rahul Mazumder would like to thank Hanzhang Qin for helpful discussions on Bernstein polynomials during a class project in MIT course 15.097 in 2016.
The authors also thank Bodhisattva Sen for helpful discussions, 
and 
	thank the Associate Editor and anonymous referees for their comments that helped improve the paper. 
	This research is partially supported by ONR (N000142212665, N000141812298, N000142112841), NSF-IIS-1718258 and awards from IBM and Liberty Mutual Insurance.

\clearpage 
\bibliographystyle{plain}
{\small{\bibliography{main}}}

\clearpage 
\appendix

\section{Frank-Wolfe algorithm with away steps}\label{sec:append:AFW}

Below we present the details for Frank-Wolfe algorithm with away steps \cite{guelat1986some,lacoste2015global}. 
For notational convenience, we denote the objective $h_{f}(y, w^k)$ in~\eqref{update-yk} by
$$
G(z): = \Phi_f(z,w^k) + \frac{L_k}{6} 
\|  z -w^k \|_{\na^2 f(w^k)}^{3} \ .
$$

In addition, let $V(\cC)$ denote the set of vertices of the constraint set $\cC$ (which is a bounded convex polyhedron in our problem). 
In the AFW algorithm (see Algorithm~\ref{alg: AFW}), we start from some $z^0$ which is a vertex of $\cC$. 
In iteration $t$, $z^t$ can always be represented as a convex combination of at most $t$ vertices $V(z^t)\subset \cV(\cC)$, which we write as $z^t = \sum_{v\in V(z^t)} \lam_{v}(z^t) v$, where $\lam_{v}(z^t)>0$ for $v\in V(z^t)$, and $ \sum_{v\in V(z^t)} \lam_{v}(z^t) = 1.$ We keep track of the set of vertices $V(z^t)$ and the weight vector $\lam(z^t)$, and choose between the Frank-Wolfe step and away step (See Algorithm~\ref{alg: AFW}, Step~1). After a step is taken and we move to $z^{t+1}$, we update the set of vertices and weights to $V(z^{t+1})$ and $\lam(z^{t+1})$, respectively.
More precisely, we update $V(z^{t+1})$ and $\lam(z^{t+1})$ in the following way: For a Frank-Wolfe step, let $V(z^{t+1}) = \{s^t\}$ if $\al_t=1$; otherwise $V(z^{t+1}) = V(z^t) \cup \{s^t\}$. Also, we have $ \lam_{s^t}(z^{t+1}) := (1-\al_t) \lam_{s^t} (z^t) + \al_t $ and $\lam_v(z^{t+1}) = (1-\al_t) \lam_v (z^{t+1}) $ for $v\in V(z^t) \backslash \{s^t\}$. For an away step, let $V(z^{t+1}) = V(z^{t}) \backslash \{v^t\} $ if $\al_t= \al_{\max}$; otherwise $V(z^{t+1}) = V(z^t)$. Also, we have $ \lam_{v^t} (z^{t+1}) := (1+ \al_t) \lam_{v^t} (z^t) - \al_t $ and $ \lam_{v}(z^{t+1}) := (1+\al_t) \lam_v^{t} $ for $v\in V(z^t) \backslash \{v^t\}$.

\begin{algorithm}[htp]
	\caption{Away-step Frank-Wolfe method (AFW)~\cite{guelat1986some,lacoste2015global}}
	\label{alg: AFW}
	\begin{algorithmic}
		\STATE Starting from $z^{0}\in \cV(\cC)$.
		 Set $V(z^0) =\{ z^0 \}$ and $\lam_{z^0}(z^0) = 1$.
		\STATE For $t=0,1,2, \dots$:
		\STATE (1) 
		$s^{t}:= \arg\min_{z\in \cC}\la{\na G(z^t)},z\ra$,
		$v^t:= \argmax_{z\in V(z^t)} \la {\na G(z^t)},z \ra $.
		\STATE \qquad \textbf{if} $\la \na G(z^t), s^t-  z^t \ra < \la \na G(z^t), z^t - v^t \ra  $: (Frank-Wolfe step)
		\STATE \qquad \quad $d^t:= s^t -  z^t$; $\al_{\max} = 1$.
		\STATE \qquad \textbf{else}: (away step)
		\STATE \qquad \quad $d^t:=z^t - v^t$; $\al_{\max} = {\lam_{v^t}(z^t)  }/({1- \lam_{v^t}(z^t)})$
		\STATE (2)  $z^{t+1} = z^{t}+\alpha_{t}d^t$ ; 
		where $\alpha_{t} \in \argmin\limits_{\al\in[0,\al_{\max}]} G(z^t + \al d^t )$.
		\STATE (3) Update $V(z^{t+1})$ and $\lam(z^{t+1})$ (see text for details).
	\end{algorithmic}
\end{algorithm}

\section{Postponed Proofs}\label{all-proofs}

\subsection{Proofs on Computational Guarantees}\label{sec:append:computational-guarantees}

\subsubsection{Proof of Lemma~\ref{lemma: Boundedness of L_k}}
It suffices to show that if $L_k \ge   24N$, condition \eqref{condition1} must be satisfied. 
(If this is true, then noting that $\beta \in (1,2)$, we know $L_k \le \max\{48N, L_0\}  $).
For $i\in [N]$, recall that $B_i:= [B_{1i}, B_{2i}, ..., B_{Mi}]^\T \in \R^M$. Define
\begin{equation}
\mu_i(y,w) := \la B_i , w-y \ra / \la B_i ,w \ra.
\end{equation}
Then it holds that 
$$\la \na f(w), y-w \ra = \frac{1}{N} \sum_{i=1}^N \mu_i(y,w)$$ and 
\begin{equation*}
\na^2 f(w) [y-w]^2 =\frac{1}{N}\sum_{i=1}^N (y-w)^\T (\la B_i , w \ra)^{-2} B_i B_i^\T (y-w) = \frac{1}{N}\sum_{i=1}^N \mu_i(y,w)^2.  
\end{equation*}
As a result, recalling the definition of $\Phi_f$ in \eqref{def: Phi_f}, we have 
\begin{eqnarray}
&&\Phi_f(y,w) + (L/6) \| y-w \|_{\na^2 f(w)}^3 - f(w)
\nonumber\\
&=& \frac{1}{N}  \sum_{i=1}^N \Big\{ \mu_i(y,w)  + \frac{1}{2} \mu_i(y,w)^2  \Big\} + \frac{L}{6} \Big( \frac{1}{N} \sum_{i=1}^N \mu_i(y,w)^2  \Big)^{3/2} \nonumber 
\end{eqnarray}
for all $w,y$ and $L$. 
From Assumption \ref{assumption:  sub-problem solution}, we have
\begin{eqnarray}\label{ineq15}
\frac{1}{N}  \sum_{i=1}^N \Big\{ \mu_i(y^k,w^k)  + \frac{1}{2} \mu_i(y^k,w^k)^2  \Big\} + \frac{L_k}{6} \Big( \frac{1}{N} \sum_{i=1}^N \mu_i(y^k,w^k)^2  \Big)^{3/2} \le 0. 
\end{eqnarray}
Let $p:= \frac{1}{N} \sum_{i=1}^N \mu_i(y^k,w^k)$ \text{and} $q:= (\frac{1}{N} \sum_{i=1}^N \mu_i(y^k,w^k)^2)^{1/2},$ then \eqref{ineq15} is equivalent to $p + (1/2) q^2 + ({L_k}/{6})) q^3 \le 0$. 
By Jensen's inequality, it holds $p^2 \le q^2$, so we have 
\begin{equation}
({L_k}/{6}) q^3  \le 
({1}/{2}) q^2 + ({L_k}/6) q^3 \le 
-p \le q \ , \nonumber
\end{equation}
and thus $q \le (6/L_k)^{1/2}$. 
Recall that $q^2 = \nabla^2 f(w^k)[y^k - w^k]^2$, and making use of the definition of $\| \cdot \|_{F,x}$ (see~\eqref{defn-norm-F,x})  along with Lemma~\ref{lemma: f-self-concordant}, we have:
$$
\sqrt{N} q = \sqrt{ N \na^2 f(w^k) [y^k - w^k ]^2 } = \| y^k - w^k \|_{f,w^k} \ .
$$
By using the observation $q \le (6/L_k)^{1/2}$ and the assumption $L_k \ge 24 N$, we have $\sigma_k:= \| y^k - w^k \|_{f,w^k} = \sqrt{N} q  \le 1/2 $. 
We have the following chain of inequalities
\begin{eqnarray}
&&
\Big|   f(y^k) - f(w^k) - \la \na f(w^k), y^k-w^k  \ra - \frac{1}{2} \na^2 f(w^k) [y^k - w^k]^2   \Big| \nonumber\\
&\mathop{=}\limits^{(i)}&
\Big| \int_{[0,1]^3} t^2s \na^3 f(w^k+rst(y^k-w^k)) [y^k-w^k]^3 \text{d}s \ \rmd t \ \rmd r \Big| \nonumber\\
&\le&
\int_{[0,1]^3} t^2 s \lt|  \na^3 f(w^k+rst(y^k-w^k)) [y^k-w^k]^3 \rt| \rmd s \ \rmd t \ \rmd r  \nonumber\\
&\mathop{\le}\limits^{(ii)}&
2 \sqrt{N} \int_{[0,1]^3} t^2 s   \lt|  \na^2 f(w^k+rst(y^k-w^k)) [y^k-w^k]^2 \rt|^{3/2} \rmd s \ \rmd t \ \rmd r  \nonumber\\
&\mathop{\le}\limits^{(iii)}&
2 \sqrt{N} \int_{[0,1]^3} t^2 s \frac{1}{(1-rst\sigma_k)^3} \lt| \na^2 f(w^k) [y^k - w^k]^2 \rt|^{3/2} \rmd s \ \rmd t \ \rmd r \nonumber\\
&\mathop{\le}\limits^{(iv)}&
16 \sqrt{N} \lt| \na^2 f(w^k) [y^k - w^k]^2 \rt|^{3/2}  \int_{[0,1]^3} t^2 s \ \rmd s \ \rmd t \ \rmd r \nonumber\\
&\le&
3 \sqrt{N} \lt| \na^2 f(w^k) [y^k - w^k]^2 \rt|^{3/2} \nonumber
\end{eqnarray}
where $(i)$ is from Taylor formula (see Lemma \ref{lemma: Talor}), $(ii)$ uses the fact that $f$ is a $1/N$-self concordant function, $(iii)$ is from \eqref{eqn: hessian compare}, and $(iv)$ is from $\sigma_k \le 1/2$. As a result, using $L_k \ge   24N$, we have:
\begin{eqnarray}
f(y^k ) &\le& \Phi_{f} (y^k, w^k)
+ 3 \sqrt{N} \lt| \na^2 f(w^k) [y^k - w^k]^2 \rt|^{3/2} \nonumber\\
&\le&
\Phi_{f} (y^k, w^k) + ({L_k}/{6}) \lt| \na^2 f(w^k) [y^k - w^k]^2 \rt|^{3/2} \nonumber
\end{eqnarray}
and hence Condition \eqref{condition1} is satisfied for any non-negative $\ga_k$.

\subsubsection{Proof of Theorem~\ref{theorem: global convergence}}
From Step 3 of Algorithm \ref{algorithm: Self-concordant Cubic-regularized Newton Method} and Assumption \ref{assumption:  sub-problem solution}, we have 
\begin{eqnarray}\label{ineq-1}
f(w^{k+1}) &\le& f(y^k) + \rho_k \nonumber\\
&\le&
\Phi_{f} (y^k,w^k) + \frac{L_k}{6} \| y^k- w^k \|^3_{\na^2 f(w^k) } + \rho_k + \ga_k \nonumber\\
&\le&
\min_{y\in \cC} \Big\{ 	\Phi_{f} (y,w^k) + \frac{L_k}{6} \| y- w^k \|^3_{\na^2 f(w^k)}  \Big\}  + E_k
\end{eqnarray} 
where the first inequality is by \eqref{condition2}, the second inequality is by \eqref{condition1}, and the third inequality follows from~\eqref{update-yk} and the definitions of $\dt_k$ and $E_k$. 
Recall that $w^*$ is an optimal solution of \eqref{problem1}. 
For $\tau \in [0,1]$, let $w^k(\tau):= \tau w^* + (1-\tau ) w^k$. 
Restricting the minimization in \eqref{ineq-1} to this line segment joining $w^k$ and $w^*$, 
we have 
\begin{eqnarray}\label{ineq-1.5}
f(w^{k+1}) \le	\min_{\tau\in [0,1]} \Big\{
\Phi_{f} (w^k(\tau),w^k) +\frac{L_k\tau^3}{6} \| w^*- w^k \|^3_{\na^2 f(w^k) }
\Big\} + E_k. 
\end{eqnarray}
By Taylor's expansion in Lemma \ref{lemma: Talor}, we have 
\begin{equation}\label{long1}
\begin{aligned}
&~~~~\Phi_{f} (w^k(\tau),w^k)  \\
& \mathop{=}f(w^k(\tau)) + 
\int_{[0,1]^3} \frac{2}{N} \sum_{i=1}^N
t^2s \lt(   \frac{\la B_i, w^k(\tau) - w^k \ra}{ \la B_i , w^k + str (w^k(\tau) - w^k) \ra  }    \rt)^3 \rmd s \ \rmd t \ \rmd r.
\end{aligned}
\end{equation}
Note that 
\begin{equation}\label{long2}
\begin{aligned}
&~~~~\frac{2}{N} \sum_{i=1}^N \lt(   \frac{\la B_i, w^k(\tau) - w^k \ra}{ \la B_i , w^k + str (w^k(\tau) - w^k) \ra  }    \rt)^3 \\
&= \frac{2}{N} \sum_{i=1}^N \tau^3 \lt(   \frac{\la B_i, w^* - w^k \ra}{ (1-str\tau) ( \la B_i, w^k \ra) + str\tau ( \la B_i , w^* \ra )  }    \rt)^3 \\
& \le 
\frac{2}{N} \sum_{i=1}^N \tau^3
\max \Big\{ 
\Big| \frac{\la B_i, w^* - w^k \ra }{  \la B_i , w^k \ra } \Big|^3, 
\Big| \frac{\la B_i, w^* - w^k \ra }{  \la B_i , w^* \ra } \Big|^3
\Big\} \\
&\le 2 C_1 \tau^3
\end{aligned}
\end{equation}
where, the first inequality is because $a /((1-\al)b + \al c ) \le \max \{a/b, a/c \}  $ for all $a,b,c \ge 0$ and $\al\in [0,1]$, and 
the last inequality is from the definition of $C_1$ in \eqref{def: C}
and the fact that $w^k \in X^0$ for all $k\ge 1$. 
By \eqref{long1} and \eqref{long2} we have
\begin{equation}\label{ineq-2}
\Phi_{f} (w^k(\tau),w^k) \le f(w^k(\tau)) + 2 C_1 \tau^3 	\int_{[0,1]^3}  t^2 s \ \rmd t \ \rmd s \ \rmd r = 
f(w^k(\tau)) +  \frac{ C_1 \tau^3}{3} \ .
\end{equation}
On the other hand, note that $ \| w^*- w^k \|^2_{\na^2 f(w^k) } = \frac{1}{N} \sum_{i=1}^N  \la B_i, w^* - w^k \ra^2  / \la B_i, w^k \ra^2, $ so
\begin{eqnarray}\label{ineq-3}
\| w^*- w^k \|^3_{\na^2 f(w^k) } 
\le 
\frac{1}{N} \sum_{i=1}^N \Big|  \frac{\la B_i, w^* - w^k \ra}{ \la B_i, w^k \ra } \Big|^3 \le C_1 \ ,
\end{eqnarray}
where we use Jensen's inequality for the first inequality; and~\eqref{def: C} for the second inequality. Combining inequalities \eqref{ineq-1.5}, \eqref{ineq-2} and \eqref{ineq-3}, we obtain
\begin{eqnarray}
f(w^{k+1}) &\le & f(w^k(\tau)) + \frac{C_1\tau^3}{3} + \frac{1}{6} \sup_{k\ge 0 } \{L_k\} C_1\tau^3 + E_k \nonumber\\
&\le&
f(w^k(\tau)) + C \tau^3 +E_k \label{proof-wtau-line-2} 
\end{eqnarray}
where, the second inequality uses the definition of $C$ in~\eqref{def: C}. 
Using the convexity of $\tau \mapsto f(w^k(\tau))$ in line~\eqref{proof-wtau-line-2}, we obtain 
\begin{equation}\label{proof-line-xyz-1}
f(w^{k+1}) \le 
(1-\tau) f(w^k) + \tau f^* + C \tau^3 + E_k
\end{equation}
which holds for all $\tau\in [0,1]$.
Taking $\tau_k = 3/(k+3)$ above, we get 
\begin{equation} \label{proof-line-xyz-111}
f(w^{k+1}) - f^* \le (1-\tau_k) (f(w^k) - f^*) + \frac{27C}{(k+3)^3} + E_k \ . 
\end{equation}
Let $T_0 = 1$ and $T_k = \prod_{i=1}^k(1-\tau_i) $, then we have 
\begin{equation}
T_k = \frac{6}{(k+1)(k+2)(k+3)}\quad \text{for }k\ge 1 \ . \nonumber
\end{equation}
It follows from~\eqref{proof-line-xyz-111} and the definition of $T_{k}$ that
\begin{eqnarray}
\frac{1}{T_k} (f(w^{k+1}) - f^* ) \le \frac{1}{T_{k-1}} (f(w^k) - f^*) + \frac{27C}{(k+3)^3 T_k} + \frac{E_k}{T_k} \ .  \nonumber
\end{eqnarray}
Summing the above over $k$ yields
\begin{equation}\label{proof-line-xyz1-1}
\frac{1}{T_k} (f(w^{k+1}) - f^* ) \le
\frac{1}{T_0} (f(w^{1}) - f^* )  + \sum_{i=1}^k \frac{27C}{(i+3)^3 T_i } + \sum_{i=1}^k \frac{E_i}{T_i} \ . 
\end{equation}
Note that 
$$f(w^1) - f^* \le f(w^0 ) - f^* \le C,~~T_0 = 1~~~\text{and}~~~27/(T_i (i+3)^3)  \le 9/2,$$
which when used in~\eqref{proof-line-xyz1-1}, leads to
\begin{equation}
\frac{1}{T_k} (f(w^{k+1}) - f^* ) \le 
C + \frac{9}{2} kC +  \sum_{i=1}^k \frac{E_i}{T_i} 
\le
5kC +  \sum_{i=1}^k \frac{E_i}{T_i}  \ \nonumber
\end{equation}
where the second inequality uses the assumption $k\ge2$.
As a result, 
\begin{equation}
\begin{aligned}
f(w^{k+1}) - f^* &\le \frac{30C}{(k+2)(k+3)} + \frac{1}{(k+1)(k+2)(k+3)} \sum_{i=1}^k (i+1)(i+2)(i+3) E_i  \\
&\le \frac{30C}{(k+2)^2} + \frac{1}{(k+1)^3} \sum_{i=1}^k (i+3)^3 E_i 
\end{aligned} \nonumber
\end{equation}
which completes the proof.

\subsubsection{Proof of Theorem~\ref{theorem: local convergence}}
In this proof, for notational convenience, we use the shorthand notations: 
$$H_k := \na^2 f(w^k), ~~\eta_k:= \| w^{k} - w^* \|_{H_k}~~\text{and}~~\eta_{k}':= \| w^{k+1} - w^* \|_{H_k}.$$ 
Suppose for some $k\ge K$, we have:
\begin{eqnarray}\label{local-condition}
(6\sqrt{N}+2\bar L) \| w^k - w^* \|_{H_*} \le 1 \ .
\end{eqnarray}
It suffices to prove that 
\begin{equation}\label{k-conclusion}
\| w^{k+1} - w^* \|_{H_*} \le  (6\sqrt{N} + 2\bar L ) \| w^{k} - w^* \|_{H_*}^2 \ . 
\end{equation}
This is because, by \eqref{local-condition} and \eqref{k-conclusion} it follows that $(6\sqrt{N}+2\bar L) \| w^{k+1} - w^* \|_{H_*} \le 1$---we can then apply the same arguments for $k+1$ in place of $k$ and prove Theorem \ref{theorem: local convergence} by induction. 

Below we prove \eqref{k-conclusion} assuming that~\eqref{local-condition} holds true. 
First, note that by condition~\eqref{local-condition}, we have
\begin{equation}\label{eta_k <= 1/4}
\| w^k - w^* \|_{f,w^*} = \sqrt{N} \| w^k - w^* \|_{H_*} \le 1/6  \ .
\end{equation}
Hence by \eqref{eqn: hessian compare} we have 
\begin{eqnarray}\label{compare-hess}
(25/36) H_* \preceq H_k \preceq (36/25) H_* \ .
\end{eqnarray}

Let $h$ be the indicator function of the set $\cC$, that is, $h(w) = 0$ for $w\in \cC$ and $h(w) = \infty$ for $w\notin \cC$. Since by our assumption, 
$y^{k}=w^{k+1}$ is a minimizer for~\eqref{update-yk}, we have
\begin{eqnarray}
\na f(w^k)  + \mu_k \na^2 f(w^k) (w^{k+1} - w^k) + v^{k+1} = 0 \ , \nonumber
\end{eqnarray}
where 
$v^{k+1} \in \pa h(w^{k+1})$ and $\mu_k:=1+ \frac{L_k}{2} \| w^{k+1} - w^k \|_{H_k} $. 
On the other hand, since $w^*$ is the optimal solution to~\eqref{problem1}, 
there exists $v^* \in \pa h(w^*)$ such that $\na f(w^*) + v^* = 0$---this leads to
\begin{eqnarray}\label{ineq11}
\na f(w^k) - \na f(w^*)  + \mu_k \na^2 f(w^k) (w^{k+1} - w^k) + v^{k+1} - v^* = 0 \ .
\end{eqnarray}

Multiplying both sides of~\eqref{ineq11} by $w^{k+1} -w^*$, and in view of $(w^{k+1} - w^*)^\T (v^{k+1} - v^*) \ge 0$ (since $h$ is a convex function), we have
\begin{eqnarray}
0 &\ge &
(w^{k+1} - w^*)^\T \Big(   
\na f(w^k) - \na f(w^*) + \mu_k \na^2 f(w^k) (w^{k+1} - w^k)
\Big) \nonumber\\
&=&
(w^{k+1} - w^*)^\T \Big(  
\na f(w^k) - \na f(w^*) - \mu_k \na^2 f(w^k) (w^{k} - w^*)
\Big) + \mu_k \| w^{k+1} - w^* \|^2_{H_k}  . \nonumber
\end{eqnarray}
Recall that $ \eta_{k}' = \| w^{k+1} - w^* \|_{H_k}$.  
Hence from the above, we get:
\begin{eqnarray}
\mu_k (\eta_{k}')^2 &\le& ( w^{k+1} - w^* )^{\T}  \lt(
\na f(w^*) - \na f(w^k) - \mu_k \na^2 f(w^k) (w^* - w^k)
\rt) \nonumber\\
&\le&
\eta_{k}' 
\|
H_k^{-1/2} \lt(  \na f(w^*) - \na f(w^k) - \mu_k \na^2 f(w^k) (w^* - w^k)   \rt)
\|_2 \ , \nonumber
\end{eqnarray}
and hence
\begin{equation}\label{ineq12}
\begin{aligned}
\mu_k \eta_{k}' &\le 
\|
H_k^{-1/2} \lt(  \na f(w^*) - \na f(w^k) - \mu_k \na^2 f(w^k) (w^* - w^k)   \rt)
\|_2 \\ 
&\le
\|
H_k^{-1/2} \lt(  \na f(w^*) - \na f(w^k) - \na^2 f(w^k) (w^* - w^k)   \rt)
\|_2 \\
&~~~~~~+ 
(L_k/2) \| w^{k+1} - w^k \|_{H_k } \| w^k - w^* \|_{H_k} \ ,
\end{aligned}
\end{equation}
where the second inequality in~\eqref{ineq12} makes use of the definition of $\mu_k=1+ \frac{L_k}{2} \| w^{k+1} - w^k \|_{H_k}$. 
Let us denote
\begin{eqnarray*}
	&J_1:= \|
	H_k^{-1/2} \big(  \na f(w^*) - \na f(w^k) - \na^2 f(w^k) (w^* - w^k)   \big)
	\|_2  ~~~&~~~~~\text{and}\nonumber\\
	&J_2:=(L_k/2) \| w^{k+1} - w^k \|_{H_k } \| w^k - w^* \|_{H_k}.~~~~~~&
\end{eqnarray*}
Then~\eqref{ineq12} reduces to 
\begin{eqnarray}\label{bound-J1J2}
\mu_k \eta_{k}' \le J_1 + J_2 \ .
\end{eqnarray}
By the definition of $J_1 $ one has
\begin{eqnarray}\label{eq-J1}
J_1 = \Big\| \int_{0}^1 
H_k^{-1/2}  \big(  \na^2 f(w^k(t) )  - \na^2 f(w^k) \big) (w^* - w^k)  \rmd t
\Big\|_2  \ , 
\end{eqnarray}
where $w^k(t) = w^k + t (w^* - w^k)$.

Let $$r:= \| w^k - w^*\|_{f,w^k} = \sqrt{N} \| w^k - w^*\|_{H_k}.$$ 
By \eqref{compare-hess} we have 
\begin{equation}\label{bound-r}
r \le (6/5) \sqrt{N} \| w^k - w^*\|_{H_*} = (6/5) \| w^k - w^*\|_{f,w^*} \le 1/5 .
\end{equation}
where, the second inequality is by \eqref{eta_k <= 1/4}.

For any $t\in [0,1]$, by the definition of $r$ and~\eqref{bound-r},
we have 
$$\| w^k(t) - w^k \|_{f,w^k} = t  \| w^* - w^k \|_{f,w^k} = tr \leq 1/5<1.$$ 
Therefore, 
using \eqref{eqn: hessian compare}, for $t\in [0,1]$, we obtain
\begin{equation}\label{ineq13}
\na^2 f(w^k(t)) - \na^2 f(w^k) 
\preceq  \lt( (1-tr)^{-2} - 1 \rt) \na^2 f(w^k) \preceq 3 tr \na^2 f(w^k) \ ,
\end{equation}
where the last inequality is by $tr\in [0,1/5]$ and Lemma \ref{lemma: basic3}.  
Similarly, we can obtain:
\begin{equation}\label{ineq14}
\na^2 f(w^k (t)) - \na^2 f(w^k)  
\succeq  	\lt( (1-tr)^2 -1  \rt) \na^2 f(w^k)  \succeq 
-3 tr \na^2 f(w^k).
\end{equation}
Combining \eqref{ineq13} and \eqref{ineq14}, and using Lemma \ref{lemma: matrix inequality} with $\bar A =\na^2 f(w^k (t)) - \na^2 f(w^k)  $ and $\bar B = 3tr \na^2 f(w^k) = 3tr H_k$, we  have $\bar A\bar B^{-1}\bar A \preceq \bar B$, which implies: 
\begin{eqnarray}\label{key-1}
\| 
H_k^{-1/2}  \big(  \na^2 f(w^k(t))  - \na^2 f(w^k) \big) (w^* - w^k)  \|_2 \le 3tr\| w^* - w^k \|_{H_k}. 
\end{eqnarray}
Note that $r = \| w^k - w^*\|_{f,w^k} = \sqrt{N} \| w^k - w^* \|_{H_k} = \sqrt{N} \eta_k$ and $\| w^k - w^* \|_{H_k} =  \eta_k $, so \eqref{key-1}
is equivalent to
\begin{equation}\label{key0}
\| 
H_k^{-1/2}  \big(  \na^2 f(w^k(t))  - \na^2 f(w^k) \big) (w^* - w^k)  \|_2  \le 3t\sqrt{N}\eta_k^2.
\end{equation}
By~\eqref{eq-J1} and the convexity of the norm $\| \cdot\|_2$, one has
\begin{eqnarray}\label{bound-J1}
J_1 \le
\int_0^1 \| 
H_k^{-1/2}  \big(  \na^2 f(w^k(t))  - \na^2 f(w^k) \big) (w^* - w^k)  \|_2 \rmd t 
\le (3/2)\sqrt{N} \eta_k^2 
\end{eqnarray}
where, the second inequality is by~\eqref{key0} and $\int_{0}^1 t \rmd t = 1/2$.
On the other hand, 
\begin{eqnarray}\label{bound-J2}
J_2 = ({L_k}/{2}) \| w^{k+1} - w^k \|_{H_k } \| w^k - w^* \|_{H_k} \le ({\bar L}/{2}) (\eta_k + \eta_{k}') \eta_k
\end{eqnarray}
where, the inequality in~\eqref{bound-J2} follows by noting that $L_k \le \bar L $ and 
$$\| w^{k+1} - w^k \|_{H_k }\le \| w^{k+1} - w^* \|_{H_k } + \| w^{k} - w^* \|_{H_k } = \eta_k + \eta_k' .$$ 
Combining \eqref{bound-J1J2}, \eqref{bound-J1} and \eqref{bound-J2}, we have 
\begin{eqnarray}\label{ineq--2}
\mu_k \eta_{k}' \le (3/2)\sqrt{N} \eta_k^2 + ({\bar L}/{2}) (\eta_k + \eta_{k}') \eta_k \ .
\end{eqnarray}
By \eqref{compare-hess} and assumption \eqref{local-condition} we have 
\begin{eqnarray}
\bar L \eta_k = \bar L \| w^k-w^* \|_{H_k} \le (6/5) \bar L \| w^k-w^* \|_{H_*} \le 1, \ \nonumber
\end{eqnarray}
which implies $1/2 - \bar{L}\eta_{k}/2 \geq 0$. 
Hence we have 
\begin{equation}\label{key1}
\begin{aligned}
\eta_{k}' /2 \le& \eta_{k}'/2 +  \eta_{k}'/2 - \bar L \eta_k \eta_{k}'/2 \\
=& \eta_{k}' - \bar L \eta_k \eta_{k}'/2 \\
\le& 
\mu_k \eta_{k}' - \bar L \eta_k \eta_{k}'/2 \\
\le&
((3/2) \sqrt{N} + {\bar L}/{2}) \eta_k^2,
\end{aligned}
\end{equation}
where the second inequality is because
$\mu_k \ge 1 $, and the third inequality is by
\eqref{ineq--2}. From \eqref{key1} and the definitions of $\eta_k$ and $\eta_{k}'$, we have
\begin{eqnarray}\label{key2}
\| w^{k+1} - w^* \|_{H_k} \le (3\sqrt{N} + \bar L ) \| w^{k} - w^* \|_{H_k}^2 \ . 
\end{eqnarray}
From \eqref{key2} and 
using \eqref{compare-hess}, we have 
\begin{equation}
\| w^{k+1} - w^* \|_{H_*} \le (6/5)^3 (3\sqrt{N} + \bar L ) \| w^{k} - w^* \|_{H_*}^2  \le (6\sqrt{N} + 2\bar L) \| w^{k} - w^* \|_{H_*}^2 \  \nonumber
\end{equation}
which completes the proof.

\subsection{Proofs: computing the LP oracles}\label{sec:append:LP-oracles}

\subsubsection{Proof of Proposition~\ref{proof-proposition-4.1}}

	If $w \in \cC^-$, define $y\in \R^M$ by $y_M = M w_M$ and $y_i = i (w_i - w_{i+1})$ for $i\in [M-1]$. Then equivalently we have $w = Uy$. Moreover, 
	by the definition of $\cC^-$, we have $y\ge 0$ and 
	$
	1_M^\T y = Mw_M + \sum_{i=1}^{M-1} i(w_i - w_{i+1}) = \sum_{j=1}^M w_j  =1.
	$
	Hence $y\in \Dt_M$, so we have proved $\cC^- \subseteq U(\Dt_M)$.

	On the other hand, suppose $y\in \Dt_M$, and let $w = Uy$. Then 
	$
	1_M^\T w  = 1_M^\T U y = 1_M^\T y = 1
	$;
	additionally, we have $w\ge 0$ and $ w_1 \ge w_2 \ge \cdots \ge w_M $. Hence $w=Uy \in \cC^-$, and $\cC^- = U(\Dt_M)$.

\subsubsection{Proof of Proposition~\ref{proposition:4point3}}

	We divide all the inequality constraints in the set $\cC^{\wedge}$, defined in~\eqref{C: concavity}, into two groups:
	$$
	(g1): ~w_i \ge 0 ,~ i\in[M]; \qquad (g2):  ~ 2w_i \ge w_{i-1}+ w_{i+1} ,~ 2\le i \le M-1.
	$$	
	Suppose $\bar w$ is a vertex of $\cC^{\wedge}$. 
By the definition of a vertex, $\bar w$ is given by a solution to a system of $M$ independent equations.
	Since there is one equality constraint $1_M^\T w = 1$, there must exist $M-1$ independent inequality constraints from (g1) and (g2), for which equality holds. Since (g2) contains $M-2$ constraints, there must exist some $i\in [M]$ such that $\bar w_i = 0$. By the concavity constraints, it is easy to see that $\#\{i: \bar w_i = 0\} \le 2$, since otherwise, all coordinates will be zero. 
	
	If $\#\{i: \bar w_i = 0\} = 1$, then either $\bar w_1 = 0$ or $\bar w_M = 0$, and
	all constraints in (g2) will be active---this implies $\bar w$ is linear. Combined with the fact that $1_M^\T \bar w = 1 $, it follows that $\bar w$ must be either $v_1$ or $v_M$. 
	
	If $\#\{i: \bar w_i = 0\} = 2$, then we must have $\bar w_1 = 0$ and $\bar w_M = 0$, and the $M-3$ constraints in (g2) will be active at $\bar w$. As a result, $\bar w$ is piecewise linear with two pieces. 
	Combined with the constraint that $1_M^\T \bar w = 1 $, we know that $\bar w$ must be one of the points $v_2,\ldots,v_{M-1}$. 
	
	Finally, it is easy to check that any point among $v_1,\ldots,v_M$ satisfies $M-1$ active inequality constraints, hence it is a vertex. This completes the proof of this proposition.

\subsubsection{Proof of Proposition~\ref{proposition: vectices concavity}}

	We divide all the inequality constraints in the set $ \cC^{{\vee}}$, defined in
	\eqref{C: convexity}, into two groups:
	$$
	(g1): ~w_i \ge 0 ,~ i\in[M]; \qquad (g2):  ~ 2w_i \le w_{i-1}+ w_{i+1} ,~ 2\le i \le M-1.
	$$
	Suppose $\bar w$ is a vertex of $ \cC^{{\vee}}$.  
	Since there is one equality constraint $1_M^\T w = 1$, a vertex must have $M-1$ independent active inequality constraints that are also independent to the equality constraint. Since (g2) has $M-2$ constraints, there must exist some $i\in [M]$ such that $\bar w_i = 0$. 
	By the convexity constraints in (g2), it can be seen that there exists a pair $(i_{1}, i_{2})$ with $1\le i_1 \le i_2 \le M$ such that 
	$$
	\bar w_i = 0 ~ ~{\rm for} ~ i_1 \le i \le i_2;~~~~~\text{and}~~~~~ \bar w_i >0 ~ ~{\rm for}  ~ 1 \leq i<i_1, ~\text{or}~M \geq i>i_2. 
	$$
	We claim that either $i_1 = 1$ or $i_2 = M$. Otherwise, suppose $1<i_1 \le i_2 <M$, then by the definition of $i_1$ and $i_2$, there are $i_2-i_1+1$ active constraints in (g1), and there are at most $(i_1-2) + (M-i_2-1)$ active constraints in (g2) that are independent to those 
	active constraints in (g1)---in total, there are at most $M-2$ independent inequality constraints that are active, which is not possible for a vertex $\bar w$.
	
	If $i_1 = 1$, then there are $i_2$ active constraints in (g1). For $\bar w$ to be a vertex, we need at least $ M-1-i_2 $  active constraints in (g2) that are independent to the
	active constraints in (g1). By the definition of $i_2$, we must have $2w_{i_2} < w_{i_2-1} + w_{i_2+1}$, so we should have $2w_{i} = w_{i-1} + w_{i+1}$ for all $i_2+1 \le i \le M-1$ (in order that there are $ M-1-i_2 $  active constraints in (g2)). In this case, using the equality constraint $1_M^\T w = 1$, we can see that $\bar w$ is of the form 
	$$
	 a_k^{-1}   (0^\T_{M-k}, 1, 2, \dots , k)    
	$$
	for some $1 \le k \le M$. On the other hand, we can verify that any point of the above form satisfies $M-1$ active inequality constraints, and is hence a vertex.
	
	Similarly, if $i_2 = M$, we can show that a vertex $\bar w$ must be of the form 
	$$
	a_k^{-1}  (k, k-1,  \dots , 2,1, 0^\T_{M-k}) 
	$$
	for some $1 \le k \le M$.

\subsubsection{Proof of Proposition~\ref{proposition: concave increasing restriction}}

	Note that under the constraints $ 2w_{i} \ge w_{i-1} + w_{i+1} $ for $2\le i \le M-1$, the single inequality $ w_{M-1} \le w_M $ will imply $ w_1 \le \cdots \le w_M $. As a result, we can rewrite $\cC^{\wedge  +}$ as follows:
		\begin{eqnarray}\label{C: concave and increasing2}
		\cC^{\wedge  +} = \lt\{    w \in \R^M :~ 1_M^\T w = 1, ~w \ge 0, ~   w \text{ is concave}, ~~ w_{M-1}\le w_M \rt\}  .  \nonumber
		\end{eqnarray}
	In the following, we separate all the inequality constraints above into 3 groups:
	$$
	(g1): ~w_i \ge 0 ,~ i\in[M]; ~~ (g2):  ~ 2w_i \ge w_{i-1}+ w_{i+1} ,~ 2\le i \le M-1;~~
	(g3): w_{M-1} \le w_M.
	$$
	Let $\bar w$ be a vertex of $\cC^{\wedge  +}$. 
	We first claim that for all $2\le i \le M$, we must have $\bar w_i > 0$. 
	Suppose (for the sake of contradiction), for some $2\le i\le M$ it holds that $ \bar w_i  =0 $, then by the increasing constraints and $\bar w \ge0$, we have $ \bar w_j = 0 $ for all $1\le j\le i$. Moreover, by the concavity constraints, we have $ \bar w_k = 0 $ for all $i\le k \le M$. As a result, $\bar w = 0$, which is a contradiction to $1_M^\T \bar w = 1$.

	 When $\bar w_1$ is also positive, in order that there are $M$ independent active constraints in total, all constraints in (g2) and (g3) should be active, and hence $\bar w_i = \bar w_j$ for all $1\le i,j \le M$. Combining with $1_M^\T \bar w = 1$, we have $\bar w = v_1$. 
	 When $\bar w_1 = 0$ and the constraint in (g3) is not active, 
	 in order that there are $M$ independent active constraints in total, all constraints in (g2) should be active. Combining with $1_M^\T \bar w = 1$, we obtain $\bar w = v_M$. 
	 When $\bar w_1 = 0$ and the constraint in (g3) is active, in order that there are $M$ independent active constraints in total, there are at least $M-3$ constraints in (g2) being active, which means $\bar w$ is piecewise linear with two pieces, and is a constant on the second piece (since $w_{M-1} = w_M$). 
	 Combining with $1_M^\T \bar w = 1$, we know that $\bar w$ must be among $\{v_2,...,v_{M-1}\}$. 
	 
	 On the other hand, it is easy to check that $v_1,...,v_M$ indeed satisfy $M$ independent active constraints, hence they are indeed vertices of $\cC^{\wedge  +}$.

\subsubsection{Proof of Proposition~\ref{proposition: convex increasing restriction}}

	Note that under the convexity constraints $ 2w_{i} \le w_{i-1} + w_{i+1} $ for $2\le i \le M-1$, the single inequality $ w_{1} \le w_2 $ will imply $ w_1 \le \cdots \le w_M $. As a result, we can rewrite $\cC^{\vee +}$ as follows:
	\begin{eqnarray}\label{C: convex and increasing2}
	\cC^{\vee +} = \lt\{    w \in \R^M :~ 1_M^\T w = 1, ~w \ge 0, ~   w \text{ is convex}, ~~ w_{1}\le w_2 \rt\}. \nonumber
	\end{eqnarray}
	In the following, we separate all the inequality constraints above into 3 groups:
	$$
	(g1): ~w_i \ge 0 ,~ i\in[M]; \quad (g2):  ~ 2w_i \le w_{i-1}+ w_{i+1} ,~ 2\le i \le M-1;\quad 
	(g3): w_{1} \le w_2.
	$$
	Now let $\bar w$ be a vertex of $	\cC^{\vee +}$. If $\bar w_1 >0$, then by the increasing constraints, $\bar w_i >0$ for all $i\in [M]$, and all constraints in (g1) are inactive. In order that there are $M$ independent active constraints at $\bar w$, all constraints in (g2) and (g3) should be active, and hence $\bar w_i = \bar w_j$ for all $1\le i,j\le M$. Since also $1_M^\T \bar w = 1$, we have $\bar w = v_M$.

If $\bar w_1 = 0$ and $\bar w_2 > \bar w_1$, in order that there are $M$ independent active constraints at $\bar w$, all constraints in (g2) should be active, hence $\bar w$ is linear. Combined with $ 1_M^\T \bar w = 1 $, we have $\bar w = v_{M-1}$. If $\bar w_1 = 0$ and $\bar w_1 = \bar w_2$, in order that there are $M$ independent active constraints at $\bar w$, there are at least $M-3$ active constraints in (g2), which means $\bar w$ is piecewise linear, and is a constant on the first piece (since $\bar w_1 = \bar w_2$). Combined with $ 1_M^\T \bar w = 1 $, we know that $\bar w$ must be one of $v_1, ..., v_{M-2}$. 
	
		 On the other hand, it is easy to check that $v_1,...,v_M$ indeed satisfy $M$ independent active constraints, and are vertices of $\cC^{\vee +}$.

\subsubsection{Proof of Proposition~\ref{proposition-unimodality-formula}}

	In the proof, we use the notation $\cC^{um}_k(M) $ for $\cC^{um}_k$ to highlight the dependence on $M$.

	We first show that for all $k \in [M]$, if $\wtd w$ is a vertex of $\cC^{um}_k(M)$ satisfying $\wtd w_i >0$ for all $i\in [M]$, then $\wtd w = (1/M) 1_M$. Indeed, since $\wtd w$ is a vertex of $\cC^{um}_k(M)$, there are $M$ independent active constraints at $\wtd w$. Since there is only one equality constraint in the definition of $\cC^{um}_k(M)$, there are at least $M-1$ independent inequality constraints for which equality holds at $\wtd w$. We separate all the inequality constraints in $\cC^{um}_k(M)$ into 3 groups:
	$$
	(g1): w_i \ge 0 ,~ i\in[M]; ~~ (g2):    w_{i-1} \le w_i  ,~ 2\le i \le k;~~
	(g3): w_i \ge w_{i+1} , ~ k \le i \le M-1.
	$$
	Since $\wtd w_i>0$ for all $i\in [M]$, no constraints in (g1) are active. Since there are $(k-1) + (M-k) = M-1$ constraints in (g2) and (g3), all the constraints in (g2) and (g3) are active at $\wtd w$. This implies $\wtd w = (1/M) 1_M$. 
	
	Below we prove the main conclusion. 
	Let $\bar w$ be a vertex of $\cC^{um}_k(M)$. By the constraints in $\cC^{um}_k(M)$, $\bar w$ is a unimodal vector with all coordinates being non-negative, so
	there exist $k_1 , k_2$ satisfying $1\le k_1 \le k \le k_2\le M $ such that $\bar w_{i} >0$ for all $k_1\le i \le k_2$ and $\bar w_i = 0$ for  all $i<k_1$ or $i>k_2$. Let $\bar u \in \R^{k_2-k_1+1}$ be the subvector of $\bar w$ restricted to the positive entries --- that is, 
	 $\bar u_i = \bar w_{i+k_1-1}$ for $i\in [k_2-k_1+1]$. Then $\bar u$ is a vertex of $ \cC^{um}_{ k-k_1+1} (k_2 - k_1 +1)$ (Since otherwise this is a contradiction to our assumption that $\bar w$ is a vertex of $\cC^{um}_k(M)$).
From the discussion in the earlier paragraph, we have $\bar u_i = 1/ (k_2 - k_1 +1)$ for all $i\in [k_2 - k_1 +1]$. This implies $\bar w = v^{k_1,k_2}$. 

On the other hand, it is easy to check that for all $1\le k_1 \le k \le k_2 \le M$,  the vector $v^{k_1,k_2}$ satisfies $M$ independent active constraints, so it is indeed a vertex of $\cC^{um}_k(M)$. This completes the proof.

\subsection{Proofs from Section~\ref{sec: approximation-guarantees}}

\subsubsection{Proof of Proposition~\ref{prop: optimality gap}}

	It is immediate from the definition that $p^* \le \hat p_\cG$. On the other hand, define 
	$
	\wtd \nu_i :=    {N\hat \nu_i}/{\Gamma} 
	$
	for $i\in [N]$.
	Then by our assumption on $\Gamma$, it holds
	$$
	\max_{\mu \in \R} \sum_{i=1}^N \wtd \nu_i \varphi (X_i - \mu) = 
	\frac{N}{\Gamma}\max_{\mu \in \R} \sum_{i=1}^N \hat \nu_i \varphi (X_i - \mu) \le N \ ,
	$$
    hence $\wtd \nu$ is a feasible solution of the (infinite) dual problem \eqref{infinite-constraints-dual}. As a result,
    $$
    p^* \ge \sum_{i=1}^N \log \wtd\nu_i = \sum_{i=1}^N \log \hat\nu_i + N\log(N/\Gamma) = \hat p_\cG- N\log(\Gamma/N) \
    $$
    where the last equality is because $\hat \nu$ is the optimal solution of \eqref{discrete-dual}. 
    This completes the proof of \eqref{bound1}.  
    
    Finally, since $\hat \nu$ is a solution of \eqref{discrete-dual}, it holds $  \sum_{i=1}^N \hat \nu_i \varphi (X_i - \hat \mu_j)  \le N  $
    for all $j\in [M]$. Denote $X_{\min} = \min_{i\in [N]} X_i $ and $ X_{\max} =  \max_{i\in [N]} X_i$. 
    Since
    $\varphi$ is Lipschitz continuous, for any $   \mu \in [X_{\min}, X_{\max}] $, there exists $j\in [M]$ such that $ |\mu - \hat \mu_j| \le {\Dt \cG}/2$, hence 
    it holds
    \begin{eqnarray}
\Big| \sum_{i=1}^N \hat \nu_i \varphi (X_i - \mu) - \sum_{i=1}^N \hat \nu_i \varphi (X_i - \hat \mu_j) \Big| &\le&
\sum_{i=1}^N \hat \nu_i \frac{1}{\sqrt{2\pi e}} \frac{\Dt \cG}{2} \ , \nonumber
    \end{eqnarray}
    and thus
    $$
    \sum_{i=1}^N \hat \nu_i \varphi (X_i - \mu)  ~\le~ 
    \sum_{i=1}^N \hat \nu_i \varphi (X_i - \hat \mu_j) +  \frac{\Dt \cG}{\sqrt{8\pi e}}  \sum_{i=1}^N \hat \nu_i     ~\le~ 
    N+  \frac{\Dt \cG}{\sqrt{8\pi e}}  \sum_{i=1}^N \hat \nu_i \ . 
    $$
    On the other hand, it is easy to check that the function $\mu \mapsto  \sum_{i=1}^N \hat \nu_i \varphi (X_i - \mu) $ is increasing on $(-\infty, X_{\min}]$ and decreasing on $[X_{\max}, \infty)$, hence
    \begin{equation}\label{K-example}
    \max_{\mu \in \R}   \sum_{i=1}^N \hat \nu_i \varphi (X_i - \mu)    \le  N +  \frac{\Dt \cG}{\sqrt{8\pi e}}  \sum_{i=1}^N \hat \nu_i  \ . 
    \end{equation}
The above when combined with \eqref{bound1} leads to~\eqref{bound2}.

\subsubsection{Proof of Proposition~\ref{prop: error bound dual}}

	Recall that the function $F(\nu):= -\sum_{i=1}^N \log \nu_i$ is 1-self-concordant on $\R^N_{++}$. By \eqref{eqn: self-concordant strong convex} we have
	\begin{eqnarray}\label{ineq17}
	\sum_{i=1}^N (\log(\hat \nu_i) - \log(\nu_i^*) ) \ge
	\la \na F(\hat \nu), \nu^* - \hat \nu\ra + \rho \Big(\Big({\sum_{i=1}^N  {(\nu_i^* - \hat \nu_i)^2}/{\hat \nu_i^2}  }\Big)^{1/2}\Big) \ .
	\end{eqnarray}
	Let $\cS$ be the feasible region of \eqref{discrete-dual}. Since $\hat \nu$ is the optimal solution of \eqref{discrete-dual}, and $\nu^*$ is in $\cS$, it holds $\la \na F(\hat \nu), \nu^* - \hat \nu\ra \ge 0$.  As a result, \eqref{ineq17} implies
	\begin{eqnarray}
	\hat p_{\cG} - p^* ~=~ \sum_{i=1}^N \log(\hat \nu_i) - \log(\nu_i^*) ~\ge~   \rho \Big(\Big({\sum_{i=1}^N  {(\nu_i^* - \hat \nu_i)^2}/{\hat \nu_i^2}  }\Big)^{1/2}\Big) \ . \nonumber
	\end{eqnarray}
	The proof is complete by taking the inverse of $\rho(\cdot)$.

\section{Auxiliary results}\label{appendix section: Auxiliary results}

\begin{lemma}\label{lemma: Talor}
	For any third-order continuous differentiable function $f(\cdot)$ on a open set $Q\subseteq \R^M$ and any $w,y\in Q$, it holds
	\begin{eqnarray}
	&&f(y) - f(w) - \la \na f(w), y-w \ra - \frac{1}{2} \na^2 f(w ) [ y-w]^2  \nonumber\\
	&=&  \int_{0}^1  \int_{0}^1  \int_{0}^1  t^2 s \na^3 f\lt( w+ rst (y-w) \rt) [y-w]^3 \rmd t \ \rmd s\ \rmd r \nonumber
	\end{eqnarray} 
\end{lemma}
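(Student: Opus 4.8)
The plan is to reduce the identity to the one-dimensional Taylor expansion of $f$ along the segment joining $w$ and $y$, obtained by applying the fundamental theorem of calculus three times in a telescoping manner. Write $h := y-w$; for the right-hand side to be meaningful the points $w + rst\,h$ with $r,s,t\in[0,1]$ must lie in $Q$, i.e.\ the segment $\{w+\tau h : \tau\in[0,1]\}\subseteq Q$ (which holds in all our applications, where $Q$ is convex). All the scalar-valued maps appearing below are then continuously differentiable on $[0,1]$, with derivatives continuous on the relevant compact sets, so every integral that follows is well-defined.

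First I would apply the fundamental theorem of calculus to $r\mapsto f(w+rh)$ to get $f(y)-f(w)=\int_0^1 \la \na f(w+rh), h\ra\, \rmd r$. Next, inside this integrand, I would write $\la \na f(w+rh)-\na f(w), h\ra = \int_0^1 \tfrac{\rmd}{\rmd s}\la\na f(w+srh),h\ra\, \rmd s = r\int_0^1 \na^2 f(w+srh)[h]^2\, \rmd s$, the prefactor $r$ coming from the chain rule. One further application gives $\na^2 f(w+srh)[h]^2 - \na^2 f(w)[h]^2 = sr\int_0^1 \na^3 f(w+tsrh)[h]^3\, \rmd t$. Substituting these back, using Fubini's theorem (legitimate since $\na^3 f$ is bounded on the compact segment) to merge the iterated integrals, and evaluating $\int_0^1\!\int_0^1 r\,\rmd s\,\rmd r = \tfrac12$ to recover the quadratic term, I would obtain
\[
f(y) - f(w) - \la \na f(w), h\ra - \tfrac12\na^2 f(w)[h]^2 = \int_0^1\!\!\int_0^1\!\!\int_0^1 s\,r^2\,\na^3 f\lt(w+tsr\,h\rt)[h]^3\, \rmd t\, \rmd s\, \rmd r.
\]

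Finally, since the product $rst$ is symmetric in its three arguments, each integrated over $[0,1]$, relabelling the dummy variables $(r,s,t)\mapsto(t,s,r)$ converts the coefficient $s\,r^2$ into $t^2 s$ while leaving the evaluation point $w + rst\,h$ unchanged and leaving the measure invariant; this is precisely the claimed identity with $h = y-w$. The only real work is bookkeeping the chain-rule prefactors produced at each differentiation and the routine appeal to Fubini, so I do not expect a genuine obstacle.
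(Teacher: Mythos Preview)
Your proposal is correct and follows essentially the same route as the paper: three successive applications of the fundamental theorem of calculus along the segment from $w$ to $y$, with the chain-rule factors producing the weight $t^2 s$ (in the paper's labelling). The only cosmetic difference is that the paper names the integration variables $t,s,r$ from the start so that no final relabelling is needed, whereas you name them $r,s,t$ and then permute; your added remark that the segment must lie in $Q$ is a useful clarification the paper leaves implicit.
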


\begin{proof}
	Note that 
	\begin{eqnarray}
	f(y) - f(w)  - \la \na f(w), y-w \ra &=&
	\int_0^1 \lt[ \na f(w+t(y-w)) - \na f(w)  \rt]^\T  (y-w) \rmd t \nonumber\\
	&=&
	\int_0^1 \int_0^1  t \na^2 f(w+ ts (y-w)) [y-w]^2 \rmd s \ \rmd t \nonumber
	\end{eqnarray}
	Hence it holds that 
	\begin{eqnarray}
	&& f(y) - f(w)  - \la \na f(w), y-w \ra - \frac{1}{2} \na^2 f(w) [y-w]^2 \nonumber\\
	&=&
	\int_0^1 \int_0^1  t \Big(  \na^2 f(w+st(y-w)) - \na^2 f(w)  \Big) [y-w]^2 \rmd s \ \rmd t \nonumber\\
	&=&
	\int_0^1 \int_0^1  \int_0^1   t^2 s \na^3 f(w+str (y-w)) [y-w]^3 \rmd r \ \rmd s \ \rmd t \nonumber
	\end{eqnarray}
	~
\end{proof}

\begin{lemma}\label{lemma: matrix inequality}
	Let $\bar A\in \R^{n\times n}$ be a symmetric matrix and $\bar B \in \R^{n\times n} $ be a positive definite matrix (and hence invertible). Suppose $\bar A\preceq \bar B$ and $\bar A \succeq -\bar B$, then it holds $\bar A \bar B^{-1} \bar A \preceq \bar B$. 
\end{lemma}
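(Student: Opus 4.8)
The plan is to diagonalize the problem by a congruence transformation that turns $\bar B$ into the identity. Since $\bar B$ is symmetric positive definite, it admits a symmetric positive definite square root $\bar B^{1/2}$ with inverse $\bar B^{-1/2}$. First I would introduce the symmetric matrix $C := \bar B^{-1/2}\bar A\bar B^{-1/2}$. Because congruence by an invertible matrix preserves the Loewner order (that is, $X\preceq Y$ if and only if $S^\T X S\preceq S^\T Y S$ for invertible $S$), applying the congruence $X\mapsto \bar B^{-1/2}X\bar B^{-1/2}$ to the hypotheses $-\bar B\preceq\bar A\preceq\bar B$ yields $-I\preceq C\preceq I$.

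Next I would invoke the spectral theorem: $C$ is symmetric and $-I\preceq C\preceq I$, so every eigenvalue $\lambda$ of $C$ satisfies $|\lambda|\le 1$, hence $\lambda^2\le 1$, and therefore $C^2\preceq I$. This is the only genuinely algebraic step, and it is elementary. Finally I would transform back: observe that
\[
\bar B^{-1/2}\bigl(\bar A\bar B^{-1}\bar A\bigr)\bar B^{-1/2}
= \bigl(\bar B^{-1/2}\bar A\bar B^{-1/2}\bigr)\bigl(\bar B^{-1/2}\bar A\bar B^{-1/2}\bigr)
= C^2 \preceq I,
\]
where the first equality simply inserts $\bar B^{-1/2}\bar B^{-1/2}=\bar B^{-1}$. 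Applying the (order-preserving) congruence $X\mapsto \bar B^{1/2}X\bar B^{1/2}$ to both ends gives $\bar A\bar B^{-1}\bar A\preceq \bar B^{1/2}\bar B^{1/2}=\bar B$, which is the desired conclusion.

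There is essentially no real obstacle here; the two points that deserve a line of justification are that congruence by an invertible matrix preserves $\preceq$, and that a symmetric matrix with spectrum in $[-1,1]$ satisfies $C^2\preceq I$. An equivalent route avoiding square roots is a Schur-complement argument: the hypotheses are equivalent to $\begin{pmatrix}\bar B & \bar A\\ \bar A & \bar B\end{pmatrix}\succeq 0$ (congruate by $\mathrm{diag}(\bar B^{-1/2},\bar B^{-1/2})$ to reduce to $\begin{pmatrix}I & C\\ C & I\end{pmatrix}\succeq 0$, which holds iff $-I\preceq C\preceq I$), and the Schur complement of the top-left block of this $2\times 2$ block matrix is exactly $\bar B-\bar A\bar B^{-1}\bar A\succeq 0$. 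Either formulation produces the result with only routine computation.
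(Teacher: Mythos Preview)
Your proof is correct and follows essentially the same approach as the paper: the paper chooses any nonsingular $P$ with $P^\T \bar B P = I_n$ and argues $(P^\T \bar A P)^2 \preceq I_n$, which is precisely your argument with the specific choice $P = \bar B^{-1/2}$.
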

\begin{proof}
	Let $P$ be a nonsingular matrix such that $P^\T \bar B P = I_n$. By assumption, we have 
	$$-I_n \preceq P^\T \bar A P \preceq I_n ~~
	\Rightarrow ~~~ (P^\T \bar A P)^2 \preceq I_n
	$$
	Therefore,
	$$
	P^\T \bar A \bar B^{-1} \bar A P = P^\T \bar A P \cdot (P^\T \bar BP)^{-1} \cdot  P^\T \bar A P =(P^\T \bar A P)^2 \preceq I_n
	$$
	which is equivalent to
	$$
	\bar A \bar B^{-1} \bar A \preceq P^{-\T} P^{-1} = \bar B .
	$$
	~
\end{proof}

\begin{lemma}\label{lemma: basic3}
    For any $x\in [0,1/5]$, it holds 
    \begin{equation}
    (1-x)^{-2} - 1 \le  3x \quad \text{and} \quad (1-x)^2 -1 \ge -3x.
    \end{equation}
\end{lemma}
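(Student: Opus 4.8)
The statement is a pair of elementary one-variable inequalities on the interval $[0,1/5]$; there is no genuine obstacle, and the constant $1/5$ is chosen only to make a clean sufficient condition available. The plan is a direct verification: expand $(1-x)^2$ for the second inequality, and clear the (positive) denominator for the first, reducing each claim to the nonnegativity of a low-degree polynomial on $[0,1/5]$.

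\textbf{Second inequality.} First I would expand $(1-x)^2 - 1 = x^2 - 2x$. Then the claim $(1-x)^2 - 1 \ge -3x$ is equivalent to $x^2 - 2x + 3x \ge 0$, i.e.\ to $x(x+1)\ge 0$, which holds for every $x\ge 0$ (so in particular on $[0,1/5]$); the upper bound $1/5$ is not needed here.

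\textbf{First inequality.} Since $x \le 1/5 < 1$ we have $(1-x)^2 > 0$, so $(1-x)^{-2} - 1 \le 3x$ is equivalent, after multiplying through by $(1-x)^2$, to $1 \le (1+3x)(1-x)^2$. Expanding the right-hand side, $(1+3x)(1-2x+x^2) = 1 + x - 5x^2 + 3x^3$, so the inequality reduces to $0 \le x - 5x^2 + 3x^3 = x\,(1 - 5x + 3x^2)$. For $x \in [0,1/5]$ we have $x \ge 0$ and $5x \le 1$, hence $1 - 5x + 3x^2 \ge 3x^2 \ge 0$, so the product is nonnegative, which gives the claim.

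\textbf{On the main difficulty.} There is essentially none; the only point worth a moment's care is to multiply by $(1-x)^2$ rather than by $1-x$ (keeping the inequality direction unambiguous since $(1-x)^2>0$ automatically), and to record that $5x\le1$ on $[0,1/5]$ is exactly what makes $1-5x+3x^2\ge0$ hold without solving the quadratic.
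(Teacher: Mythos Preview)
Your proof is correct. The second inequality is handled the same way as in the paper (expand $(1-x)^2-1=x^2-2x$ and use $x\ge 0$). For the first inequality you take a slightly different route: you clear the positive denominator and reduce to the nonnegativity of $x(1-5x+3x^2)$, which follows from $5x\le 1$. The paper instead observes that $g(x)=(1-x)^{-2}-1-3x$ is convex on $[0,1/5]$, checks $g(0)=0$ and $g(1/5)<0$, and concludes $g\le 0$ on the interval by convexity. Both arguments are equally elementary; yours has the small advantage of being purely algebraic and making it transparent that $1/5$ is exactly the threshold at which $1-5x$ stays nonnegative, while the paper's convexity check avoids the polynomial expansion at the cost of evaluating an endpoint.
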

\begin{proof}
    Note that the function $g(x) = (1-x)^{-2} - 1 - 3x $ is convex on $[0,1/5]$, with $g(0) = 0$ and $g(1/5) = 25/16 - 1 - 3/5= 9/16 - 9/15 <0$, so $g(x) \le 0$ for all $x\in[0,1/5] $. For the second inequality, note that $ (1-x)^2 - 1 = x^2 - 2x \ge -2x \ge -3x $ for all $x\ge 0$. 
\end{proof}

\section{Experimental Details}\label{sec:append:experimental-details}

For experiments in Section~\ref{sec:compute-unconstrained}, for Algorithm \ref{algorithm: Self-concordant Cubic-regularized Newton Method},
we use 
$ \ga_k = \rho_k=0.8^k $, and initialize with $ w^0 = (1/M) 1_M$. We set initially $L_0 = 3/\sqrt{2} $, and the enlargement parameter $ \beta=1.5$. 
For the first 10 iterations of Algorithm~\ref{algorithm: Self-concordant Cubic-regularized Newton Method}, we try a shorter step-size $ w^{k+1}  = w^k + 0.5(y^k - w^k)$. If this shorter step is accepted by \eqref{condition3}, then we use this shorter step. Otherwise, we use a full step $w^{k+1} = y^k$. After 10 iterations, 
we use the default full step $w^{k+1} = y^k$ for $k> 10$. 
 We terminate the Away-step Frank-Wolfe (AFW) method for subproblem~\eqref{update-yk} as follows: in the $k$-th outer iteration, let $h^{k,t}$ be the objective value of the subproblem \eqref{update-yk} in the  $t$-th iteration of Algorithm \ref{alg: AFW}. We terminate Algorithm \ref{alg: AFW} at iteration $t$ if $|h^{k,t-1}-h^{k,t}|/\max\{|h^{k,t-1}|,1\}< \text{tol}_{k} $ for some given $\text{tol}_{k}>0$. In particular, we set $\text{tol}_{k}=10^{-8} $ for $k\le 3$, $\text{tol}_{k}=10^{-9}$ for $4\le k\le 9$, and $\text{tol}_{k}=10^{-10} $ for $k\ge 10$. 

For experiments in Section~\ref{sec:compute-constrained}, we adopt the same parameter specifications mentioned above, except that we don't try the shorter steps ($w^{k+1}  = w^k + 0.5(y^k - w^k)$). 

For experiments in Section~\ref{sec:real-data}, we adopt the same parameter specifications as in Section~\ref{sec:compute-unconstrained}, except that we try $5$ iterations of shorter steps instead of $10$.

\end{document}